\def\csname ver@fixltx2e.sty\endcsname{}
\newlength\mylength
\newcolumntype{P}[1]{>{\centering\arraybackslash}p{#1}}
\newtheorem{theorem}{Theorem}
\newtheorem{lemma}{Lemma}
\newtheorem{proposition}[theorem]{Proposition}
\newtheorem{remark}{Remark}
\def\BibTeX{{\rm B\kern-.05em{\sc i\kern-.025em b}\kern-.08em
    T\kern-.1667em\lower.7ex\hbox{E}\kern-.125emX}}
\begin{document}

\title{Multi-target Range, Doppler and Angle estimation in MIMO-FMCW Radar with Limited Measurements}
\author{Chandrashekhar Rai$^\ast$, Himali Singh$^\ast$ and Arpan Chattopadhyay
\thanks{$^\ast$Co-first authors.}
\thanks{All authors are with the Electrical Engineering Department, Indian Institute of Technology (IIT) Delhi, India.  Email: \emph{csrai.cstaff@iitd.ac.in,eez208426@ee.iitd.ac.in,arpanc@ee.iitd.ac.in}.}
\thanks{The conference precursor of this work is \cite{singh2023multi}.}
\thanks{The work was supported via grant no. CRG/2022/003707 from Science and Engineering Research Board (SERB), India, project no. RP04860N from SYSTRA MVA Consulting India Private Ltd.
India, grant no. IFC/7150/2023 from Indo-French Centre for the Promotion of Advanced  Research, grant no. GP/2021/ISSC/022 from I-Hub Foundation for Cobotics, India, and Project No. FT/2024/11/37 from QUALCOMM TECHNOLOGIES, INC, USA. Himali Singh was additionally supported via Prime Minister Research Fellowship.}
}

\maketitle

\begin{abstract}
Multiple-input multiple-output (MIMO) radar offers several performance and flexibility advantages over traditional radar arrays. However, high angular and Doppler resolutions necessitate a large number of antenna elements and the transmission of numerous chirps, leading to increased hardware and computational complexity. While compressive sensing (CS) has recently been applied to pulsed-waveform radars with sparse measurements, its application to frequency-modulated continuous wave (FMCW) radar for target detection remains largely unexplored. In this paper, we propose a novel CS-based multi-target localization algorithm in the range, Doppler, and angular domains for MIMO-FMCW radar, where we jointly estimate targets’ velocities and angles of arrival. To this end, we present a signal model for sparse-random and uniform linear arrays based on three-dimensional spectral estimation. For range estimation, we propose a discrete Fourier transform (DFT)-based focusing and orthogonal matching pursuit (OMP)-based techniques, each with distinct advantages, while two-dimensional CS is used for joint Doppler-angle estimation. Leveraging the properties of structured random matrices, we establish theoretical uniform and non-uniform recovery  guarantees with high probability for the proposed framework. Our numerical experiments demonstrate that our methods achieve similar detection performance and higher resolution compared to conventional DFT and MUSIC with fewer transmitted chirps and antenna elements.
\end{abstract}

\begin{IEEEkeywords}
FMCW radar, MIMO radar, range-Doppler-angle estimation, sparse linear arrays, 2D-compressive sensing.
\end{IEEEkeywords}

\section{Introduction}\label{sec:intro}
Sensing the environment using radar or other sensors is an integral part of many engineering applications. A radar usually localizes a target of interest by estimating its range, velocity, and angle of arrival (AoA). Owing to their portability, low cost, and high range resolution, frequency-modulated continuous wave (FMCW) radars are often preferred in short-range applications like advanced driving assistance systems (ADAS)\cite{sun2020mimo,patole2017automotive}, synthetic aperture radars (SARs)\cite{meta2007signal,gu2017new}, surveillance systems\cite{saponara2017radar,kim2019low}, and human vital sign monitoring\cite{xu2022simultaneous}. FMCW radars transmit a finite number of linear frequency-modulated (LFM) chirps in each coherent processing interval (CPI). The receiver mixes the signal reflected from targets with the transmitted signal, producing a complex sinusoidal intermediate frequency (IF) signal. The IF signal's frequencies then determine the target ranges and velocities. To further localize targets in the angular domain, an array of multiple transmitter and receiver antennas is required. Multiple-input multiple-output (MIMO) radars transmit multiple orthogonal waveforms and jointly process the target returns across multiple receivers, providing additional degrees of freedom and higher resolution compared to conventional phased array radars, but with fewer physical antenna elements.

Traditionally, the discrete Fourier transform (DFT) has been used to estimate the frequencies present in the IF signal\cite{sun2020mimo,feger200977}. However, in this approach, the range and velocity resolutions depend on the bandwidth and CPI. Achieving high range resolution necessitates large bandwidths, while high velocity resolution requires transmitting and processing a large number of chirps. Similarly, array processing theory dictates that achieving a high angular resolution demands a large array aperture with numerous antenna elements to avoid ambiguities\cite{richards2014fundamentals}. Although MIMO technology enhances angular resolution, synthesizing a large virtual array with half-wavelength element spacing can be costly. Consequently, increasing resolution results in greater hardware and computational complexity. In order to mitigate these challenges, subspace-based parameter estimation techniques have been proposed in the literature. Multiple signal classification (MUSIC) has been introduced for range and angle estimation in \cite{manokhin2015music,belfiori20122d}. Estimation of signal parameters via rotational techniques (ESPRIT) is suggested in \cite{lemma2003analysis}, while \cite{kim2015joint} explores a joint DFT-ESPRIT framework. Array interpolation and eigenstructure methods are investigated in \cite{friedlander1996eigenstructure}.

Although subspace-based methods offer high resolutions, they have several drawbacks. These methods typically assume non-coherent sources or require additional smoothing in the coherent case. Additionally, they often necessitate prior knowledge of the number of targets to be estimated. The computational complexity of these methods escalate significantly when estimating parameters in multi-dimensional signals. Recently, compressive sensing (CS) has emerged as an efficient technique for sparse signal recovery with limited measurements\cite{elad2010sparse}. By utilizing a user-defined parameter grid, CS facilitates high-resolution estimation with a small number of measurements and low computational efforts. In this work, we utilize CS techniques in MIMO-FMCW radars for range and joint velocity-angle estimation at high resolutions but using a small number of antenna elements and transmitted chirps. Note that in radar, the sampled IF signal, chirps in each CPI, and array channels correspond to measurements across different domains, known as fast-time, slow-time, and spatial domain samples, respectively. Additionally, we investigate the theoretical recovery guarantees for our joint velocity-angle estimation framework and demonstrate the performance of our proposed method through extensive numerical experiments.

\subsection{Prior Arts}\label{subsec:prior art}
Earlier CS applications in radar systems primarily aimed to enhance resolution while still relying on uniform sampling or full measurement setups. For instance, \cite{baraniuk2007compressive} replaced the matched filter by CS using random projections to reconstruct radar images, thereby lowering ADC demands. Other works, such as \cite{herman2008compressed,herman2009high,chen1989orthogonal}, explored pseudo-random phase-coded transmit waveforms to improve resolution in delay-Doppler and range-angle domains. In \cite{potter2010sparsity}, random frequency-hopping was employed for transmit waveform design in MIMO radar and SAR, while \cite{yoon2008compressed} utilized randomized waveforms for super-resolution imaging. These approaches also utilized randomly selected measurements across time, frequency, or spatial domains. In \cite{anitori2012design}, a CS-based energy detector was applied to the recovered outputs, while \cite{bilik2011spatial} exploited spatial diversity through varying array orientations for high-resolution spectrum estimation, with randomness introduced by the dynamic sensor positioning. More recently, \cite{tohidi2020compressed} proposed a GLRT-based detection method operating directly on compressed measurements acquired through random projections and selective sampling. However, these works acquire measurements at high uniform sampling rates, similar to conventional radars, and process only the selected or linearly projected measurements. While this can reduce computational load and improve resolution, it does not significantly reduce hardware complexity. In contrast, we adopt a fundamentally different approach by using a reduced number of physical antenna elements randomly distributed over the array aperture and randomly transmitting only a limited set of chirps in each CPI. Hence, we achieve high-resolution performance while substantially lowering both computational and hardware requirements. Note that our random chirp setup also differs from the difference co-chirps proposed in \cite{xu2023automotive}, where the sparse chirp set arises from difference co-array structures, such as co-prime and nested chirps, instead of being transmitted randomly.

A key benefit of CS techniques is the reduction in measurement requirements for sparse signals. In radar, the target scene is sparse because only a few targets are typically present, making CS a natural fit for the problem. As a result, CS has been widely adopted to enable sub-Nyquist radars with reduced measurements \cite{bar2014sub,cohen2018summer,mishra2019sub}. In the spatial domain, sparse linear arrays (SLAs), with fewer antenna elements compared to uniform linear arrays (ULAs), have been introduced for both pulsed and continuous-wave radars \cite{feger200977,sun2020sparse,diamantaras2021sparse}. Optimal sparse array design was investigated in \cite{diamantaras2021sparse}, while \cite{feger200977} designed a non-uniform SLA and applied digital beamforming for AOA estimation after interpolating the missing measurements. Conversely, \cite{sun2020sparse} suggested matrix completion methods to reconstruct the corresponding linear array. Unlike the random selection or projection-based approaches discussed earlier, these matrix completion and interpolation methods directly acquire sparse measurements. However, these techniques still complete the missing measurements before estimating target parameters, leading to increased computational complexity at the receiver. Contrarily, our approach estimates target parameters directly from the support of the recovered sparse signal, without requiring complete measurements. This enables not only reduced computational complexity and hardware requirements but also achieves superior resolution compared to conventional methods.

Alternatively, spatial CS enables direct parameter recovery from SLAs\cite{rossi2013spatial,yu2010mimo}. For instance, \cite{rossi2013spatial} focused on pulsed-MIMO radar and estimated AOAs for a specific range-velocity bin using different CS recovery algorithms. Additionally, \cite{liu2015three} incorporated velocity estimation but not range. In \cite{yu2010mimo}, velocities and angles were estimated using measurements from a small number of randomly positioned transmitters and receivers on a circular disc. Besides spatial compression, CS techniques have also been applied in radar systems for interference mitigation \cite{correas2019sparse}, spectrum sharing \cite{cohen2017spectrum}, joint radar-communication systems\cite{ma2021frac}, and multi-target shadowing effect mitigation\cite{cao2021compressed}. Furthermore, \cite{ender2010compressive} discussed the advantages and challenges of applying CS in radar, including clutter cancellation. However, these earlier studies focused primarily on pulsed-wave radars and were limited to estimating at most two target parameters. Conversely, our work develops CS-based techniques for estimating range, velocity, and AOA in MIMO-FMCW radars. To this end, we introduce a joint 2D-CS framework for Doppler-angle recovery, which has not been previously explored in the literature.

\subsection{Our contributions} \label{subsec:contributions}
Preliminary results of this work appeared in our conference publication \cite{singh2023multi}, where we addressed range and angle estimation only, without any theoretical guarantees. In this work, we consider a MIMO-FMCW radar with a random SLA that transmits only a subset of chirps per CPI and present a joint Doppler-angle estimation framework. The reduced number of antenna elements and chirps correspond to spatial and slow-time CS, respectively. To the best of our knowledge, the use of randomly transmitted sparse chirps, their integration with random SLAs for joint Doppler–angle estimation in MIMO-FMCW radars, and the associated recovery guarantees have not been investigated previously. These aspects constitute the key novel contributions of our work, as follows:\\
\textbf{1) Range estimation:} Prior works on CS-based radars have mainly focused on pulsed-wave radars and/or recovering target angles and velocities within a specific range bin. In this paper, we first present a separable mixture model for the IF signal, applicable to both full and sparse measurements. We propose two methods for range estimation: (a) DFT-focusing followed by binary integration\cite{richards2014fundamentals}, and (b) range-orthogonal matching pursuit (Range-OMP). In contrast to the coherently-integrated chirps in the conventional approaches, binary integration enhances detection performance at low SNRs, while the focusing operation\cite{bar2014sub} concentrates all the target returns from the same range in a single DFT-bin irrespective of their velocities and AOAs. However, in both conventional and proposed DFT-based range estimation, resolution is tied to the DFT-defined range bins such that achieving finer resolution requires increased bandwidth and results in higher computational costs. To this end, we propose Range-OMP, which uses a user-defined grid and a greedy OMP algorithm to deliver higher resolution with lower computational complexity while relying on measurements from just a single chirp and array channel; see further Remarks~\ref{remark:coherent binary}-\ref{remark:complexity_range}. Unlike standard OMP, which reconstructs sparse signals from compressed measurements, our Range-OMP leverages full fast-time radar measurements obtained through uniform sampling of the IF signal.\\
\textbf{2) Joint Doppler-angle estimation:} We jointly estimate target velocities and AOAs within each detected range bin using both vectorized and joint 2D-CS techniques. In conventional radar systems, achieving finer velocity and angular resolution typically demands transmitting a large number of chirps per CPI and deploying wide aperture arrays with many physical antenna elements—both of which significantly increase computational burden. Although super-resolution methods like MUSIC can enhance resolution without increasing measurements, they generally incur higher computational costs compared to simpler DFT-based approaches. To overcome these challenges, we adopt sparse random chirps and random SLAs, thereby reducing both slow-time and spatial measurements, and develop a joint CS-based velocity-angle estimation framework. Despite this reduction, our methods achieve performance on par with full-measurement systems, aided by user-defined grids that enable high-resolution estimation of velocity and AOA. Furthermore, the reduced measurements lower computational complexity and allow the radar to scan multiple angular sectors within a single CPI. We further summarize these advantages over conventional methods in Table~\ref{tbl:comparison}.\\
\textbf{3) Recovery guarantees:} CS provides approximate solutions for the sparse recovery problem, with sufficient conditions for high-probability recovery being widely studied. In our work, we examine both uniform and non-uniform guarantees and establish bounds on the number of measurements and recovery errors. Specifically, we show that our random but structured measurement matrix exhibits low coherence and satisfies the isotropy property, provided the transmitted chirps, random antenna elements, and velocity-angle grids meet suitable conditions. Deriving these theoretical guarantees is challenging due to the increased dimensionality of the recovered parameters and the interdependence of rows and columns in the measurement matrix. We also present a practical MIMO-FMCW radar setup that satisfies these conditions.\\
\textbf{4) Comprehensive evaluation:} We consider the detection performance of our CS-based methods, including receiver operator characteristic (ROC) and time complexity, and compare them to the classical-DFT and subspace-based MUSIC methods, which rely on full measurements. Our approach achieves similar detection probabilities as these full-measurement techniques but with only half the number of transmitted chirps and physical antenna elements. Additionally, Range-OMP and CS-based joint Doppler-angle estimation outperform DFT-based methods in terms of accuracy, while offering resolution comparable to MUSIC at a significantly lower computational complexity.

The rest of the paper is organized as follows. The next section introduces the MIMO-FMCW radar's system model generalized for both full and sparse measurements. Section~\ref{sec:methodology} develops the CS-based parameter estimation methods while Section~\ref{sec:guarantees} provides the theoretical recovery guarantees. In Section~\ref{sec:numericals}, we illustrate the performance of the proposed methods through extensive numerical experiments before concluding in Section~\ref{sec:conclusions}.

Throughout the paper, we reserve boldface lowercase and uppercase letters for vectors (column vectors) and matrices, respectively, and $\lbrace a_{i}\rbrace_{i_{1}\leq i\leq i_{2}}$ (or simply $a_{i_{1}\leq i\leq i_{2}}$) denotes a set of elements indexed by an integer $i$. The notations $[\mathbf{A}]_{:,i}$, $[\mathbf{A}]_{i,:}$ and $[\mathbf{A}]_{i,j}$ denote the $i$-th column, $i$-th row and $(i,j)$-th element of matrix $\mathbf{A}$, respectively. The transpose/ Hermitian/conjugate operation is $(\cdot)^{T/H/*}$, expectation/probability is $\mathbb{E}/\mathbb{P}[\cdot]$, the outer product is $\otimes$ and the complement of a set is $(\cdot)^{c}$. The $l_{2/1/0}$ norm of a vector is $\|\cdot\|_{2/1/0}$. The notations $\textrm{supp}$, $\textrm{diag}$, and $\textrm{vec}$ denote support, diagonal matrix, and vectorization, respectively. Also, $\mathbf{I}_{n}$ and $\mathbf{0}$ denote a `$n\times n$' identity matrix and an all-zero matrix, respectively. We represent the circular-normal and real-valued normal distributions as $\mathcal{CN}(\bm{\mu},\mathbf{Q})$ and $\mathcal{N}(\bm{\mu},\mathbf{Q})$ (with mean $\bm{\mu}$ and covariance matrix $\mathbf{Q}$) while $\mathcal{U}[a,b]/\mathcal{U}\{a_{i}\}_{i_{1}\leq i\leq i_{2}}$ represents a uniform/ discrete-uniform distribution over interval $[a,b]$/ set $\{a_{i}\}_{i_{1}\leq i\leq i_{2}}$.

\section{System Model}\label{sec:system model}
Consider a monostatic MIMO radar system, as shown in Fig.~\ref{fig:radar setup}a, consisting of $N_{T}$ transmitters and $N_{R}$ receivers located over a (possibly overlapping) array of apertures $A_{T}$ and $A_{R}$, respectively. We define $A\doteq A_{T}+A_{R}$. The $n$-th transmitter and $m$-th receiver are located at $A\alpha_{n}/2$ and $A\beta_{m}/2$ (along the array length), respectively, where $\alpha_{n}\in[-A_{T}/A,A_{T}/A]$ and $\beta_{m}\in[-A_{R}/A,A_{R}/A]$. For a random SLA, $\{\alpha_{n}\}_{1\leq n\leq N_{T}}$ and $\{\beta_{m}\}_{1\leq m\leq N_{R}}$ are drawn i.i.d. from distributions $\mathcal{P}_{\alpha}$ and $\mathcal{P}_{\beta}$, respectively. In a ULA, $\{\alpha_{n}\}$ and $\{\beta_{m}\}$ correspond to the uniformly-spaced transmitter and receiver locations, which results in a virtual array of half-wavelength element spacing, i.e., the spatial Nyquist sampling rate. Note that our random SLA differs from the index-modulation-based joint radar-communication setup of \cite{ma2021frac}, wherein $\{\alpha_{n}\}$ are selected according to the message communicated while $\{\beta_{m}\}$ correspond to a uniform receiver array.
\begin{figure*}
  \centering
  \includegraphics[width = 0.75\linewidth]{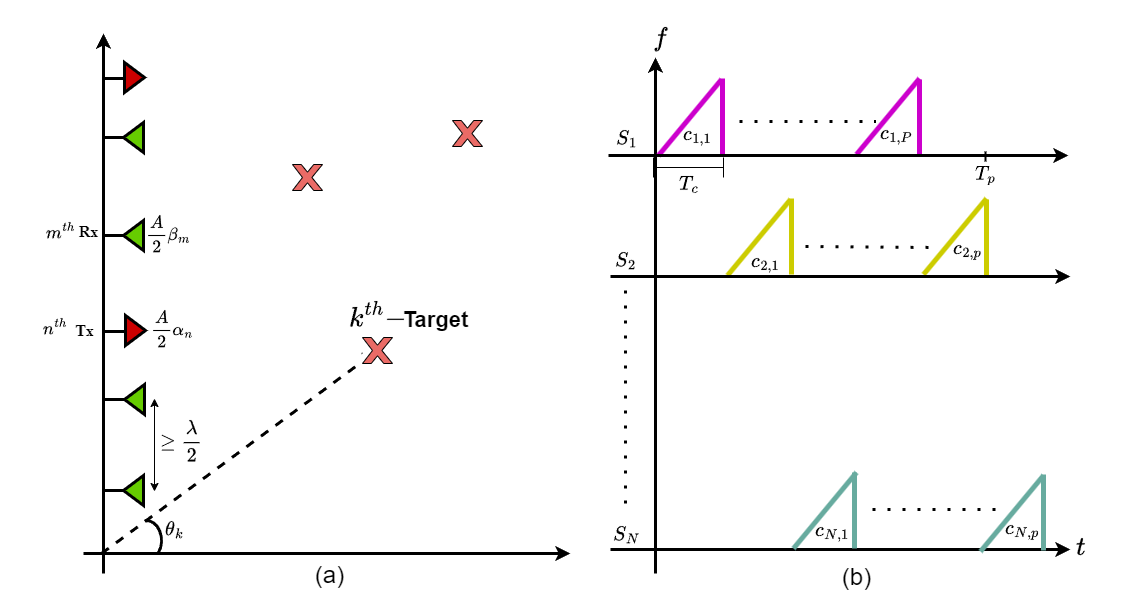}
  \caption{MIMO-FMCW radar setup with sparse chirps and antenna array: (a) random SLA (green and red triangles denote receivers and transmitters, respectively); and (b) Time-frequency illustration of sparse random chirps transmitted to different angular sectors ($S_{j}$ and $c_{j,p}$ denote $j$-th sector and $p$-th chirp for the $j$-th sector, respectively).}
 \label{fig:radar setup}
\end{figure*}

The transmitters transmit orthogonal LFM chirps of carrier frequency $f_{c}$, chirp rate $\gamma$, and chirp duration $T_{c}$. The wavelength is $\lambda=c/f_{c}$, where $c$ represents the speed of light. For simplicity, we consider time-domain orthogonality, i.e., the transmitters transmit the same signal with relative time shifts. Hence, the received signal components corresponding to various transmitters are readily separated at different receivers. Alternatively, orthogonal waveforms for MIMO-FMCW radars have been designed using beat frequency and chirp rate offsets \cite{de2011orthogonal,babur2013nearly}. In \cite{xu2023automotive}, difference co-chirps are proposed for high-accuracy and low complexity range-Doppler estimation in FMCW radars. We consider a CPI of duration $T_{p}=P_{max}T_{c}$ corresponding to $P_{max}$ chirps. The standard radar system with full (slow-time) measurements transmits for the entire CPI. In this work, the radar transmits only $P$ out of $P_{max}$ chirps ($P<P_{max}$) in a specific angular sector of interest. We denote the sparse chirp set by $\mathcal{P}_{s}=\{\zeta_{p}\}_{1\leq p\leq P}$ where $\zeta_{p}$ are randomly drawn distinct integers from $\{0,1,\hdots,P_{max}-1\}$ and distributed as $\mathcal{P}_{p}$. On the other hand, in the standard radar, the set $\{\zeta_{p}\}=\{0,1,\hdots, P_{max}-1\}$ and all chirps are transmitted in the same angular sector. In our sparse radar setup, the time corresponding to $\mathcal{P}_{s}^{c}$, i.e., the non-transmitted chirps, can be utilized in other angular sectors, as depicted in Fig.~\ref{fig:radar setup}b. Throughout the paper, we consider the target's parameter estimation in a single sector only. The received signals from other sectors can be trivially processed in the same manner.

The FMCW radar's LFM chirp transmitted by each transmitter is modeled as
\par\noindent\small
\begin{align*}
    s(t)=\exp{\left(j2\pi\left(f_{c}t+\frac{\gamma}{2}t^{2}\right)\right)},\;\; 0\leq t\leq T_{c}.
\end{align*}
\normalsize
We consider a target scene of $K$ far-field, non-fluctuating point targets with the $k$-th target's range, radial velocity, and AOA denoted by $R_{k}$, $\nu_{k}$, and $\theta_{k}$, respectively. In our proposed framework, the received signal from each transmitted chirp is processed independently at every receiver. Hence, we first focus on the received signal component at the $m$-th receiver corresponding to the $p$-th chirp transmitted from $n$-th transmitter, given by
\par\noindent\small
\begin{align*}
    r_{n,m,p}(t)=\sum_{k=1}^{K}a_{k} s(t-\tau^{k}_{n,m,p}),\;\; \zeta_{p}T_{c}\leq t< (\zeta_{p}+1)T_{c},
\end{align*}
\normalsize
where $a_{k}$ is the complex amplitude proportional to the $k$-th target's radar cross-section (RCS) and $\tau^{k}_{n,m,p}$ is the total delay in the $k$-th target's return. The delay $\tau^{k}_{n,m,p}$ consists of range delay $\tau^{R}_{k}$, Doppler delay $\tau^{D}_{p,k}$ and angular delay $\tau^{\theta}_{n,m,k}$ as
\par\noindent\small
\begin{align}
    \tau^{k}_{n,m,p}=\tau^{R}_{k}+\tau^{D}_{p,k}+\tau^{\theta}_{n,m,k},\label{eqn:delay components}
\end{align}
\normalsize
where $\tau^{R}_{k}=2R_{k}/c$, $\tau^{D}_{p,k}=2\nu_{k}\zeta_{p}T_{c}/c$ and $\tau^{\theta}_{n,m,k}=A\sin{(\theta_{k})}\times(\alpha_{n}+\beta_{m})/2c$. Note that $\tau^{D}_{p,k}$ represents the additional delay (compared to the range delay) in the target's return because of the target's receding motion in $\zeta_{p}T_{c}$ time interval at velocity $\nu_{k}$. Also, the far-field assumption results in a constant AOA across the array.

At the $m$-th receiver, the received signal $r_{n,m,p}(t)$ is mixed with the transmitted chirp $s(t)$ to obtain the IF signal as
\par\noindent\small
\begin{align*}
    &y_{n,m,p}(t)=s(t)r^{*}_{n,m,p}(t)=\sum_{k=1}^{K}a_{k}^{*}\exp{\left(j2\pi\gamma\tau^{k}_{n,m,p}t\right)}\\
    &\;\;\times\exp{\left(-j\pi\gamma(\tau^{k}_{n,m,p})^{2}\right)}\exp{\left(j2\pi f_{c}\tau^{k}_{n,m,p}\right)},
\end{align*}
\normalsize
which is then sampled at sampling frequency $f_{s}$ to yield the (discrete) fast-time measurements
\par\noindent\small
\begin{align}
    y_{n,m,p}[t]&=\sum_{k=1}^{K}a_{k}^{*}\underbrace{\exp{\left(j2\pi\gamma\tau^{k}_{n,m,p}\frac{t}{f_{s}}\right)}}_{\textrm{Term-I}}\nonumber\\
    &\times\underbrace{\exp{\left(-j\pi\gamma(\tau^{k}_{n,m,p})^{2}\right)}}_{\textrm{Term-II}}\underbrace{\exp{\left(j2\pi f_{c}\tau^{k}_{n,m,p}\right)}}_{\textrm{Term-III}}.\label{eqn:discrete IF signal}
\end{align}
\normalsize
We, henceforth, deal with only discrete-time signals and use $t$ to denote the discrete-time index with $0\leq t\leq N-1$ where $N=\lceil f_{s}T_{c}\rceil$. In a MIMO-FMCW radar with $N_{T}$ transmitters, $N_{R}$ receivers and $P$ chirps, we obtain `$N_{T}N_{R}\times P$' sampled measurements $\{y_{n,m,p}[t]\}_{1\leq n\leq N_{T}, 1\leq m\leq N_{R}, 1\leq p\leq P}$.

We now show that the target parameter estimation is, in fact, a 3D frequency estimation problem under suitable approximations for practical systems. \\
\textit{Term-I:} For practical FMCW radars with narrow-band assumption, we have $\tau^{R}_{k}\gg\tau^{D}_{p,k}$ and $\tau^{R}_{k}\gg\tau^{\theta}_{n,m,k}$ for all targets such that $\gamma\tau^{k}_{n,m,p}/f_{s}\approx\gamma\tau^{R}_{k}/f_{s}$. This can also be verified numerically for a target with the radar's parameters provided in Table \ref{tbl:parameters}. We define the normalized beat frequency (due to range) as $\Omega_{R}^{k}\doteq\gamma\tau^{R}_{k}/f_{s}$ and approximate Term-I as $\exp{(j2\pi\Omega_{R}^{k}t)}$. This term is further examined in Section~\ref{subsec:range processing} using DFT-based focusing or OMP-recovery methods to obtain the range estimates.\\
\textit{Term-II:} Again, using $\tau^{R}_{k}\gg\tau^{D}_{p,k}+\tau^{\theta}_{n,m,k}$, we have $(\tau^{k}_{n,m,p})^{2}\approx(\tau^{R}_{k})^{2}$ \cite{winkler2007range}. Hence, Term-II becomes $\exp{(-j\pi\gamma)(\tau^{R}_{k})^{2}}$, a constant for each target independent of chirp and antenna indices. In the 3D-mixture model \eqref{eqn:3D mixture}, we include this term in $\widetilde{a}_{k}$, which is not the focus of the parameter estimation problem addressed in this paper.\\
\textit{Term-III:} Substituting \eqref{eqn:delay components} in Term-III, we obtain
\par\noindent\small
\begin{align*}
    \exp{\left(j2\pi f_{c}\tau^{k}_{n,m,p}\right)}&=\exp{\left(j2\pi f_{c}\tau^{R}_{k}\right)}\exp{\left(j2\pi\frac{2\nu_{k}T_{c}}{\lambda}\zeta_{p}\right)}\\
    &\;\;\times\exp{\left(j2\pi\frac{A\sin(\theta_{k})}{2\lambda}(\alpha_{n}+\beta_{m})\right)}.
\end{align*}
\normalsize
Here, $\exp{\left(j2\pi f_{c}\tau^{R}_{k}\right)}$ does not vary with antenna elements and chirps and hence, included in $\widetilde{a}_{k}$ in \eqref{eqn:3D mixture}. Define the normalized Doppler frequency $\Omega^{k}_{D}\doteq 2\nu_{k}T_{c}/\lambda$ and normalized angular/spatial frequency $\Omega^{k}_{\theta}\doteq A\sin(\theta_{k})/2\lambda$. Term-III is investigated in Section~\ref{subsec:angle-doppler processing} to jointly estimate the target velocities and AOAs.

With these approximations and including the circular-normal noise term $w_{n,m,p}[t]$ to represent the interference and noises present in the radar measurements, \eqref{eqn:discrete IF signal} simplifies to
\par\noindent\small
\begin{align}
    y_{n,m,p}[t]&\approx\sum_{k=1}^{K}\widetilde{a}^{*}_{k}\exp{(j2\pi\Omega^{k}_{R}t)}\exp{(j2\pi\Omega^{k}_{D}\zeta_{p})}\nonumber\\
    &\;\;\times\exp{(j2\pi\Omega^{k}_{\theta}(\alpha_{n}+\beta_{m}))}+w_{n,m,p}[t],\label{eqn:3D mixture}
\end{align}
\normalsize
where $\widetilde{a}_{k}=a_{k}exp(j\pi\gamma(\tau^{R}_{k})^{2})exp(-j2\pi f_{c}\tau^{R}_{k})$. Here, \eqref{eqn:3D mixture} consists of a separable mixture of three different complex exponentials such that the target's range, velocity, and AOA estimation problem is equivalent to 3D frequency estimation.
\begin{remark}[Generalized model]\label{remark:general}
    The model \eqref{eqn:3D mixture} represents a generalized received IF signal for both full and sparse measurement radars. In particular, in the case of SLA and the sparse chirps, $\alpha_{n}$, $\beta_{m}$, and $\zeta_{p}$ are randomly drawn from suitable distributions and represent non-uniform measurements. On the other hand, in conventional radar setups, these quantities correspond to the uniformly placed antenna elements and chirps transmitted over the entire CPI.
\end{remark}

\textbf{Limited measurements:} In a typical radar system estimating range, velocity, and AOA, three distinct types of measurements are used: (a) fast-time samples, i.e., the sampled IF signal, used for range estimation, (b) slow-time samples obtained from multiple chirps within a CPI for velocity estimation, and (c) spatial samples collected across array channels for AOA estimation. In our work, we adopt the conventional fast-time sampling at frequency $f_{s}$, allowing for range estimation through DFT-based focusing method in Section~\ref{subsec:range processing}.1. Both classical-DFT and our DFT-focusing rely on DFT-defined range bins, which influence both resolution and computational cost. In contrast, Section~\ref{subsec:range processing}.2 presents the Range-OMP algorithm, which utilizes the same fast-time data to achieve finer range resolution with reduced complexity. However, unlike full-measurement systems, we limit the number of slow-time and spatial domain samples. Specifically, only $P<P_{max}$ chirps are transmitted per CPI, and our SLA, illustrated in Fig.~\ref{fig:radar setup}a, comprises of $N_{T}<N_{T,f}$ transmitters and $N_{R}<N_{R,f}$ receivers placed across apertures $A_{T}$ and $A_{R}$, respectively. Here, $N_{T,f}$ and $N_{R,f}$ denote the total transmitters and receivers in a full ULA. Traditional approaches like classical-DFT and MUSIC leverage complete measurements from all $P_{max}$ chirps and the full $N_{T,f}\times N_{R,f}$ array channels, and hence, improved resolution comes at the cost of increased measurement and processing overhead. To address this, we employ CS techniques in Section~\ref{subsec:angle-doppler processing} to jointly estimate velocity and AOA with enhanced resolution and significantly lower computational load, despite relying on fewer chirps and antenna elements. As demonstrated in Section~\ref{sec:numericals}, our proposed methods achieve superior estimation accuracy over classical-DFT and MUSIC, even when using only half the number of chirps and array elements compared to the full-measurement radar configuration.

\section{Sparse target-parameter recovery}\label{sec:methodology}
In this section, unlike prior studies, we consider both range and joint Doppler-angle estimation using CS-based methods in Sections~\ref{subsec:range processing} and \ref{subsec:angle-doppler processing}, respectively. As a precursor, Section~\ref{subsec:CS summary} provides a brief overview of the standard sparse signal recovery problem and the CS framework. Finally, Table~\ref{tbl:comparison} highlights the key distinctions between our proposed CS-based methods and the classical-DFT and MUSIC approaches.

    \begin{table*}
    \caption{Proposed CS-based methods compared with classical-DFT and MUSIC\cite{belfiori20122d}}
    \label{tbl:comparison}
    \centering
    \begin{tabular*}{\textwidth}{p{2.2cm}p{5.0cm}p{4.5cm}p{5.0cm}}
    \hline\noalign{\smallskip}
    \textbf{Detail} & \textbf{Classical-DFT} & \textbf{MUSIC} & \textbf{Proposed CS-based methods}\\
    \noalign{\smallskip}
    \hline
    \noalign{\smallskip}
    Radar setup & ULA with $\lambda/2$ element spacing ($N_{T,f}$ transmitters and $N_{R,f}$ receivers); $P_{max}$ chirps transmitted for the entire CPI  & ULA with $\lambda/2$ element spacing ($N_{T,f}$ transmitters and $N_{R,f}$ receivers); $P_{max}$ chirps transmitted for the entire CPI & Random SLA with inter-element spacing $\geq\lambda/2$ ($N_{T}<N_{T,f}$ transmitters and $N_{R}<N_{R,f}$ receivers); $P<P_{max}$ chirps transmitted randomly in a CPI\\
    Range estimation & DFT with coherent integration of all chirps and array channels & Same as classical-DFT & DFT with binary integration across all chirps and array channels, or Range-OMP using only one chirp and array channel\\
    Joint Doppler-angle estimation & 2D-DFT & 2D-MUSIC with spatial smoothing & 2D-OMP or vectorized 1D-OMP/ BP/ LASSO\\
    Range resolution & $c/2\gamma T_{c}$; increases with total bandwidth swept by LFM chirp & Same as classical-DFT & Same as classical-DFT for DFT+binary integration; higher for Range-OMP depending on the choice of range grid $\omega_{1\leq g\leq G_{R}}$\\
    Doppler \& angular resolutions & Doppler resolution ($\lambda/2P_{max}T_{c}$) increases with transmitted chirps; Angular resolution increases with array aperture & Higher than classical-DFT; depends on the array aperture, number of chirps, SNR and search grid density & Higher than classical-DFT; depends on the choice of Doppler grid $\rho_{1\leq g\leq G_{D}}$ and angular grid $\phi_{1\leq g\leq G_{\theta}}$\\
    Computational complexity & $\mathcal{O}(N_{T,f}N_{R,f}P_{max}N\log(N))$ for range estimation; $\mathcal{O}(N_{T,f}N_{R,f}P_{max}\log(N_{T,f}N_{R,f}P_{max}))$ for joint Doppler-angle estimation & Same as classical-DFT for range estimation; $\mathcal{O}(N_{T,f}^{2}N_{R,f}^{2}P_{max}+N_{T,f}^{3}N_{R,f}^{3}+G_{1}G_{2}N_{T,f}^{2}N_{R,f}^{2})$ for joint Doppler-angle estimation ($G_{1}$ \& $G_{2}$ are Doppler and angular search grid sizes, respectively) & Lower than classical-DFT and MUSIC because of smaller number of measurements ($N_{T}<N_{T,f}$, $N_{R}<N_{R,f}$, $P<P_{max}$); see Remark~\ref{remark:complexity_range} and Table~\ref{tbl:complexity}\\
    \noalign{\smallskip}
    \hline\noalign{\smallskip}
    \end{tabular*}
    \end{table*}

\subsection{CS sparse signal recovery}\label{subsec:CS summary}
Consider the linear measurement model
\par\noindent\small
\begin{align}
    \mathbf{y}=\mathbf{A}\mathbf{x}+\mathbf{w},\label{eqn:sparse 1d problem}
\end{align}
\normalsize
where $\mathbf{x}\in\mathbb{C}^{N\times 1}$ is the signal measured under the sensing matrix $\mathbf{A}\in\mathbb{C}^{M\times N}$ to yield measurements $\mathbf{y}\in\mathbb{C}^{M\times 1}$, and $\mathbf{w}\in\mathbb{C}^{M\times 1}$ is the additive noise. CS aims to recover a sparse vector $\mathbf{x}$ given a small number of measurements $\mathbf{y}$ with $M\ll N$ such that \eqref{eqn:sparse 1d problem} is an under-determined system. A vector $\mathbf{x}$ is said to be $K$-sparse if $\|\mathbf{x}\|_{0}\leq K<N$. Theoretically, a $K$-sparse $\mathbf{x}$ can be recovered from \eqref{eqn:sparse 1d problem} for $M\ll N$ by solving the non-convex combinatorial problem
\par\noindent\small
\begin{align}
    \textrm{min}_{\mathbf{x}}\;\;\|\mathbf{x}\|_{0}\;\;\textrm{s.t.}\;\;\|\mathbf{y}-\mathbf{Ax}\|_{2}\leq\epsilon,\label{eqn:CS l0 problem}
\end{align}
\normalsize
where the parameter $\epsilon$ is chosen based on the  statistics of the noise $\mathbf{w}$. However, solving \eqref{eqn:CS l0 problem} is an NP-hard problem and involves an exhaustive search of exponential complexity over all possible combinations of non-zero indices of $\mathbf{x}$ \cite{eldar2012compressed}. Hence, approximate solutions of polynomial complexity have been developed in CS.

CS algorithms can be broadly categorized into matching pursuit (MP) and basis pursuit (BP) methods. In MP or greedy methods, new indices are added recursively to the previous provisional support. OMP\cite{pati1993orthogonal}, orthogonal least squares (OLS)\cite{chen1989orthogonal} and CoSAMP\cite{needell2009cosamp} are widely used MP algorithms. On the other hand, BP relaxes the $l_{0}$-norm in \eqref{eqn:CS l0 problem} to an $l_{1}$-norm\cite{candes2008introduction} resulting in a convex minimization problem for which a globally optimal solution can be obtained in polynomial time. The standard BP and LASSO\cite{tibshirani1996regression} formulations of \eqref{eqn:CS l0 problem} are
\par\noindent\small
\begin{align}
&\textrm{BP:}\;\;\;\textrm{min}_{\mathbf{x}}\;\;\|\mathbf{x}\|_{1}\;\;\textrm{s.t.}\;\;\|\mathbf{y}-\mathbf{Ax}\|_{2}\leq\epsilon,\label{eqn:BP}\\
&\textrm{LASSO:}\;\;\;\textrm{min}_{\mathbf{x}}\;\;\|\mathbf{y}-\mathbf{Ax}\|_{2}\;\;\textrm{s.t.}\;\;\|\mathbf{x}\|_{1}\leq\kappa,\label{eqn:LASSO}
\end{align}
\normalsize
where $\kappa$ is chosen based on the desired sparsity of $\mathbf{x}$.

The measurement model \eqref{eqn:sparse 1d problem} can be generalized to 2D-separable measurements as
\par\noindent\small
\begin{align}
    \mathbf{Y}=\mathbf{A}\mathbf{X}\mathbf{B}^{T}+\mathbf{W},\label{eqn:sparse 2d problem}
\end{align}
\normalsize
where $\mathbf{X}\in\mathbb{C}^{N_{1}\times N_{2}}$ is the sparse input matrix, $\mathbf{A}\in\mathbb{C}^{M_{1}\times N_{1}}$ and $\mathbf{B}\in\mathbb{C}^{M_{2}\times N_{2}}$ are the measurement matrices, $\mathbf{Y}\in\mathbb{C}^{M_{1}\times M_{2}}$ contains  the 2D measurements with $M_{1}M_{2}\ll N_{1}N_{2}$ and $\mathbf{W}\in\mathbb{C}^{M_{1}\times M_{2}}$ is the noise term. One approach to recover $\mathbf{X}$ is to vectorize the model \eqref{eqn:sparse 2d problem} and then apply standard 1D-MP or BP methods. Alternatively, matrix $\mathbf{X}$ can be recovered directly using matrix projections as in 2D-OMP\cite{fang20122d,liu2010fast}. We consider both vectorized and 2D-OMP methods to jointly estimate the target velocities and AOAs in Section~\ref{subsec:angle-doppler processing}.

\subsection{Range estimation}\label{subsec:range processing}
We first estimate the target ranges from measurements \eqref{eqn:3D mixture} using two different methods: DFT-based focusing and OMP. Fig.~\ref{fig:range schematic} illustrates the conventional and proposed range estimation algorithms. The target velocities and AOAs are then jointly estimated for each detected range in the subsequent section.

\textbf{1) DFT-based focusing:} Consider the $N$-point normalized DFT of \eqref{eqn:3D mixture} across the fast time samples (i.e, $t$) for th $p$-th chirp and $(n,m)$-th virtual array channel as
\par\noindent\small
\begin{align}
    Y_{n,m,p}[l]&=\frac{1}{\sqrt{N}}\sum_{t=0}^{N-1}y_{n,m,p}[t]\exp{\left(-j2\pi\frac{lt}{N}\right)}\nonumber
\end{align}
\begin{align}
Y_{n,m,p}[l]&=\sum_{k=1}^{K}\widetilde{a}^{*}_{k}\exp{(j2\pi\Omega^{k}_{D}\zeta_{p})}\exp{(j2\pi\Omega^{k}_{\theta}
    (\alpha_{n}+\beta_{m}))}\nonumber\\
    &\times\frac{1}{\sqrt{N}}\sum_{t=0}^{N-1}\exp{\left(j2\pi\left(\Omega^{k}_{R}-\frac{l}{N}\right)t\right)}+W_{n,m,p}[l],\label{eqn:range DFT prev}
\end{align}
\normalsize
for $0\leq l\leq N-1$ and $W_{n,m,p}[l]=(1/\sqrt{N})\sum_{t=0}^{N-1}w_{n,m,p}[t]\exp{(-j2\pi lt/N)}$.

Now, we can approximate the sum of $M$ exponents $g(x|\overline{x})= \frac{1}{\sqrt{M}}\sum_{q=0}^{M-1}e^{j(x-\overline{x})q\omega}$ for given constants $\overline{x}$ and $\omega$ as
\par\noindent\small
\begin{align}
    |g(x|\overline{x})|\approx\begin{cases}\sqrt{M},&|x-\overline{x}|\leq\pi/M\omega\\0,&|x-\overline{x}|>\pi/M\omega\end{cases}.\label{eqn:focusing approx}
\end{align}
\normalsize
This approximation implies that in the focus zone $|x-\overline{x}|\leq\pi/M\omega$, the $M$ exponents are coherently integrated while the signal outside the focus zone is severely attenuated. This focusing operation was applied across pulses in \cite{bar2014sub} to reduce the joint delay-Doppler estimation problem to delay-only estimation in a pulsed-Doppler radar. Particularly, target returns from different pulses were combined into a single high-SNR pulse via time-shifting and modulation. Applying approximation \eqref{eqn:focusing approx} with windowing restricted the focus to a narrow Doppler band, within which delay estimation became feasible. In contrast, our formulation in \eqref{eqn:range DFT prev} inherently yields a sum of exponentials in the sampled IF signal’s DFT, without any windowing. We now show that applying approximation \eqref{eqn:focusing approx} to \eqref{eqn:range DFT prev} directly concentrates target returns from the same range into a single bin, independent of their Doppler and angular frequencies ($\Omega^{k}_{D}$ and $\Omega^{k}_{\theta}$).

Using the focusing approximation for the sum $\frac{1}{\sqrt{N}}\sum_{t=0}^{N-1}\exp{\left(j2\pi\left(\Omega^{k}_{R}-\frac{l}{N}\right)t\right)}$ in \eqref{eqn:range DFT prev}, we have
\par\noindent\small
\begin{align}
    Y_{n,m,p}[l]&=\sum_{k'=1}^{K'}\widetilde{a}^{*}_{k'}\sqrt{N}\exp{(j2\pi\Omega^{k'}_{D}\zeta_{p})}\exp{(j2\pi\Omega^{k'}_{\theta}
    (\alpha_{n}+\beta_{m}))}\nonumber\\
    &\;\;\;+W_{n,m,p}[l],\label{eqn:range DFT}
\end{align}
\normalsize
where $\{\widetilde{a}_{k'},\Omega^{k'}_{R},\Omega^{k'}_{D},\Omega^{k'}_{\theta}\}_{1\leq k'\leq K'}$ represents the subset of $K'$ target's that satisfy $|\Omega^{k'}_{R}-l/N|\leq 1/2N$, i.e., $K'$ targets are present in $l$-th range bin. Substituting $N=f_{s}T_{c}$ and $\Omega^{k'}_{R}=\gamma\tau^{R}_{k'}/f_{s}$, the condition $|\Omega^{k'}_{R}-l/N|\leq 1/2N$ simplifies to
$|\tau^{R}_{k'}-l/\gamma T_{c}|\leq 1/2\gamma T_{c}$. Furthermore, for practical FMCW radar systems, $1/2\gamma T_{c}$ is small such that
\par\noindent\small
\begin{align}
   \tau^{R}_{k'}\approx l/\gamma T_{c}.\label{eqn:range delay estimate} 
\end{align}
\normalsize
The target returns with range delay \eqref{eqn:range delay estimate} are coherently integrated and result in a (magnitude) peak at the $l$-th DFT bin. We can identify these peaks in $Y_{n,m,p}[l]$ using threshold detection and obtain the range estimate corresponding to $l'$-th detected bin as $R'=cl'/2\gamma T_{c}$.

\textit{Binary integration:} So far, we have considered range estimation from a given $p$-th chirp and $(n,m)$-th virtual array channel. Similarly, the range estimates are computed independently for all $P$ chirps and $N_{T}N_{R}$ channels using the radar measurements $\{y_{n,m,p}[t]\}_{1\leq n\leq N_{T},1\leq m\leq N_{R},1\leq p\leq P}$. In binary integration, these detected ranges are then filtered for false alarms and missed detections across all chirps and channels using a majority rule, i.e., only the ranges detected in the majority of measurements are considered valid. Note that binary integration can also be interpreted as a statistical smoothing technique that mitigates the impact of noise. However, unlike traditional smoothing used in super-resolution methods such as MUSIC, the primary objective here is to suppress false alarms in low-SNR scenarios. In contrast, MUSIC employs smoothing to decorrelate coherent sources and achieve high-resolution estimation at moderate to high SNRs. Moreover, conventional smoothing methods typically involve higher computational complexity compared to binary integration.
\begin{figure}
  \centering
  \includegraphics[width = 1.0\columnwidth]{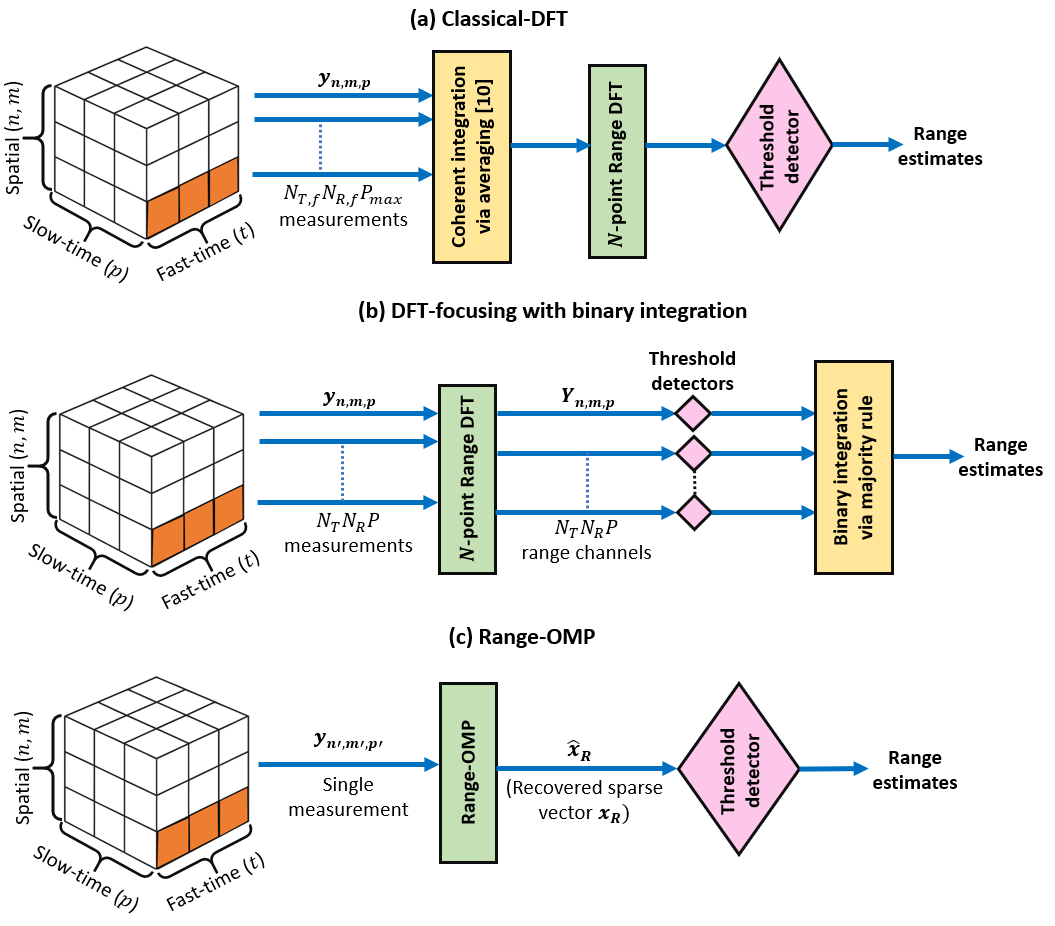}
  \caption{Graphical illustration of range estimation using (a) classical-DFT, (b) DFT-focusing with binary integration, and (c) Range-OMP methods.}
 \label{fig:range schematic}
\end{figure}

\begin{remark}[Coherent and binary integration]\label{remark:coherent binary}
    Classical-DFT range processing also involves peak detection in the DFT of the sampled IF signal. However, in classical processing, the measurements from all chirps and array channels are processed coherently to average out the noise term. At low SNRs, the method fails to detect range bins where the noise is significant. Contrarily, in binary integration, a missed range in one measurement can be detected in other measurements, reducing the missed detection probability, as demonstrated numerically in Section~\ref{subsec:range numerical}. Note that the classical-DFT method utilizes full measurements from an $N_{T,f}\times N_{R,f}$ ULA and $P_{max}$ chirps, whereas our DFT-focusing approach relies on sparse measurements, leading to reduced computational complexity, as detailed in Table~\ref{tbl:comparison}. However, our joint Doppler-angle estimation procedure proposed later can also be trivially applied to coherently processed measurements.
\end{remark}

\textbf{2) Range-OMP:} In \eqref{eqn:range delay estimate}, the range resolution is $c/2\gamma T_{c}$ where $\gamma T_{c}$ is the total bandwidth swept by the LFM chirp. This resolution is the same as the classical-DFT \cite{richards2014fundamentals}, which requires  large bandwidth and heavy computation. To this end, we propose an alternative OMP-based range recovery method.

Consider the measurements $y_{1,1,1}[t]$ from the first chirp and the first virtual array channel. Denote $x_{k}\doteq\widetilde{a}^{*}_{k}\exp{(j2\pi\Omega^{k}_{D}\zeta_{p})}\exp{(j2\pi\Omega^{k}_{\theta}(\alpha_{n}+\beta_{m}))}$ and stack the samples $y_{1,1,1}[t]$ for $0\leq t\leq N-1$ in a $N\times 1$ vector $\mathbf{y}_{R}$. Now, define the $K\times 1$ vector $\widetilde{\mathbf{x}}_{R}=[x_{1},\hdots,x_{K}]^{T}$ such that \eqref{eqn:3D mixture} yields
\par\noindent\small
\begin{align}
   \mathbf{y}_{R}=\widetilde{\mathbf{A}}(\bm{\Omega}_{R})\widetilde{\mathbf{x}}_{R}+\mathbf{w}_{R},\label{eqn:range omp sub}
\end{align}
\normalsize
where the $N\times K$ matrix $\widetilde{\mathbf{A}}(\bm{\Omega}_{R})=[\mathbf{a}(\Omega^{1}_{R}),\hdots,\mathbf{a}(\Omega^{K}_{R})]$ with the $k$-th column $\mathbf{a}(\Omega^{k}_{R})\doteq[1,\exp{(j2\pi\Omega^{k}_{R})},\hdots,\exp{(j2\pi\Omega^{k}_{R}(N-1))}]^{T}$. Here, $\mathbf{w}_{R}$ represents the $N\times 1$ stacked noise vector. Each column $\mathbf{a}(\Omega_{R})$ of matrix $\widetilde{\mathbf{A}}$ is parameterized by $\Omega_{R}$ and is the equivalent steering vector in the (range) beat frequency domain. Furthermore, $\mathbf{y}_{R}$ in \eqref{eqn:range omp sub} represents full measurements in the fast-time domain sampled at frequency $f_{s}$.

We choose a grid of $G_{R}$ points $\{\omega_{g}:1\leq g\leq G_{R}\}$ of the possible target beat frequencies $\Omega_{R}$ (or equivalently ranges) with $G_{R}\gg K$ and negligible discretization errors. Substituting these grid points in $\mathbf{a}(\cdot)$, we construct an over-complete $N\times G_{R}$ measurement matrix $\mathbf{A}=[\mathbf{a}(\omega_{1}),\hdots,\mathbf{a}(\omega_{G_{R}})]$. Then, \eqref{eqn:range omp sub} becomes
\par\noindent\small
\begin{align}   \mathbf{y}_{R}=\mathbf{A}\mathbf{x}_{R}+\mathbf{w}_{R},\label{eqn:range omp}
\end{align}
\normalsize
where the $G_{R}\times 1$ vector $\mathbf{x}_{R}$ contains the target ranges and unknown coefficients $\{x_{k}\}$. In particular, a non-zero element of $\mathbf{x}_{R}$ represents a target present at the range corresponding to the grid point. Since $K\ll G_{R}$, $\mathbf{x}_{R}$ is a sparse vector and range estimation reduces to determining $\textrm{supp}(\mathbf{x}_{R})$ given $\mathbf{y}_{R}$ and $\mathbf{A}$. To this end, we can employ any 1D-OMP or BP recovery algorithms described in Section~\ref{subsec:CS summary}. In particular, we use OMP in Section~\ref{sec:numericals}. Note that our Range-OMP employs the standard OMP algorithm but utilizes full measurements in the radar’s fast-time domain, rather than compressed ones. It is used here to achieve high-resolution range estimation with reduced computational cost compared to conventional DFT-based methods.

\begin{remark}[Range-OMP and DFT-focusing]\label{remark:range dft omp}
Range-OMP in \eqref{eqn:range omp} considers measurements only from one chirp and one array channel. This is in contrast to DFT-focusing, wherein all the measurements are integrated (binary or coherently) to obtain valid target ranges making them computationally expensive. Furthermore, Range-OMP provides superior range resolution compared to DFT-focusing, determined by the choice of grid $\omega_{1\leq g\leq G_{R}}$ and not by parameters $\gamma$ and $T_{c}$. However, the $\textrm{supp}(\mathbf{x}_{R})$'s recovery still depends on the matrix $\mathbf{A}$ such that the recovery probability is low if the grid points are too close. Note that while binary integration can also be employed after Range-OMP to suppress false targets at low SNRs, this step becomes unnecessary in high-SNR scenarios.
\end{remark}

\begin{remark}[Computational complexity]\label{remark:complexity_range}
The computational complexity of DFT-focusing depends on the number of DFT points $N$, the number of virtual array channels $N_{T}N_{R}$, and the number of pulses $P$ per CPI. Specifically, computing the complex DFT has a complexity of  $\mathcal{O}(N_{T}N_{R}PN\log(N))$\cite{cooley1965algorithm}, followed by binary integration with complexity $\mathcal{O}(N_{T}N_{R}PN)$\cite{richards2014fundamentals}. Although this matches the complexity order of conventional DFT-based range estimation using full measurements (Table~\ref{tbl:comparison}), our DFT-focusing approach incurs a lower computational burden due to the significantly reduced number of measurements $N_{T}N_{R}P$, enabled by the random SLA and sparse chirps. In contrast, our Range-OMP uses measurements from only a single channel and chirp, and follows the standard OMP framework. Its complexity is determined by the number of targets $K$, the size of the range grid $G_{R}$, and the number of DFT points $N$, resulting in a complexity of $\mathcal{O}(KNG_{R})$\cite{tropp2007signal}. Moreover, using a finer range grid (larger $G_{R}$) enables better range resolution while incurring lower computational cost than DFT-based methods, as the number of targets $K$ is much smaller than the number of measurements $N_{T}N_{R}P$.
\end{remark}

\subsection{Joint Doppler-angle estimation}\label{subsec:angle-doppler processing}
Consider a range detected at the $l'$-th DFT bin while the corresponding DFT coefficients $Y_{n,m,p}[l']$ are given by \eqref{eqn:range DFT} for $1\leq p\leq P$, $1\leq n\leq N_{T}$ and $1\leq m\leq N_{R}$. The following joint Doppler-angle estimation method processes DFT measurements given by \eqref{eqn:range DFT} at a particular range bin. In the Range-OMP case, the target ranges are estimated at a higher resolution grid $\omega_{1\leq g\leq G_{R}}$, but then converted to the associated DFT bin and the measurements $Y_{n,m,p}[l']$ are computed. At the $l'$-th range bin, we assume that $K'$ targets are present such that
\par\noindent\small
\begin{align}
    Y_{n,m,p}[l']&=\sum_{k'=1}^{K'}\widetilde{a}^{*}_{k'}\sqrt{N}\exp{(j2\pi\Omega^{k'}_{D}\zeta_{p})}\exp{(j2\pi\Omega^{k'}_{\theta}
    (\alpha_{n}+\beta_{m}))}\nonumber\\
    &\;\;\;+W_{n,m,p}[l'].\label{eqn:angle-Doppler measurments}
\end{align}
\normalsize
Note that the exponential terms with the Doppler and angular frequencies are separable in $Y_{n,m,p}[l']$.

We stack the measurements $Y_{n,m,p}[l']$ for all $1\leq n\leq N_{T}$ and $1\leq m\leq N_{R}$ in an $N_{T}N_{R}\times 1$ vector $\mathbf{y}_{p}$ for the $p$-th chirp. Further, define the $N_{T}N_{R}\times P$ matrix $\mathbf{Y}=[\mathbf{y}_{1},\hdots,\mathbf{y}_{P}]$ and the steering vectors in the Doppler and angular frequency domains as $\mathbf{b}(\Omega_{D})$ and $\mathbf{c}(\Omega_{\theta})$, respectively, given by
\par\noindent\small
\begin{align}
    &\mathbf{b}(\Omega_{D})\doteq[\exp{(j2\pi\Omega_{D}\zeta_{1})},\hdots,\exp{(j2\pi\Omega_{D}\zeta_{P})}]^{T},\label{eqn:doppler steering vector}\\
    &\mathbf{c}(\Omega_{\theta})\doteq[\exp{(j2\pi\Omega_{\theta}(\alpha_{1}+\beta_{1}))},\hdots,\exp{(j2\pi\Omega_{\theta}(\alpha_{N_{T}}+\beta_{N_{R}}))}]^{T}.\label{eqn:angle steering vector}
\end{align}
\normalsize
Similarly, we define the $K'\times K'$ matrix $\widetilde{\mathbf{Z}}\doteq\textrm{diag}(z_{1},z_{2},\hdots,z_{K'})$ where $z_{k'}=\widetilde{a}^{*}_{k'}\sqrt{N}$. Then, using \eqref{eqn:angle-Doppler measurments}, we obtain
\par\noindent\small
\begin{align}
    \mathbf{Y}=\widetilde{\mathbf{C}}(\bm{\Omega}_{\theta})\widetilde{\mathbf{Z}}\widetilde{\mathbf{B}}^{T}(\bm{\Omega}_{D})+\mathbf{W},\label{eqn:angle-Doppler 2D-CS sub}
\end{align}
\normalsize
where the $N_{T}N_{R}\times K'$ matrix $\widetilde{\mathbf{C}}(\bm{\Omega}_{\theta})=[\mathbf{c}(\Omega^{1}_{\theta}),\hdots,\mathbf{c}(\Omega^{K'}_{\theta})]$ and the $P\times K'$ matrix $\widetilde{\mathbf{B}}(\bm{\Omega}_{D})=[\mathbf{b}(\Omega^{1}_{D}),\hdots,\mathbf{b}(\Omega^{K'}_{D})]$. Again, $\mathbf{W}$ represents the $N_{T}N_{R}\times P$ stacked noise matrix.

Our goal is to recover $\bm{\Omega_{\theta}}=\{\Omega^{1}_{\theta},\hdots,\Omega^{K'}_{\theta}\}$ and $\bm{\Omega_{D}}=\{\Omega^{1}_{D},\hdots,\Omega^{K'}_{D}\}$ from $\mathbf{Y}$ with a small number of transmitted chirps and antenna elements. To this end, we exploit the sparseness of the target scene. We choose grids of $G_{D}$ points $\rho_{1\leq g\leq G_{D}}$ and $G_{\theta}$ points $\phi_{1\leq g\leq G_{\theta}}$ of the possible target Doppler and angular frequencies, respectively, with both $G_{D},G_{\theta}\gg K$ and negligible discretization errors. Note that from $\Omega_{D}=2\nu T_{c}/\lambda$ and $\Omega_{\theta}=A\sin{(\theta)}/2\lambda$, the grids $\rho_{1\leq g\leq G_{D}}$ and $\phi_{1\leq g\leq G_{\theta}}$ are equivalently defined on possible target velocities and AOAs, respectively. Substituting these grid points in $\mathbf{b}(\cdot)$ and $\mathbf{c}(\cdot)$, we construct the $P\times G_{D}$ measurement matrix $\mathbf{B}=[\mathbf{b}(\rho_{1}),\hdots,\mathbf{b}(\rho_{G_{D}})]$ and the $N_{T}N_{R}\times G_{\theta}$ measurement matrix $\mathbf{C}=[\mathbf{c}(\phi_{1}),\hdots,\mathbf{c}(\phi_{G_{\theta}})]$. Finally, \eqref{eqn:angle-Doppler 2D-CS sub} can be expressed as
\par\noindent\small
\begin{align}
    \mathbf{Y}=\mathbf{C}\mathbf{Z}\mathbf{B}^{T}+\mathbf{W},\label{eqn:angle-Doppler 2D}
\end{align}
\normalsize
where the unknown $G_{\theta}\times G_{D}$ matrix $\mathbf{Z}$ contains the coefficients $\{z_{k'}\}$ as well as the target velocities and AOAs information. Particularly, a non-zero element in $\mathbf{Z}$ represents a target with its AOA and velocity equal to that of the corresponding point in the 2D grid $\{(\phi_{i_{1}},\rho_{i_{2}}):1\leq i_{1}\leq G_{\theta},1\leq i_{2}\leq G_{D}\}$. Since $K'\ll G_{D}G_{\theta}$, the matrix $\mathbf{Z}$ is sparse, and hence, the desired velocities and AOAs can be recovered using small number of measurements $\mathbf{Y}$ given matrices $\mathbf{C}$ and $\mathbf{B}$. Note that the joint Doppler-angle estimation has now been reduced to determining $\textrm{supp}(\mathbf{Z})$. In \eqref{eqn:angle-Doppler 2D}, the measurement matrices $\mathbf{C}$ and $\mathbf{B}$ and hence, the recovery guarantees, depend on the choice of girds $\rho_{1\leq g\leq G_{D}}$ and $\phi_{1\leq g\leq G_{\theta}}$ as well as the number and (random) locations of the transmitted chirps ($\zeta_{1\leq p\leq P}$) and antenna elements ($\alpha_{1\leq n\leq N_{T}}$ and $\beta_{1\leq m\leq N_{R}}$).
    \begin{table}
    \caption{Computational complexity of joint Doppler-angle estimation ($N_{\text{int}}$ is the number of iterations)}
    \label{tbl:complexity}
    \centering
    \begin{tabular}{p{1.6cm}p{3.2cm}p{2.5cm}}
    \hline\noalign{\smallskip}
    \textbf{Method} & \textbf{Complexity} & \textbf{Core Operation}\\
    \noalign{\smallskip}
    \hline
    \noalign{\smallskip}
    Vectorized 1D-OMP & $\mathcal{O}(K N_{T} N_{R} P G_{\theta} G_{D})$ & Projection and least-squares\\
    Vectorized BP & $\mathcal{O}(G_{\theta}^{3} G_{D}^{3})$ & Linear programming (e.g., interior point)\\
    Vectorized LASSO & $\mathcal{O}(N_{\text{int}} N_{T} N_{R} P G_{\theta} G_{D})$ & Convex optimization (e.g., ISTA, FISTA)\\
    2D-OMP & $\mathcal{O}(K (N_{T} N_{R} + P) G_{\theta} G_{D})$ & Matrix projection and least-squares\\
    \noalign{\smallskip}
    \hline\noalign{\smallskip}
    \end{tabular}
    \end{table}

Recall that the 2D sparse recovery problem \eqref{eqn:angle-Doppler 2D} can be solved using either 1D vectorized CS or direct 2D-OMP methods. For the vectorized method, we denote $\mathbf{y}=\textrm{vec}(\mathbf{Y})$, $\mathbf{z}=\textrm{vec}(\mathbf{Z})$, $\mathbf{w}=\textrm{vec}(\mathbf{W})$ and $\mathbf{D}=\mathbf{B}\otimes\mathbf{C}$. The equivalent 1D sparse recovery model of \eqref{eqn:angle-Doppler 2D} is then
\par\noindent\small
\begin{align}
    \mathbf{y}=\mathbf{D}\mathbf{z}+\mathbf{w},\label{eqn:angle-Doppler 1D vectorized}
\end{align}
\normalsize
which is same as \eqref{eqn:sparse 1d problem} with $\mathbf{A}$ and $\mathbf{x}$ replaced by $\mathbf{D}$ and $\mathbf{z}$, respectively. In Section~\ref{sec:guarantees}, we consider the $l_{1}$-minimization-based solution of \eqref{eqn:angle-Doppler 1D vectorized} to derive the theoretical recovery guarantees while Section~\ref{sec:numericals} numerically compares 2D-OMP and vectorized 1D-OMP, BP, and LASSO methods. Notably, with fixed Doppler and angular grids, these methods exhibit similar detection performance and estimation accuracy. However, their computational complexities vary, as summarized in Table~\ref{tbl:complexity}. The complexity of all approaches primarily depends on the number of grid points $G_{D}$ and $G_{\theta}$, the number of virtual array channels $N_{T}N_{R}$, the number of chirps $P$, and the number of targets $K$, but the vectorized BP method incurs the highest computational cost. This is because $G_{D}G_{\theta}$ is typically set to a large value to achieve fine Doppler and angular resolution. In contrast, 2D-OMP achieves a computational advantage through direct matrix operations, offering a speedup factor of $N_{T}N_{R}P/(N_{T}N_{R}+P)$ over vectorized 1D-OMP\cite{fang20122d}.

\section{Recovery guarantees}\label{sec:guarantees}
As mentioned in Section~\ref{subsec:CS summary}, the $l_{1}$-minimization formulations are relaxations of the original CS recovery problem. Hence, the solutions obtained from \eqref{eqn:BP} or \eqref{eqn:LASSO} can be different from that of \eqref{eqn:CS l0 problem}. To this end, sufficient conditions to guarantee correct recovery through $l_1$ minimization methods are extensively studied\cite{foucart2013invitation,rauhut2010compressive,eldar2012compressed,hugel2014remote}. In this context, two kinds of recovery guarantees are defined:\\
\textbf{1) Uniform recovery:} Uniform recovery means that for a fixed instantiation of the random measurement matrix $\mathbf{D}$, all possible $K$-sparse vectors $\mathbf{z}$ are recovered with high probability. Uniform recovery is guaranteed if $\mathbf{D}$ satisfies the restricted isometry property (RIP) with high probability \cite{eldar2012compressed}. Furthermore, if the mutual coherence $\mu$ of matrix $\mathbf{D}$ is small, then $\mathbf{D}$ satisfies RIP where
\par\noindent\small
\begin{align}
    \mu\doteq\textrm{max}_{i\neq l}\frac{|\mathbf{D}_{:,i}^{H}\mathbf{D}_{:,l}|}{\|\mathbf{D}_{:,i}\|_{2}\|\mathbf{D}_{:,l}\|_{2}}\label{eqn:coherence}
\end{align}
\normalsize
\textbf{2) Non-uniform recovery:} Consider a given $K$-sparse vector $\mathbf{z}_{s}$ and a measurement matrix $\mathbf{D}$ drawn at random independent of $\mathbf{z}_{s}$. Non-uniform recovery guarantees provide conditions under which $\mathbf{z}_{s}$ is recovered with high probability. Non-uniform recovery follows if $\mathbf{D}$ satisfies isotropy property with high probability \cite{candes2011probabilistic}, i.e., the components of each row of $\mathbf{D}$ have unit variance and are uncorrelated such that for every $i$,
\par\noindent\small
\begin{align}
    \mathbb{E}[\mathbf{D}_{i,:}^{H}\mathbf{D}_{i,:}]=\mathbf{I}.\label{eqn:isotropy}
\end{align}
\normalsize
Note that uniform recovery implies non-uniform recovery, but the converse is not true.

In the following, we consider our sparse chirps and SLA radar setup, described in Section~\ref{sec:system model}, and derive sufficient conditions for both uniform and non-uniform recovery for the vectorized joint Doppler-angle estimation problem \eqref{eqn:angle-Doppler 1D vectorized}. Particularly, we provide sufficient conditions on distributions ($\mathcal{P}_{p}$, $\mathcal{P}_{\alpha}$ and $\mathcal{P}_{\beta}$), number of chirps and antenna elements ($P$, $N_{T}$ and $N_{R}$), and grids ($\rho_{1\leq g\leq G_{D}}$ and $\phi_{1\leq g\leq G_{\theta}}$) such that the RIP and isotropy properties are satisfied with high probability. Section~\ref{subsec:example} then discusses practical distribution and grid choices that satisfy these conditions.

Consider the vectorized model \eqref{eqn:angle-Doppler 1D vectorized}. Since $\mathbf{D}=\mathbf{B} \otimes \mathbf{C}$, any column of $\mathbf{D}$ is represented as $\mathbf{D}_{:,G_{\theta}(i_{1}-1)+i_{2}}=\mathbf{B}_{:,i_{1}}\otimes\mathbf{C}_{:,i_{2}}$ where $1\leq i_{1}\leq G_{D}$ and $1\leq i_{2}\leq G_{\theta}$. Similarly, any row of $\mathbf{D}$ is represented by $\mathbf{D}_{N_{T}N_{R}(i_{1}-1)+i_{2},:}=\mathbf{B}_{i_{1},:}\otimes\mathbf{C}_{i_{2},:}$ where $1\leq i_{1}\leq P$ and $1\leq i_{2}\leq N_{T}N_{R}$. Further, consider a Doppler grid point $\rho_{i}$. Substituting $\Omega_{D}=2\nu T_{c}/\lambda$ in \eqref{eqn:doppler steering vector}, we obtain the $i$-th column of matrix $\mathbf{B}$ corresponding to $\rho_{i}$ as $\mathbf{b}(\rho_{i})=[\exp{(j\frac{4\pi T_{c}}{\lambda}\rho_{i}\zeta_{1})},\hdots,\exp{(j\frac{4\pi T_{c}}{\lambda}\rho_{i}\zeta_{P})}]^{T}$. Similarly, substituting $\Omega_{\theta}=A\sin{\theta}/2\lambda$ in \eqref{eqn:angle steering vector}, we obtain the $i$-th column of matrix $\mathbf{C}$ corresponding to angular grid point $\phi_{i}$ as \small$\mathbf{c}(\phi_{i})=[\exp{(j \frac{\pi A}{\lambda}\sin{\phi_{i}}(\alpha_{1}+\beta_{1}))},\hdots,\exp{(j \frac{\pi A}{\lambda}\sin{\phi_{i}}(\alpha_{N_{T}}+\beta_{N_{R}}))}]^{T}$.\normalsize\\ Define $\mathbf{Q}_{D}\doteq\mathbf{D}^{H}\mathbf{D}$, $\mathbf{Q}_{B}\doteq\mathbf{B}^{H}\mathbf{B}$ and $\mathbf{Q}_{C}\doteq\mathbf{C}^{H}\mathbf{C}$. From \eqref{eqn:coherence} and \eqref{eqn:isotropy}, we observe that the RIP and isotropy conditions for measurement matrix $\mathbf{D}$ are closely related to the statistics of matrix $\mathbf{Q}_{D}$. In particular, coherence $\mu$ is the maximum absolute value among normalized off-diagonal elements of $\mathbf{Q}_{D}$ while $\mathbb{E}[\mathbf{D}_{l,:}^{H}\mathbf{D}_{l,:}]=(1/PN_{T}N_{R})\mathbb{E}[\mathbf{Q}_{D}]$\footnote{trivially proved taking element-wise expectations and comparing matrix elements on both sides.} such that \eqref{eqn:isotropy} implies $\mathbb{E}[\mathbf{Q}_{D}]=PN_{T}N_{R} \mathbf{I}$. Also, we can trivially show that $\mathbf{Q}_{D}=\mathbf{Q}_{B}\otimes\mathbf{Q}_{C}$. For simplicity, denote $u^{D}_{i_{1},i_{2}}=\frac{4\pi T_{c}}{\lambda}(\rho_{i_{2}}-\rho_{i_{1}})$ and $u^{\theta}_{i_{1},i_{2}}=\frac{\pi A}{\lambda}(\sin{\phi_{i_{2}}}-\sin{\phi_{i_{1}}})$.

We define
\par\noindent\small
\begin{align}
\Gamma_{B}(u^{D}_{i_{1},i_{2}})&\doteq\frac{\mathbf{B}_{:,i_{1}}^{H}\mathbf{B}_{:,i_{2}}}{\|\mathbf{B}_{:,i_{1}}\|_{2}\|\mathbf{B}_{:,i_{2}}\|_{2}}=\frac{1}{P}\sum_{p=1}^{P}\exp{(ju^{D}_{i_{1},i_{2}}\zeta_{p})},\label{eqn:gamma b}\\
\Gamma_{C}(u^{\theta}_{i_{1},i_{2}})&\doteq\frac{\mathbf{C}_{:,i_{1}}^{H}\mathbf{C}_{:,i_{2}}}{\|\mathbf{C}_{:,i_{1}}\|_{2}\|\mathbf{C}_{:,i_{2}}\|_{2}}\nonumber\\
&=\frac{1}{N_{T}N_{R}}\sum_{n=1}^{N_{T}}\sum_{m=1}^{N_{R}}\exp{(ju^{\theta}_{i_{1},i_{2}}(\alpha_{n}+\beta_{m}))}\label{eqn:gamma c}.
\end{align}
\normalsize
Note that both $\Gamma_{B}(\cdot)$ and $\Gamma_{C}(\cdot)$ are random variables because of the randomly drawn transmitter and receiver locations (via $\alpha_{n}$ and $\beta_{m}$) and the randomly transmitted chirps (via $\zeta_{p}$).

\subsection{Uniform recovery}\label{subsec:uniform}
Consider the coherence of matrix $\mathbf{D}$ given as
\par\noindent\small
    \begin{align*}
    &\mu=\textrm{max}_{l_{1}\neq l_{2}}\frac{|\mathbf{D}_{:,l_{1}}^{H}\mathbf{D}_{:,l_{2}}|}{\|\mathbf{D}_{:,l_{1}}\|_{2}\|\mathbf{D}_{:,l_{2}}\|_{2}}\\    &=\textrm{max}_{(i_{1},j_{1})\neq(i_{2},j_{2})}\frac{|\mathbf{D}_{:,G_{\theta}(i_{1}-1)+j_{1}}^{H}\mathbf{D}_{:,G_{\theta}(i_{2}-1)+j_{2}}|}{\|\mathbf{D}_{:,G_{\theta}(i_{1}-1)+j_{1}}\|_{2}\|\mathbf{D}_{:,G_{\theta}(i_{2}-1)+j_{2}}\|_{2}}.
    \end{align*}
\normalsize
Substituting $\mathbf{D}_{:,G_{\theta}(i-1)+j}=\mathbf{B}_{:,i}\otimes\mathbf{C}_{:,j}$, we obtain
\par\noindent\small
    \begin{align*}
    \mu&=\textrm{max}_{\{(i_{1},j_{1})\neq(i_{2},j_{2})\}} \frac{|\mathbf{B}_{:,i_1}^{H}\mathbf{B}_{:,i_2}|}{\|\mathbf{B}_{:,i_1}\|_{2}\|\mathbf{B}_{:,i_2}\|_{2}}\times\frac{|\mathbf{C}_{:,j_1}^{H}\mathbf{C}_{:,j_2}|}{\|\mathbf{C}_{:,j_1}\|_{2}\|\mathbf{C}_{:,j_2}\|_{2}},\\
    &=\textrm{max}_{\{(i_{1},j_{1})\neq(i_{2},j_{2})} |\Gamma_{B}(u^{D}_{i_{1},i_{2}})|\times |\Gamma_{C}(u^{\theta}_{j_{1},j_{2}})|
    \end{align*}
\normalsize
Now, $\textrm{max}_{(i_{1},j_{1})\neq(i_{2},j_{2})}(\cdot)=\textrm{max}_{\{i_{1}\neq i_{2}\}\cup\{j_{1}\neq j_{2}\}}(\cdot)=\textrm{max}\{\textrm{max}_{i_{1}\neq i_{2}}(\cdot),\textrm{max}_{j_{1}\neq j_{2}}(\cdot)\}$. Furthermore, $|\Gamma_{C}(u^{\theta}_{j_{1},j_{2}})|\leq 1$ such that $\textrm{max}_{i_{1}\neq i_{2}}|\Gamma_{B}(u^{D}_{i_{1},i_{2}})|\times |\Gamma_{C}(u^{\theta}_{j_{1},j_{2}})|\leq\textrm{max}_{i_{1}\neq i_{2}}|\Gamma_{B}(u^{D}_{i_{1},i_{2}})|$. Similarly, $\textrm{max}_{j_{1}\neq j_{2}}|\Gamma_{B}(u^{D}_{i_{1},i_{2}})|\times |\Gamma_{C}(u^{\theta}_{j_{1},j_{2}})|\leq\textrm{max}_{j_{1}\neq j_{2}}|\Gamma_{C}(u^{\theta}_{j_{1},j_{2}})|$. Hence,
\par\noindent\small
    \begin{align}
    \mu\leq\textrm{max}\{\textrm{max}_{i_{1}\neq i_{2}}|\Gamma_{B}(u^{D}_{i_{1},i_{2}})|,\textrm{max}_{j_{1}\neq j_{2}}|\Gamma_{C}(u^{\theta}_{j_{1},j_{2}})|\}\label{eqn:mu expansion}
    \end{align}
\normalsize
\begin{lemma}\label{lemma:toeplitz}
    If $\rho_{1:G_{D}}$ is a uniform grid of Doppler frequencies, then matrix $\mathbf{Q}_{B}=\mathbf{B}^{H}\mathbf{B}$ is a Toeplitz matrix with all main diagonal elements equal to $P$. Similarly, if $\phi_{1:G_{\theta}}$ is an angular grid uniform in $\sin{\theta}$ domain, then $\mathbf{Q}_{C}=\mathbf{C}^{H}\mathbf{C}$ is a Toeplitz matrix with all main diagonal elements equal to $N_{T}N_{R}$.
\end{lemma}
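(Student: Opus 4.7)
The plan is a direct verification based on the explicit column structures of $\mathbf{B}$ and $\mathbf{C}$ given in the excerpt, so no deep machinery is needed; the only subtle point is pinning down exactly which notion of uniform grid is required so the inner products depend only on an index difference.

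First I would compute a generic entry of $\mathbf{Q}_{B}$. From the formula $\mathbf{b}(\rho_{i})=\bigl[\exp\!\bigl(j\tfrac{4\pi T_{c}}{\lambda}\rho_{i}\zeta_{1}\bigr),\ldots,\exp\!\bigl(j\tfrac{4\pi T_{c}}{\lambda}\rho_{i}\zeta_{P}\bigr)\bigr]^{T}$ I get $[\mathbf{Q}_{B}]_{i_{1},i_{2}}=\mathbf{b}(\rho_{i_{1}})^{H}\mathbf{b}(\rho_{i_{2}})=\sum_{p=1}^{P}\exp\!\bigl(j\tfrac{4\pi T_{c}}{\lambda}(\rho_{i_{2}}-\rho_{i_{1}})\zeta_{p}\bigr)=P\,\Gamma_{B}(u^{D}_{i_{1},i_{2}})$. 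Setting $i_{1}=i_{2}$ forces $u^{D}_{i_{1},i_{1}}=0$, every exponential collapses to $1$, and the diagonal entry equals $P$, giving the main-diagonal claim immediately.

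For the Toeplitz property I would then invoke the uniform Doppler grid assumption, i.e.\ $\rho_{i}=\rho_{0}+(i-1)\Delta_{D}$ for some spacing $\Delta_{D}$. Then $\rho_{i_{2}}-\rho_{i_{1}}=(i_{2}-i_{1})\Delta_{D}$, so $u^{D}_{i_{1},i_{2}}$, and therefore $[\mathbf{Q}_{B}]_{i_{1},i_{2}}$, depends on $(i_{1},i_{2})$ only through the difference $i_{2}-i_{1}$. This is exactly the definition of a Toeplitz matrix.

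The argument for $\mathbf{Q}_{C}$ is structurally identical. Using $\mathbf{c}(\phi_{j})$ as given, $[\mathbf{Q}_{C}]_{j_{1},j_{2}}=\sum_{n=1}^{N_{T}}\sum_{m=1}^{N_{R}}\exp\!\bigl(j\tfrac{\pi A}{\lambda}(\sin\phi_{j_{2}}-\sin\phi_{j_{1}})(\alpha_{n}+\beta_{m})\bigr)=N_{T}N_{R}\,\Gamma_{C}(u^{\theta}_{j_{1},j_{2}})$. Diagonal entries correspond to $u^{\theta}_{j,j}=0$, again yielding $N_{T}N_{R}$. Here the crucial structural hypothesis is that the grid is uniform \emph{in the $\sin\theta$ domain} (not in $\theta$ itself); writing $\sin\phi_{j}=s_{0}+(j-1)\Delta_{\theta}$, the difference $\sin\phi_{j_{2}}-\sin\phi_{j_{1}}=(j_{2}-j_{1})\Delta_{\theta}$ depends only on $j_{2}-j_{1}$, and the Toeplitz property follows.

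There is no real obstacle; the main thing to be careful about is stating explicitly why uniformity in the $\sin\theta$ grid (rather than in $\theta$) is the correct hypothesis, since it is the quantity $\sin\phi_{j_{2}}-\sin\phi_{j_{1}}$, not $\phi_{j_{2}}-\phi_{j_{1}}$, that appears in the exponent of $\mathbf{c}(\phi_{j})$. Once that is observed, the rest is a one-line index-difference computation, and the lemma holds deterministically for any realization of the random quantities $\{\zeta_{p}\}$, $\{\alpha_{n}\}$, $\{\beta_{m}\}$.
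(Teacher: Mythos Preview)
Your proposal is correct and follows essentially the same approach as the paper's proof: compute the $(i_{1},i_{2})$ entry of $\mathbf{Q}_{B}$ (respectively $\mathbf{Q}_{C}$) explicitly, observe that it depends on the grid only through $\rho_{i_{2}}-\rho_{i_{1}}$ (respectively $\sin\phi_{j_{2}}-\sin\phi_{j_{1}}$), and conclude Toeplitz from the uniform-grid hypothesis, with the diagonal value read off at $i_{1}=i_{2}$. Your write-up is slightly more explicit in parameterizing the grids and in flagging why uniformity in $\sin\theta$ (not $\theta$) is the right assumption, but the argument is the same.
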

\begin{proof}
See Appendix~\ref{App-lemma-toeplitz}.
\end{proof}

By the virtue of Lemma~\ref{lemma:toeplitz}, for uniform grids, the statistics of $\Gamma_{B}(\cdot)$ and $\Gamma_{C}(\cdot)$ (and consequently $\mu$) are characterized by considering only the first rows of $\mathbf{Q}_{B}$ and $\mathbf{Q}_{C}$, respectively, i.e., $\mathbf{B}_{:,1}^{H}\mathbf{B}_{:,i_{1}}$ for $i_{1}=2,3,\hdots, G_{D}$ and $\mathbf{C}_{:,1}^{H}\mathbf{C}_{:,i_{2}}$ for $i_{2}=2,3,\hdots, G_{\theta}$. Note that $\|\mathbf{B}_{:,i_{1}}\|_{2}=\sqrt{P}$ for all $i_{1}$ and $\|\mathbf{C}_{:,i_{2}}\|_{2}=\sqrt{N_{T}N_{R}}$ for all $i_{2}$.

In the following, we address two cases: (a) $N_{T}$ transmitters and $N_{R}$ receivers, where $\alpha_{1:N_{T}}$ and $\beta_{1:N_{R}}$ are independent, and (b) $N_{T}=N_{R}$ transceivers, where $\alpha_{n}=\beta_{n}$ for all $1\leq n\leq N_{T}$. Theorem~\ref{theorem:coherence} upper bounds the complementary cdf of $\mu$ for both these cases.

\begin{theorem}[Coherence of measurement matrix $\mathbf{D}$]\label{theorem:coherence}
    Consider the random sparse chirps and SLA radar setup of Section~\ref{sec:system model} wherein the transmitters' positions $\alpha_{1:N_{T}}$ and receivers' positions $\beta_{1:N_{R}}$ are drawn i.i.d. from even distributions $\mathcal{P}_{\alpha}$ and $\mathcal{P}_{\beta}$, respectively. The chirps' indices $\zeta_{1:P} \in \{0,1,\hdots,P_{max}-1\}$ are distinct and distributed as $\mathcal{P}_{p}$, which is symmetrical about $(P_{max}-1)/2$. The Doppler grid $\rho_{1:G_{D}}$ and angular grid $\phi_{1:G_{\theta}}$ are uniform in the velocity and $\sin{\theta}$ domains, respectively. Let the following assumptions hold true.\\
    \textbf{C1.} Distribution $\mathcal{P}_{p}$ and grid $\rho_{1:G_{D}}$ satisfy
        \par\noindent\small
        \begin{align}
            \Psi_{p}(u^{D}_{1,i})=\Psi_{p}(2u^{D}_{1,i})=0,\label{eqn:coherence v condition}
        \end{align}
        \normalsize
        for $i=2,3,\hdots,G_{D}$ where $\Psi_{p}(\cdot)$ denotes the characteristic function of $\mathcal{P}_{p}$.\\
    \textbf{C2.} Distributions $\mathcal{P}_{\alpha}$, $\mathcal{P}_{\beta}$ and grid $\phi_{1:G_{\theta}}$ satisfy
        \par\noindent\small
        \begin{align}
            \Psi_{\alpha}(u^{\theta}_{1,i})=\Psi_{\alpha}(2u^{\theta}_{1,i})=\Psi_{\beta}(u^{\theta}_{1,i})=\Psi_{\beta}(2u^{\theta}_{1,i})=0,\label{eqn:coherence alpha beta condition}
        \end{align}
        \normalsize
        for $i=2,3,\hdots,G_{\theta}$ where $\Psi_{\alpha}(\cdot)$ and $\Psi_{\beta}(\cdot)$ denote the characteristic functions of $\mathcal{P}_{\alpha}$ and $\mathcal{P}_{\beta}$, respectively.
        
    Then, for $0<\vartheta<1$, the coherence $\mu$ of matrix $\mathbf{D}$ in \eqref{eqn:angle-Doppler 1D vectorized} satisfies the following:\\
    \textbf{1)} If $\alpha_{1:N_{T}}$ and $\beta_{1:N_{R}}$ are independent, then
        \par\noindent\small
        \begin{align}
            \mathbb{P}(\mu>\vartheta)<1-&(1-e^{-\vartheta^{2} P})^{G_{D}-1}\nonumber\\
            &\times(1-2\vartheta\sqrt{N_{T}N_{R}}\;\mathcal{K}_{1}(2\vartheta\sqrt{N_{T}N_{R}}))^{G_{\theta}-1},\label{eqn:coherence bound independent case}
        \end{align}
        \normalsize
        where $\mathcal{K}_{1}(\cdot)$ denotes the modified Bessel function of second kind.\\
    \textbf{2)} For the $N_{T}=N_{R}$ transceivers case, $\alpha_{n}=\beta_{n}$ for all $1\leq n\leq N_{T}$ and
        \par\noindent\small
        \begin{align}
            \mathbb{P}(\mu>\vartheta)\leq 1-(1-e^{-\vartheta^{2}P})^{G_{D}-1}(1-e^{-N_{T}\vartheta})^{G_{\theta}-1}.\label{eqn:coherence transceiver case}
        \end{align}
        \normalsize
\end{theorem}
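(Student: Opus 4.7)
The plan is to combine the pre-statement decomposition~\eqref{eqn:mu expansion} with the product structure $\mathbf{D}=\mathbf{B}\otimes\mathbf{C}$. Since the chirp indices $\{\zeta_{p}\}$ are drawn independently of the antenna locations $\{\alpha_{n},\beta_{m}\}$, I would first factor
\[
\mathbb{P}(\mu\leq\vartheta)\;\geq\;\mathbb{P}(M_{B}\leq\vartheta)\,\mathbb{P}(M_{C}\leq\vartheta),
\]
where $M_{B}\doteq\max_{i_{1}\neq i_{2}}|\Gamma_{B}(u^{D}_{i_{1},i_{2}})|$ and $M_{C}\doteq\max_{j_{1}\neq j_{2}}|\Gamma_{C}(u^{\theta}_{j_{1},j_{2}})|$. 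By Lemma~\ref{lemma:toeplitz}, the stated uniform grids render $\mathbf{Q}_{B}$ and $\mathbf{Q}_{C}$ Toeplitz, so the statistics of $M_{B}$ and $M_{C}$ are determined by the first-row off-diagonal entries only, i.e.\ by indices $i\in\{2,\dots,G_{D}\}$ and $j\in\{2,\dots,G_{\theta}\}$. It then remains (i) to derive a tail bound for each of $|\Gamma_{B}(u^{D}_{1,i})|$ and $|\Gamma_{C}(u^{\theta}_{1,j})|$ and (ii) to aggregate these into the product forms appearing in~\eqref{eqn:coherence bound independent case} and~\eqref{eqn:coherence transceiver case}.

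For the Doppler piece, I would use condition C1 to extract the first two moments of the summands of $\Gamma_{B}(u^{D}_{1,i}) = (1/P)\sum_{p=1}^{P}e^{ju^{D}_{1,i}\zeta_{p}}$. The hypothesis $\Psi_{p}(u^{D}_{1,i})=0$ together with the symmetry of $\mathcal{P}_{p}$ forces $\mathbb{E}[\cos(u^{D}_{1,i}\zeta_{p})]=\mathbb{E}[\sin(u^{D}_{1,i}\zeta_{p})]=0$, while $\Psi_{p}(2u^{D}_{1,i})=0$ fixes $\mathbb{E}[\cos^{2}(u^{D}_{1,i}\zeta_{p})]=\mathbb{E}[\sin^{2}(u^{D}_{1,i}\zeta_{p})]=1/2$ and annihilates the real/imaginary cross-covariance. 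A complex central limit theorem applied to the (near-)i.i.d.\ summands then makes $\sqrt{P}\,\Gamma_{B}$ asymptotically isotropic complex Gaussian with per-component variance $1/2$, so $|\Gamma_{B}(u^{D}_{1,i})|$ admits the Rayleigh-type tail $\mathbb{P}(|\Gamma_{B}|>\vartheta)\leq e^{-\vartheta^{2}P}$.

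For the angular piece, the decisive observation is the Kronecker-like factorisation
\[
\Gamma_{C}(u^{\theta}_{1,j}) = \frac{1}{N_{T}N_{R}}\Bigl(\sum_{n=1}^{N_{T}}e^{ju^{\theta}_{1,j}\alpha_{n}}\Bigr)\Bigl(\sum_{m=1}^{N_{R}}e^{ju^{\theta}_{1,j}\beta_{m}}\Bigr).
\]
By C2, each factor has the same zero-mean, variance-$1/2$-per-component structure as in the Doppler analysis, so its modulus is asymptotically Rayleigh with parameter $\sqrt{N_{T}/2}$ or $\sqrt{N_{R}/2}$, respectively. In the independent $(\alpha,\beta)$ case, $|\Gamma_{C}|$ is then a product of two independent normalised Rayleighs whose tail can be evaluated through the Bessel-function identity $\int_{0}^{\infty}e^{-t-s^{2}/t}\,dt = 2s\,\mathcal{K}_{1}(2s)$; setting $s = \vartheta\sqrt{N_{T}N_{R}}$ gives $\mathbb{P}(|\Gamma_{C}|>\vartheta)\leq 2\vartheta\sqrt{N_{T}N_{R}}\,\mathcal{K}_{1}(2\vartheta\sqrt{N_{T}N_{R}})$. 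In the $N_{T}=N_{R}$ transceivers case, $\alpha_{n}=\beta_{n}$ collapses the factorisation to $(\sum_{n}e^{ju^{\theta}_{1,j}\alpha_{n}})^{2}/N_{T}^{2}$, so $|\Gamma_{C}| = R^{2}/N_{T}$ with $R^{2}\sim\mathrm{Exp}(1)$, producing the sharper bound $e^{-N_{T}\vartheta}$.

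Aggregating the per-index tails into $\mathbb{P}(M_{B}\leq\vartheta)\geq(1-e^{-\vartheta^{2}P})^{G_{D}-1}$ and its angular analogue, and substituting into the initial factorisation, recovers the two stated bounds. I expect the principal obstacle to be twofold. First, converting the Gaussian/Rayleigh limits into genuine non-asymptotic inequalities at finite $P, N_{T}, N_{R}$: conditions C1 and C2 pin down the first and second moments exactly to the values required for the stated tails, but passing from moments to the clean exponential and Bessel forms requires either a Hoeffding-type exponential bound for bounded complex summands or a Berry--Esseen-style refinement of the CLT. Second, the families $\{\Gamma_{B}(u^{D}_{1,i})\}_{i}$ and $\{\Gamma_{C}(u^{\theta}_{1,j})\}_{j}$ are not independent across $i$ and $j$ under uniform grids, so obtaining the clean product form in $(G_{D}-1)$ and $(G_{\theta}-1)$ factors will most likely rely on joint Gaussianity of the respective vectors of sums (whose covariance structure is precisely the Toeplitz $\mathbf{Q}_{B}$ and $\mathbf{Q}_{C}$ of Lemma~\ref{lemma:toeplitz}) combined with a Slepian-type or decoupling comparison; this is the most delicate step of the argument.
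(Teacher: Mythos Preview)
Your proposal follows essentially the same route as the paper: factor via~\eqref{eqn:mu expansion} into independent Doppler and angular maxima, use Lemma~\ref{lemma:toeplitz} to reduce to first-row off-diagonals, invoke an asymptotic complex-Gaussian (Rayleigh) approximation for $\Gamma_{B}$ under C1, and treat $\Gamma_{C}$ as a product of two Rayleighs (independent $\alpha,\beta$) or a squared Rayleigh (transceivers) under C2. The paper is explicit that its bounds are valid ``for sufficiently large measurements (due to the approximations made in their proofs),'' so your first obstacle---converting CLT limits into non-asymptotic inequalities---is not actually resolved in the paper either; the stated tails are taken as the asymptotic Rayleigh/Bessel forms, citing~\cite{lo1964mathematical} and~\cite{rossi2013spatial}.

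Your second obstacle is where the paper's argument is simpler than you anticipate. The paper does not use a Slepian or decoupling inequality; it asserts that the $\{\Gamma_{B}(u^{D}_{1,i})\}_{i}$ (and likewise the $\{\Gamma_{C}(u^{\theta}_{1,j})\}_{j}$) are \emph{independent} and identically distributed, and then writes the product directly. The justification, implicit in the asymptotic Gaussian framework, is that the uniform grid together with C1 annihilates the cross-covariances: for $i\neq i'$, $\mathbb{E}\bigl[e^{ju^{D}_{1,i}\zeta}\,e^{-ju^{D}_{1,i'}\zeta}\bigr]=\Psi_{p}(u^{D}_{i',i})$, which by uniformity of the grid equals $\Psi_{p}(u^{D}_{1,k})=0$ for some $k$ in the admissible range. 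Joint Gaussianity plus zero cross-correlation then gives independence, and the clean $(G_{D}-1)$- and $(G_{\theta}-1)$-fold products follow without any comparison lemma. So your worry is legitimate at the rigorous level, but within the asymptotic-CLT regime the paper adopts, the product form comes essentially for free from C1/C2 and the uniform grids.
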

\begin{proof}
See Appendix~\ref{App-thm-coherence}.
\end{proof}

Using Theorem~\ref{theorem:coherence}, we can lower-bound the number of chirps $P$ and antenna elements $N_{T}N_{R}$ needed to ensure uniform recovery with high probability as follows.
\begin{theorem}[Uniform recovery guarantee]\label{theorem:uniform}
    Consider the random sparse chirps and SLA radar setup with the distributions $\mathcal{P}_{\alpha}$, $\mathcal{P}_{\beta}$ and $\mathcal{P}_{p}$ and grids $\rho_{1:G_{D}}$ and $\phi_{1:G_{\theta}}$ satisfying the conditions of Theorem~\ref{theorem:coherence}. Also, consider some $0<\epsilon<1$ with two positive constants $\epsilon_{1}$ and $\epsilon_{2}$   such that $\epsilon_{1}+\epsilon_{2}=\epsilon$ and\\
    \par\noindent\small
    \begin{align}
   P\geq\kappa_{1}\left(K-\frac{1}{2}\right)^{2}\log\left(\frac{G_{D}}{\epsilon_{1}}\right),\label{eqn:uniform P bound}
    \end{align}
    \normalsize
    and
    \par\noindent\small
    \begin{align}
    &N_{T}N_{R}\geq\kappa_{2}\left(K-\frac{1}{2}\right)^{2}\nonumber\\
    &\;\;\;\times\left(\log\left(\frac{G_{\theta}\sqrt{\pi}}{2\epsilon_{2}}\right)+\frac{1}{2}\log\left(2\log\left(\frac{G_{\theta}\sqrt{\pi}}{2\epsilon_{2}}\right)\right)\right)^{2},\label{eqn:uniform TR bound}
    \end{align}
    \normalsize
    when $\alpha_{1:N_{T}}$ and $\beta_{1:N_{R}}$ are independent, or
    \par\noindent\small
    \begin{align}
    N_{T}\geq\kappa_{3}\left(K-\frac{1}{2}\right)\log\left(\frac{G_{\theta}}{\epsilon_{2}}\right),\label{eqn:uniform T bound}
    \end{align}
    \normalsize
    when $\alpha_{n}=\beta_{n}$ for all $1\leq n\leq N_{T}$. Here, constants $\kappa_{1}\approx 18.69$, $\kappa_{2}\approx 4.67$ and $\kappa_{3}\approx 4.32$ while $K$ is the maximum number of targets present in a range bin. Then, for any $K$-sparse $\mathbf{z}$ measured by \eqref{eqn:angle-Doppler 1D vectorized} with $\|\mathbf{w}\|_{2}\leq\sigma$, the $l_{1}$-minimization solution $\hat{\mathbf{z}}$ satisfies $\|\hat{\mathbf{z}}-\mathbf{z}\|_{2}\leq\kappa_{4}\sigma$ with probability at least $1-\epsilon$, where $\kappa_{4}$ is a constant depending only on the RIP constant $\delta_{2K}$ of matrix $\mathbf{D}$.
\end{theorem}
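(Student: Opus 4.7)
The plan is to couple Theorem~\ref{theorem:coherence} with a standard coherence-to-RIP bound and a noisy $l_{1}$-recovery theorem. Since each column of $\mathbf{D}$ has unit norm after normalization (by Lemma~\ref{lemma:toeplitz}), the Gershgorin-type bound $\delta_{2K}(\mathbf{D})\leq (2K-1)\mu$ applies. I pick a threshold $\vartheta=\delta^{*}/(2K-1)$, where $\delta^{*}$ is the critical RIP level under which the noisy $l_{1}$-minimization of \eqref{eqn:BP} is guaranteed to satisfy $\|\hat{\mathbf{z}}-\mathbf{z}\|_{2}\leq \kappa_{4}\sigma$ with $\kappa_{4}$ a function only of $\delta_{2K}$ (this is the classical Cand\`es-type guarantee). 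On the event $\{\mu\leq\vartheta\}$, the RIP threshold holds and the recovery bound follows, so the entire task reduces to making this event occur with probability at least $1-\epsilon$.

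Next I apply Theorem~\ref{theorem:coherence} and split the failure budget as $\epsilon=\epsilon_{1}+\epsilon_{2}$, with $\epsilon_{1}$ absorbing the Doppler contribution and $\epsilon_{2}$ the angular contribution. Using $(1-x)^{n}\geq 1-nx$ on each factor in \eqref{eqn:coherence bound independent case} and \eqref{eqn:coherence transceiver case}, the failure probability is upper bounded by a sum of two terms. Forcing
\begin{equation*}
(G_{D}-1)\,e^{-\vartheta^{2}P}\leq \epsilon_{1}
\end{equation*}
and solving for $P$ yields $P\geq \vartheta^{-2}\log(G_{D}/\epsilon_{1})=4(K-1/2)^{2}\log(G_{D}/\epsilon_{1})/(\delta^{*})^{2}$, which is exactly \eqref{eqn:uniform P bound} with $\kappa_{1}=4/(\delta^{*})^{2}\approx 18.69$. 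For the transceiver case, requiring $(G_{\theta}-1)\,e^{-N_{T}\vartheta}\leq \epsilon_{2}$ inverts directly to $N_{T}\geq \vartheta^{-1}\log(G_{\theta}/\epsilon_{2})=2(K-1/2)\log(G_{\theta}/\epsilon_{2})/\delta^{*}$, matching \eqref{eqn:uniform T bound} with $\kappa_{3}=2/\delta^{*}\approx 4.32$.

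The remaining piece is the independent-array bound \eqref{eqn:uniform TR bound}, which is the step I expect to be the main obstacle. Here I need to invert the Bessel tail $2\vartheta\sqrt{N_{T}N_{R}}\,\mathcal{K}_{1}(2\vartheta\sqrt{N_{T}N_{R}})\leq \epsilon_{2}/(G_{\theta}-1)$. Using the sharp asymptotic $\mathcal{K}_{1}(x)\leq \sqrt{\pi/(2x)}\,e^{-x}$ valid for $x$ bounded away from $0$, the left side is controlled by $\sqrt{\pi \vartheta\sqrt{N_{T}N_{R}}}\,e^{-2\vartheta\sqrt{N_{T}N_{R}}}$. Writing $y\doteq 2\vartheta\sqrt{N_{T}N_{R}}$, the requirement becomes a transcendental inequality of the form $\sqrt{y}\,e^{-y}\leq c\epsilon_{2}/(G_{\theta}-1)$. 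Inverting this via a Lambert-$W$-style bound $y\geq \log(1/c')+\tfrac{1}{2}\log\log(1/c')$, with $c'$ collecting the constants $\sqrt{\pi}/2$, translates back to the nested-logarithm expression in \eqref{eqn:uniform TR bound}, and squaring to recover $N_{T}N_{R}$ from $y^{2}/(4\vartheta^{2})$ yields the constant $\kappa_{2}=1/(\delta^{*})^{2}\approx 4.67$. Care is needed to ensure the asymptotic regime of $\mathcal{K}_{1}$ applies once the lower bound is met and to justify the intermediate value of $\delta^{*}$ consistently with the numerical constants.

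Finally, combining the three bounds on the complementary events and invoking the noisy $l_{1}$-recovery theorem on the high-probability event $\{\mu\leq\vartheta\}$ delivers $\|\hat{\mathbf{z}}-\mathbf{z}\|_{2}\leq \kappa_{4}\sigma$ with probability at least $1-\epsilon$. The routine pieces (Bernoulli's inequality, Gershgorin-based coherence-to-RIP, and the off-the-shelf $l_{1}$-recovery bound) are standard; the delicate step is the Bessel-tail inversion that produces the $\bigl(\log+\tfrac{1}{2}\log\log\bigr)$ structure specific to the independent transmitter/receiver geometry.
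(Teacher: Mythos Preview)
Your proposal is correct and follows essentially the same route as the paper: both combine the coherence-to-RIP bound $\delta_{2K}\leq(2K-1)\mu$ with the threshold $\delta^{*}=\Lambda=2/(3+\sqrt{7/4})$ from the standard noisy $l_{1}$-recovery theorem (Rauhut, Theorem~2.7), apply Theorem~\ref{theorem:coherence}, linearize via $(1-x)^{n}\gtrsim 1-nx$, split $\epsilon=\epsilon_{1}+\epsilon_{2}$, and invert the Doppler, transceiver, and Bessel-tail inequalities---the last via the $\mathcal{K}_{1}$ asymptotic and a Lambert-$W$ argument, exactly as you outline. Your use of Bernoulli's inequality is in fact slightly cleaner than the paper's heuristic approximations, but the constants $\kappa_{1}=4/\Lambda^{2}$, $\kappa_{2}=1/\Lambda^{2}$, $\kappa_{3}=2/\Lambda$ and the overall structure coincide.
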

\begin{proof}
See Appendix~\ref{App-thm-uniform}.
\end{proof}

Theorem~\ref{theorem:uniform} guarantees the exact recovery of any $K$-sparse signal in the noise-free $\sigma=0$ case with high probability. Note that various possible choices of $\epsilon_{1}$ and $\epsilon_{2}$ in \eqref{eqn:uniform P bound}-\eqref{eqn:uniform T bound} ensure that the number of chirps $P$ and antenna elements $N_{T}N_{R}$ can be flexibly adjusted to guarantee joint recovery of target velocities and AOAs. Additionally, $\Gamma_{C}(\cdot)$ in \eqref{eqn:gamma c} is closely related to the array pattern of our random SLA \cite{johnson1992array,rossi2013spatial} such that the bounds in \eqref{eqn:uniform TR bound} and \eqref{eqn:uniform T bound} signify the number of antenna elements necessary to control the peak sidelobes. Note that, similar to \cite{lo1964mathematical,rossi2013spatial}, the results in Theorems \ref{theorem:coherence}-\ref{theorem:uniform} are valid for sufficiently large measurements (due to the approximations made in their proofs) and primarily highlight the interdependence among measurements, grid points, and the number of targets for uniform recovery. However, uniform recovery investigates the worst-case scenario, but the average performance in practice is much better than predicted by Theorem~\ref{theorem:uniform}. Particularly, in our parameter estimation problem, we are more interested in recovering $\textrm{supp}(\mathbf{z})$ than the (exact) non-zero values in $\mathbf{z}$.
\begin{remark}[Dependence on number of targets]\label{remark:number of targets}
    From \eqref{eqn:uniform P bound}, we observe that the number of chirps $P$ may increase quadratically with the number of targets $K$ expected to be present in a range bin. The number of antenna elements ($N_{T}N_{R}$ in \eqref{eqn:uniform TR bound} or $N_{T}$ in \eqref{eqn:uniform T bound}) have a quadratic and linear dependence on $K$, respectively, for the independent transmitter-receiver and transceivers cases.
\end{remark}
\begin{remark}[Dependence on grid points]\label{remark:number of grid points}
    In Theorem~\ref{theorem:uniform}, both the number of chirps $P$ and antenna elements ($N_{T}N_{R}$ or $N_{T}$) increase logarithmically with the associated number of grid points ($G_{D}$ or $G_{\theta}$). This logarithmic dependence signifies that our proposed method enables high-resolution Doppler-angle estimation using only a small number of chirps and antenna elements. On the contrary, for traditional full MIMO radar arrays that transmit chirps for the entire CPI, the angular and velocity resolution depend, respectively, on the array aperture and CPI such that both the chirps and antenna elements scale linearly with grid points for high resolution. However, in our case, grid points $G_{D}$ and $G_{\theta}$ are not free variables because they need to satisfy \eqref{eqn:coherence v condition} and \eqref{eqn:coherence alpha beta condition}, respectively.
\end{remark}

\subsection{Non-uniform recovery}\label{subsec:non-uniform}
We now examine the isotropy condition \eqref{eqn:isotropy} for matrix $\mathbf{D}$ and non-uniform guarantees. Recall that any row of $\mathbf{D}$ is represented as $\mathbf{D}_{N_{T}N_{R}(i_{1}-1)+i_{2},:}=\mathbf{B}_{i_{1},:}\otimes\mathbf{C}_{i_{2},:}$ where $1\leq i_{1}\leq P$, $1\leq i_{2}\leq N_{T}N_{R}$. Now, any row of matrix $\mathbf{C}$ is \small$[\exp{(j\frac{\pi A}{\lambda}\sin{\phi_{1}}(\alpha_{n}+\beta_{m}))},\hdots,\exp{(j\frac{\pi A}{\lambda}\sin{\phi_{G_{\theta}}}(\alpha_{n}+\beta_{m}))}]$\;\normalsize
for some $1\leq n\leq N_{T}$ and $1\leq m\leq N_{R}$. Hence, rows of $\mathbf{C}$ and, consequently, $\mathbf{D}$ are generally not independent. Non-uniform recovery for such cases has been addressed in \cite{hugel2014remote}. Here, we first provide conditions to satisfy the isotropy property in Theorem~\ref{theorem:isotropy}. The non-uniform recovery guarantees then follow in Theorem~\ref{theorem:non-uniform}.

\begin{theorem}[Isotropy property for matrix $\mathbf{D}$]\label{theorem:isotropy}
Consider the random sparse chirps and SLA setup of Theorem~\ref{theorem:coherence}. Define random variable $\xi=\alpha+\beta$. The matrix $\mathbf{D}$ in \eqref{eqn:angle-Doppler 1D vectorized} satisfies  $\mathbb{E}[\mathbf{D}_{l,:}^{H}\mathbf{D}_{l,:}]=\mathbf{I}$ for every $1\leq l\leq PN_{T}N_{R}$ (isotropy property) iff the distributions $\mathcal{P}_{p}$, $\mathcal{P}_{\alpha}$ and $\mathcal{P}_{\beta}$, and uniform grids $\rho_{1:G_{D}}$ and $\phi_{1:G_{\theta}}$ satisfy
\par\noindent\small
\begin{align}
    \Psi_{p}(u^{D}_{1,i})=0,\;\;\;\textrm{and}\;\;\;\Psi_{\xi}(u^{\theta}_{1,j})=0,\label{eqn:condition for isotropy}
\end{align}
\normalsize
for $i=2,3,\hdots,G_{D}$ and $j=2,3,\hdots,G_{\theta}$, where $\Psi_{\xi}(\cdot)$ denotes the characteristic function of $\xi$.
\end{theorem}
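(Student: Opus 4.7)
The plan is to compute $\mathbb{E}[\mathbf{D}_{l,:}^{H}\mathbf{D}_{l,:}]$ entry by entry and translate the vanishing of every off-diagonal entry into the stated characteristic-function conditions, using the Kronecker structure $\mathbf{D}=\mathbf{B}\otimes\mathbf{C}$ together with the independence of the chirp indices $\{\zeta_{p}\}$ from the antenna displacements $\{\alpha_{n}\},\{\beta_{m}\}$.

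First, I would fix an arbitrary row index $l=N_{T}N_{R}(i_{1}-1)+i_{2}$, so that $\mathbf{D}_{l,:}=\mathbf{B}_{i_{1},:}\otimes\mathbf{C}_{i_{2},:}$ by the Kronecker identification used just before the theorem. Writing $\mathbf{C}_{i_{2},:}$ in terms of some transmitter-receiver pair $(n,m)$ (so that $\xi=\alpha_{n}+\beta_{m}$ has distribution $\mathcal{P}_{\xi}$), the entry of $\mathbf{D}_{l,:}$ indexed by columns $(j_{1},j_{2})$ and $(k_{1},k_{2})$ in the outer product $\mathbf{D}_{l,:}^{H}\mathbf{D}_{l,:}$ reduces, after cancellation, to
\[
\exp\!\bigl(j\,u^{D}_{j_{1},k_{1}}\zeta_{i_{1}}\bigr)\exp\!\bigl(j\,u^{\theta}_{j_{2},k_{2}}\xi\bigr).
\]
Using that $\zeta_{i_{1}}$ is drawn independently from the spatial variables, the expectation factors as $\Psi_{p}(u^{D}_{j_{1},k_{1}})\,\Psi_{\xi}(u^{\theta}_{j_{2},k_{2}})$.

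Next, I would classify the $(j_{1},j_{2}),(k_{1},k_{2})$ index pairs. On the diagonal, $u^{D}_{j_{1},j_{1}}=u^{\theta}_{j_{2},j_{2}}=0$ gives $\Psi_{p}(0)\Psi_{\xi}(0)=1$, matching the identity. For off-diagonal entries, three cases arise: $j_{1}\neq k_{1}$ with $j_{2}=k_{2}$ forces $\Psi_{p}(u^{D}_{j_{1},k_{1}})=0$; $j_{1}=k_{1}$ with $j_{2}\neq k_{2}$ forces $\Psi_{\xi}(u^{\theta}_{j_{2},k_{2}})=0$; and the mixed case $j_{1}\neq k_{1},j_{2}\neq k_{2}$ is then automatic. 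This gives sufficiency once the conditions hold for \emph{all} ordered pairs, and necessity by picking one of the two product factors equal to unity (via the diagonal indices) while the other must therefore vanish.

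Finally, I would invoke Lemma~\ref{lemma:toeplitz} to collapse the index set. Because $\rho_{1:G_{D}}$ is uniform in the velocity domain and $\phi_{1:G_{\theta}}$ is uniform in the $\sin\theta$ domain, $u^{D}_{j_{1},k_{1}}$ depends only on $k_{1}-j_{1}$ and likewise for $u^{\theta}_{j_{2},k_{2}}$; together with the symmetry of $\mathcal{P}_{p}$ about $(P_{max}-1)/2$ and the evenness of $\mathcal{P}_{\alpha},\mathcal{P}_{\beta}$ (which propagate to symmetry of $\xi$), this lets me reduce the infinite family of conditions to the single-parameter families $\Psi_{p}(u^{D}_{1,i})=0$ for $i=2,\ldots,G_{D}$ and $\Psi_{\xi}(u^{\theta}_{1,j})=0$ for $j=2,\ldots,G_{\theta}$, which is precisely \eqref{eqn:condition for isotropy}. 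The trickiest point to handle cleanly is that the $\zeta_{p}$ are drawn \emph{without replacement}, so a short remark is needed to justify that the marginal distribution of a single $\zeta_{i_{1}}$ can still be taken as $\mathcal{P}_{p}$ (or that the isotropy condition only depends on marginals and their symmetry), since row isotropy is a statement about a single row $l$ at a time; this is the main obstacle, but it can be sidestepped by noting that the condition must hold uniformly in $l$ and that under $\mathcal{P}_{p}$'s symmetry hypothesis each $\zeta_{i_{1}}$ has the same marginal law.
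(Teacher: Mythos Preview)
Your proposal is correct and follows essentially the same approach as the paper: exploit the Kronecker structure $\mathbf{D}_{l,:}=\mathbf{B}_{i_{1},:}\otimes\mathbf{C}_{i_{2},:}$ together with independence of $\zeta_{i_{1}}$ from $(\alpha_{n},\beta_{m})$ to factor the expectation into characteristic functions, then invoke the Toeplitz structure (Lemma~\ref{lemma:toeplitz}) to reduce all off-diagonal conditions to the first-row conditions \eqref{eqn:condition for isotropy}. The only cosmetic difference is that the paper carries out the factorization at the matrix level, writing $\mathbb{E}[\mathbf{D}_{l,:}^{H}\mathbf{D}_{l,:}]=\mathbb{E}[\mathbf{B}_{i_{1},:}^{H}\mathbf{B}_{i_{1},:}]\otimes\mathbb{E}[\mathbf{C}_{i_{2},:}^{H}\mathbf{C}_{i_{2},:}]$ and showing each factor equals the identity, whereas you compute the same product entry by entry; these are equivalent, and your handling of the distinct-but-identically-distributed $\zeta_{p}$ (only the marginal enters a single-row expectation) matches the paper's tacit use of ``$\zeta_{1:P}$ are identically distributed.''
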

\begin{proof}
    See Appendix~\ref{App-thm-isotropy}.
\end{proof}

\begin{theorem}[Non-uniform recovery guarantee]\label{theorem:non-uniform}
    Consider the random sparse chirps and SLA setup of Theorem~\ref{theorem:coherence}. Also, consider a given $K$-sparse signal $\mathbf{z}$ measured by \eqref{eqn:angle-Doppler 1D vectorized} with $\|\mathbf{w}\|_{2}\leq \sigma$. We assume that the distributions $\mathcal{P}_{p}$, $\mathcal{P}_{\alpha}$ and $\mathcal{P}_{\beta}$ and uniform grids $\rho_{1:G_{D}}$ and $\phi_{1:G_{\theta}}$ satisfy \eqref{eqn:condition for isotropy}. Consider some $0<\epsilon<1$ such that
    \par\noindent\small
\begin{align}
    PN_{T}N_{R}\geq \kappa_{5}K\log^{2}\left(\frac{\kappa_{6}G_{D}G_{\theta}}{\epsilon}\right)\label{eqn:PTR non-uniform}
\end{align}
\normalsize
where $\kappa_{5}\leq 2.87\times 10^{6}$ and $\kappa_{6}\leq 6$ are universal constants while $K$ denotes the maximum number of targets in a range bin. Then, the $l_{1}$ minimization solution $\hat{\mathbf{z}}$ satisfies
\par\noindent\small
\begin{align}
    \|\hat{\mathbf{z}}-\mathbf{z}\|_{2}\leq \kappa_{7}\sigma\sqrt{\frac{K}{PN_{T}N_{R}}}\label{eqn:non-uniform error bound}
\end{align}
\normalsize
with probability at least $1-\epsilon$, where constant $\kappa_{7}\leq 23.513$.
\end{theorem}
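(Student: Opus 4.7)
The plan is to invoke a non-uniform recovery theorem for structured random matrices, specifically the framework of H\"ugel--Rauhut--Strohmer \cite{hugel2014remote} cited in the paper, which accommodates the dependent-row structure arising from our Kronecker-product construction $\mathbf{D} = \mathbf{B}\otimes\mathbf{C}$. The two hypotheses that any such non-uniform guarantee requires are (i) the isotropy property of the normalized measurement matrix and (ii) a uniform (incoherence) bound on its entries; both are essentially free in our setting, so the proof reduces to bookkeeping.

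First I would normalize by setting $\widetilde{\mathbf{D}} = \mathbf{D}/\sqrt{PN_{T}N_{R}}$. By Theorem~\ref{theorem:isotropy}, conditions \eqref{eqn:condition for isotropy} ensure isotropy of the rows of $\widetilde{\mathbf{D}}$. Moreover, every entry of $\mathbf{D}$ is a product of unit-modulus complex exponentials, so $\max_{l,i}|D_{l,i}|^{2}=1$, giving the sharpest possible incoherence parameter $\mu=1$ for the normalized system. This optimal incoherence is precisely what makes the $\log^{2}$ scaling in \eqref{eqn:PTR non-uniform} attainable.

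Next I would handle the row dependencies. Although $\mathbf{D}$ has $m = PN_{T}N_{R}$ rows, they are generated from only $P + N_{T} + N_{R}$ independent random variables $\{\zeta_{p},\alpha_{n},\beta_{m}\}$; rows sharing a chirp index or an antenna index are correlated, so classical Cand\`es--Plan results for i.i.d.\ bounded orthonormal systems do not apply directly. The ``remote sensing'' framework of \cite{hugel2014remote} was designed for exactly such product-type dependency: decoupling the analysis over the three independent sources of randomness and controlling the resulting empirical process via generic chaining. Applying that framework to $\widetilde{\mathbf{D}}$ with ambient dimension $N = G_{D}G_{\theta}$, sparsity $K$, and incoherence $\mu=1$ yields: whenever $m \geq C_{1}\,\mu K\log^{2}(C_{2}N/\epsilon)$, the $\ell_{1}$-minimization solution $\hat{\mathbf{z}}$ of \eqref{eqn:angle-Doppler 1D vectorized} obeys $\|\hat{\mathbf{z}}-\mathbf{z}\|_{2}\leq C_{3}\sigma\sqrt{K/m}$ with probability at least $1-\epsilon$. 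Substituting $m = PN_{T}N_{R}$ and $N = G_{D}G_{\theta}$, and reading off the universal constants quoted in \cite{hugel2014remote} as $\kappa_{5}\leq 2.87\times 10^{6}$, $\kappa_{6}\leq 6$, and $\kappa_{7}\leq 23.513$, gives exactly \eqref{eqn:PTR non-uniform} and \eqref{eqn:non-uniform error bound}.

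The main obstacle will be matching our three-fold Kronecker dependency, indexed by the triple $(p,n,m)$, to the dependency template assumed in \cite{hugel2014remote}, and carefully tracing the constants through their Dudley-type entropy estimates so that the stated numerical values of $\kappa_{5},\kappa_{6},\kappa_{7}$ emerge. Once the dependency structure is verified to fit, the isotropy input from Theorem~\ref{theorem:isotropy} and the trivial incoherence bound $\mu=1$ make the remainder of the argument a direct citation.
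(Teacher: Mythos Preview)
Your proposal is correct and follows essentially the same route as the paper: both invoke Theorem~2.1 of \cite{hugel2014remote}, feed in the isotropy property established by Theorem~\ref{theorem:isotropy}, and then substitute $m=PN_{T}N_{R}$, $N=G_{D}G_{\theta}$, and sparsity $K$ to read off \eqref{eqn:PTR non-uniform}--\eqref{eqn:non-uniform error bound}. If anything, your write-up is more explicit than the paper's in isolating the unit-modulus incoherence bound and in flagging the row-dependency structure that makes \cite{hugel2014remote} the right citation rather than a standard i.i.d.\ bounded-orthonormal-system result.
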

\begin{proof}
   Using the isotropy property, the non-uniform recovery of $K$-targets can be guaranteed by generalizing \cite[Theorem~2.1]{hugel2014remote} for the vectorized model \eqref{eqn:angle-Doppler 1D vectorized}. Particularly, we consider the exact $K$-sparse signal $\mathbf{z}$ and substitute the following: (a) number of rows of measurement matrix by $PN_{T}N_{R}$, (b) sparsity by $K$, and (c) number of columns of measurement matrix by $G_{D}G_{\theta}$.
\end{proof}

\begin{remark}[Uniform and non-uniform recovery]\label{remark:uniform vs nonuniform}
    Contrary to Theorem~\ref{theorem:uniform}, in non-uniform recovery, the total number of measurements $PN_{T}N_{R}$ scales linearly with number of targets $K$ and as $log^{2}(G_{D}G_{\theta})$ with the number of grid points. The inequality \eqref{eqn:PTR non-uniform} indicates that the number of chirps $P$ and antenna elements $N_{T}N_{R}$ can be flexibly adjusted to obtain high resolutions jointly in Doppler and AOA domains. Note that $G_{D}$ and $G_{\theta}$ are still not free variables since they need to satisfy \eqref{eqn:condition for isotropy}. However, unlike Theorem~\ref{theorem:uniform} where the error does not depend on the number of measurements, R.H.S of \eqref{eqn:non-uniform error bound} in non-uniform recovery is an explicit function of $PN_{T}N_{R}$. Again, Theorem~\ref{theorem:non-uniform} guarantees exact recovery for the noise-free $\sigma=0$ case.
\end{remark}

Intuitively, bounds \eqref{eqn:uniform P bound}-\eqref{eqn:uniform T bound} from Theorem~\ref{theorem:uniform} and bound \eqref{eqn:PTR non-uniform} from Theorem~\ref{theorem:non-uniform} provide the sufficient number of measurements necessary to control the off-diagonal elements of the measurement matrix $\mathbf{D}$, thereby ensuring high-probability recovery. In \eqref{eqn:coherence bound independent case}-\eqref{eqn:coherence transceiver case}, the coherence $\mu$, also viewed as the peak sidelobe of the array pattern\cite{rossi2013spatial}, provides a bound on the likelihood of this peak exceeding the threshold $\vartheta$. Meanwhile, the isotropy condition $\mathbb{E}[\mathbf{D}_{l,:}^{H}\mathbf{D}_{l,:}]=\mathbf{I}$ resembles the aperture condition in \cite{hugel2014remote} and serves to regulate the variance of the non-diagonal elements of the matrix $\mathbf{Q}_{D}$. Finally, conditions \eqref{eqn:coherence v condition}-\eqref{eqn:coherence alpha beta condition} and \eqref{eqn:condition for isotropy} imply that the grid points should be located at the zeros of the corresponding characteristic functions. We also note that Theorems~\ref{theorem:coherence}-\ref{theorem:non-uniform} provide sufficient—though not necessary—conditions for ensuring high-probability recovery. In practice, the target’s velocity and AOA in \eqref{eqn:angle-Doppler 1D vectorized} are estimated from the support of the sparse vector $\mathbf{z}$, rather than from the exact values of its non-zero entries. Consequently, in our experiments in Section~\ref{sec:numericals}, the proposed methods can match the detection performance of classical-DFT and MUSIC techniques, even when using only half the number of chirps and antenna elements.

\subsection{Practical example}\label{subsec:example}
Proposition~\ref{prop:example} provides a practical example of distributions $\mathcal{P}_{p}$, $\mathcal{P}_{\alpha}$ and $\mathcal{P}_{\beta}$ and grids $\rho_{1:G_{D}}$ and $\phi_{1:G_{\theta}}$ that satisfy \eqref{eqn:coherence v condition}-\eqref{eqn:coherence alpha beta condition}.

\begin{proposition}\label{prop:example}
    Consider $A_{T}=A_{R}=A/2$. Then,\\
    \textbf{1)} If $\mathcal{P}_{\alpha}$ and $\mathcal{P}_{\beta}$ are $\mathcal{U}[-1/2,1/2]$ and $\phi_{1:G_{\theta}}$ is a uniform grid in $\sin{\theta}$ domain with a spacing of $2\lambda/A$, then \eqref{eqn:coherence alpha beta condition} is satisfied.\\
    \textbf{2)} If $\mathcal{P}_{p}$ is $\mathcal{U}\{0,1,\hdots,P_{max}-1\}$ and $\rho_{1:G_{D}}$ is a uniform grid of spacing $\lambda/(2P_{max}T_{c})$, then \eqref{eqn:coherence v condition} is satisfied.
\end{proposition}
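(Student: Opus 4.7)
The plan is to reduce both parts to direct computations of characteristic functions, read off the zero sets, and then check that the prescribed uniform grids place every relevant frequency argument $u^{\theta}_{1,i}$ or $u^{D}_{1,i}$ (for $i=2,\ldots,G_{\theta}$ or $G_{D}$) inside those zero sets. Since $A_{T}=A_{R}=A/2$ forces $\alpha_{n},\beta_{m}\in[-1/2,1/2]$, the proposed supports are consistent with the model of Section~\ref{sec:system model}, so no additional compatibility check is needed.

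For Part~1, I would first compute
$\Psi_{\alpha}(t)=\int_{-1/2}^{1/2}e^{jt\alpha}\,d\alpha=\tfrac{\sin(t/2)}{t/2}$ (and identically $\Psi_{\beta}(t)$), whose zero set is $\{2\pi k:k\in\mathbb{Z}\setminus\{0\}\}$. Then I would substitute the grid $\sin\phi_{i}-\sin\phi_{1}=(i-1)\cdot 2\lambda/A$ into the definition of $u^{\theta}_{1,i}$ given before Lemma~\ref{lemma:toeplitz}, obtaining
\begin{align*}
u^{\theta}_{1,i}=\frac{\pi A}{\lambda}\bigl(\sin\phi_{i}-\sin\phi_{1}\bigr)=2\pi(i-1).
\end{align*}
Since $(i-1)$ and $2(i-1)$ are nonzero integers for every $i\in\{2,\ldots,G_{\theta}\}$, both $u^{\theta}_{1,i}$ and $2u^{\theta}_{1,i}$ land on zeros of $\Psi_{\alpha}$ and $\Psi_{\beta}$, so condition \eqref{eqn:coherence alpha beta condition} holds.

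For Part~2, the core ingredient is the Dirichlet-type sum
\begin{align*}
\Psi_{p}(t)=\frac{1}{P_{max}}\sum_{k=0}^{P_{max}-1}e^{jkt}=\frac{1}{P_{max}}\cdot\frac{e^{jP_{max}t}-1}{e^{jt}-1},
\end{align*}
whose zeros occur exactly at $t=2\pi k/P_{max}$ with $k\not\equiv 0\pmod{P_{max}}$. The chosen spacing $\lambda/(2P_{max}T_{c})$ for $\rho_{1:G_{D}}$ yields
\begin{align*}
u^{D}_{1,i}=\frac{4\pi T_{c}}{\lambda}\cdot(i-1)\cdot\frac{\lambda}{2P_{max}T_{c}}=\frac{2\pi(i-1)}{P_{max}},
\end{align*}
so I would argue that $(i-1)\not\equiv 0\pmod{P_{max}}$ for every $i\in\{2,\ldots,G_{D}\}$ (which is implicit given that $G_{D}\leq P_{max}$, the natural regime for the Doppler grid), giving $\Psi_{p}(u^{D}_{1,i})=0$, and similarly $\Psi_{p}(2u^{D}_{1,i})=0$ since $2(i-1)\not\equiv 0\pmod{P_{max}}$ as well.

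The only genuinely delicate point—which I regard as the main (mild) obstacle—is this last divisibility check in Part~2: while it is automatic for odd $P_{max}$, when $P_{max}$ is even one must additionally ensure $(i-1)$ is not a multiple of $P_{max}/2$, which imposes the implicit grid-size restriction $G_{D}\leq P_{max}/2+1$ in the even case. I would flag this restriction explicitly (it is consistent with Remark~\ref{remark:number of grid points}, where $G_{D}$ is noted to be constrained rather than free). Part~1 has no analogous parity obstruction because the continuous uniform characteristic function vanishes at every nonzero integer multiple of $2\pi$. Beyond these two points, the entire proof is a substitution exercise, so I would keep the write-up short and structured as the two independent computations above.
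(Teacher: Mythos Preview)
Your proposal is correct and follows essentially the same approach as the paper's proof: both compute the relevant characteristic functions explicitly and verify that the prescribed grid spacings place every $u^{\theta}_{1,i}$ (resp.\ $u^{D}_{1,i}$) and its double at zeros of $\Psi_{\alpha},\Psi_{\beta}$ (resp.\ $\Psi_{p}$). You are in fact more careful than the paper in flagging the divisibility restriction needed for $\Psi_{p}(2u^{D}_{1,i})=0$ when $P_{max}$ is even---the paper's own proof simply asserts that $P_{max}u^{D}_{i,l}=2k\pi$ suffices and does not address this point.
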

\begin{proof}
See Appendix~\ref{App-prop-example}.
\end{proof}

In Proposition~\ref{prop:example}, if $\phi_{1:G_{\theta}}$ covers the entire $[-\pi/2,\pi/2]$ interval, then $G_{\theta}=(A/\lambda)+1$ where $A/\lambda$ is the normalized array aperture. However, in practice, the radar focuses on a specific angular sector in each CPI. Similarly, denote $v_{max}=\lambda/4T_{c}$ as the maximum target velocity that can be detected by the given radar system. If $\rho_{1:G_{D}}$ covers $[-v_{max},v_{max}]$ interval, then $G_{D}=P_{max}+1$. Interestingly, the velocity resolution is then $\lambda/(2P_{max}T_{c})$, same as the traditional radar transmitting $P_{max}$ chirps. However, the proposed framework achieves this resolution by transmitting only $P<P_{max}$ chirps.
\begin{figure*}
\centering
\begin{subfigure}{0.23\textwidth}
    \includegraphics[width=\textwidth]{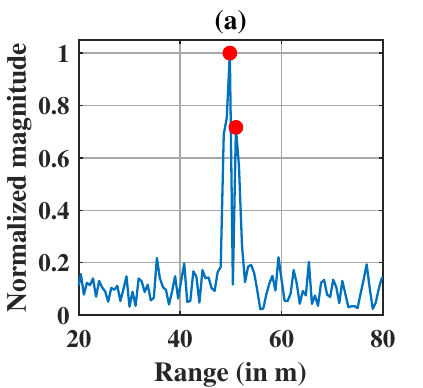}
\end{subfigure}
\hfill
\begin{subfigure}{0.23\textwidth}
    \includegraphics[width=\textwidth]{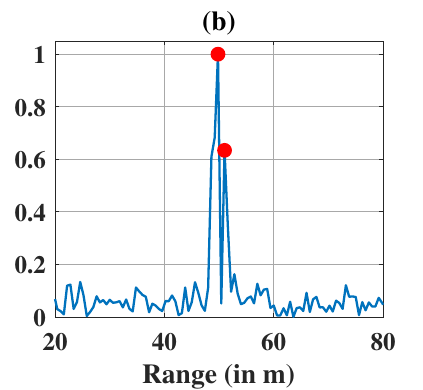}
\end{subfigure}
\hfill
\begin{subfigure}{0.23\textwidth}
    \includegraphics[width=\textwidth]{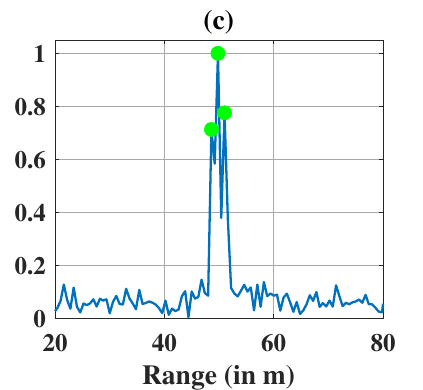}
\end{subfigure}
\hfill
\begin{subfigure}{0.23\textwidth}
    \includegraphics[width=\textwidth]{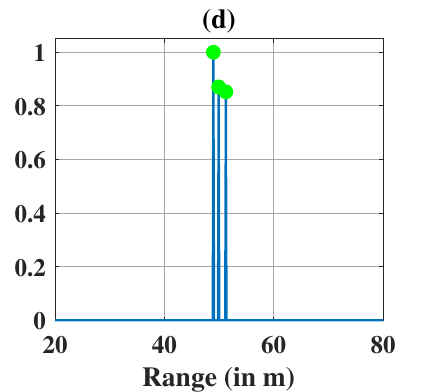}
\end{subfigure}
\caption{Range channels for DFT with (a) coherent integration, and (b) Channel-A ($y_{1,1,5}$) and (c) Channel-B ($y_{2,2,6}$) for binary integration; and (d) Range-OMP.}
\label{fig:range channels}
\end{figure*}

Finally, it is trivial to show that the non-uniform recovery condition \eqref{eqn:condition for isotropy} can be satisfied if $\mathcal{P}_{p}$ is chosen as a discrete uniform distribution and either one of $\mathcal{P}_{\alpha}$ or $\mathcal{P}_{\beta}$ is uniformly distributed. In this case, if $\mathcal{P}_{\alpha}$ is uniform, the receivers' locations $\beta_{1:N_{R}}$ can be chosen as a deterministic function of $\alpha_{1:N_{T}}$ (e.g., $\alpha_{n}=\beta_{n}$) and vice versa. Intuitively, both \eqref{eqn:coherence v condition}-\eqref{eqn:coherence alpha beta condition} and \eqref{eqn:condition for isotropy} impose that the grid points are placed at the zeros of the respective characteristic functions.

\section{Simulation results}\label{sec:numericals}
We compare the detection and estimation performance of our proposed methods with the classical-DFT and MUSIC techniques. Section~\ref{subsec:range numerical} considers various range estimation methods while Section~\ref{subsec:overall numerical} focuses on the overall performance and time-complexity of various  methods. The simulation parameters are listed in Table~\ref{tbl:parameters}. For SLA, $\alpha_{n},\beta_{m}\sim\mathcal{U}[-1/2,1/2]$ for all $1\leq n\leq N_{T}$ and $1\leq m\leq N_{R}$. For ULA, we consider the MIMO array setup of \cite{belfiori20122d} wherein two transmitters are placed on either side of the array with $\lambda$ spacing, and the receivers are placed between them with $0.5\lambda$ and $0.25\lambda$ spacings with one another and the transmitters, respectively. This arrangement results in $20$ unique virtual array channels. Similarly, for the sparsely transmitted chirps, $\zeta_{1\leq p\leq P}$ are chosen uniformly and independently from $\{0,1,\hdots,P_{max}-1\}$ without replacement, while the classical FMCW setup transmits all $P_{max}$ chirps. The target gains are generated as $a_{k}=\exp{(j\psi_{k})}$ with $\psi_{k}\sim\mathcal{U}[0,2\pi)$ (without any path loss) while the noise $w_{n,m,p}[t]\sim \mathcal{CN}(\mathbf{0},\sigma^{2}I)$, i.i.d. across all chirps and virtual array channels. Note that the SNR is $-10\log_{10}(\sigma^{2})$\cite{rossi2013spatial}. In all experiments, both the random SLA and ULA are assumed to be perfectly calibrated. In Appendix~\ref{App-calibration}, we demonstrate that the proposed methods with random SLAs exhibit similar sensitivity to phase and delay calibration errors as the subspace-based MUSIC approach with ULA, which is higher than the classical-DFT.
 
\subsection{DFT-focusing and Range-OMP}\label{subsec:range numerical}
Consider three close range targets with ranges $R_{1}=48.8$m, $R_{2}=50.0$m and $R_{3}=51.2$m for the full measurements radar setup. For the parameters given in Table~\ref{tbl:parameters}, the range resolution for the DFT processing is $0.6$m. In Range-OMP, we choose a uniform grid $\omega_{1\leq g\leq G_{R}}$ covering ranges $1.2$m to $120$m with resolution $0.12$m. Both the binary and coherent integration process all $20\times 32$ measurements $y_{n,m,p}$, but binary integration applies a $1/3$-rd majority rule to filter valid target ranges. Contrarily, Range-OMP considers only $y_{1,1,1}$ channel.

Fig.~\ref{fig:range channels} shows the normalized range channels for DFT (with coherent and binary integration) and Range-OMP at $-20$dB SNR. Fig.~\ref{fig:range channels}a \& b indicate that only two targets are detected in coherent processing and Channel-A of binary processing. However, in the case of binary integration, the missed target is detected at Channel-B in Fig.~\ref{fig:range channels}c. On the contrary, the Range-OMP in Fig.~\ref{fig:range channels}d recovers all three targets at a higher resolution using only $y_{1,1,1}$. Hence, Range-OMP provides better estimates at lower complexity than DFT. Note that binary integration enhances performance only at low SNRs and exhibits the same performance as the coherent one at higher SNRs. Similarly, Range-OMP followed by binary integration can be used to detect ranges from all $\{y_{n,m,p}\}$ measurements, resulting in enhanced detection at low SNRs but with increased computational costs. Throughout our experiments, we consider the Range-OMP for $y_{1,1,1}$ only.
    \begin{table}
    \caption{Simulation Parameters}
    \label{tbl:parameters}
    \centering
    \begin{tabular}{p{4.5cm}p{2.7cm}}
    \hline\noalign{\smallskip}
    \textbf{Parameter} & \textbf{Value} \\
    \noalign{\smallskip}
    \hline
    \noalign{\smallskip}
    Carrier Frequency $f_{c}$ & 24 GHz \\
    Chirp Bandwidth $B=\gamma T_{c}$ & 250 MHz\\
    Chirp Duration $T_{c}$ & 40 $\mu$s\\
    Sampling frequency $f_{s}$ & 5 MHz\\
    Number of Chirps & $P=10$, $P_{max}=32$\\
    Aperture Length & $A_{T}=A_{R}=6\lambda$\\
    Number of antenna elements in ULA & $N_{T,f}=4$, $N_{R,f}=8$\\
    Number of antenna elements in SLA & $N_{T}=2$, $N_{R}=4$\\
    \noalign{\smallskip}
    \hline\noalign{\smallskip}
    \end{tabular}
    \end{table}
\begin{figure*}
\centering
\begin{subfigure}{0.32\textwidth}
    \includegraphics[width=\textwidth]{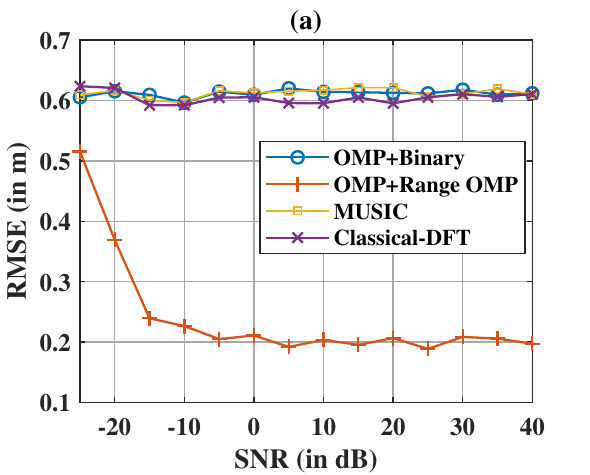}
\end{subfigure}
\hfill
\begin{subfigure}{0.32\textwidth}
    \includegraphics[width=\textwidth]{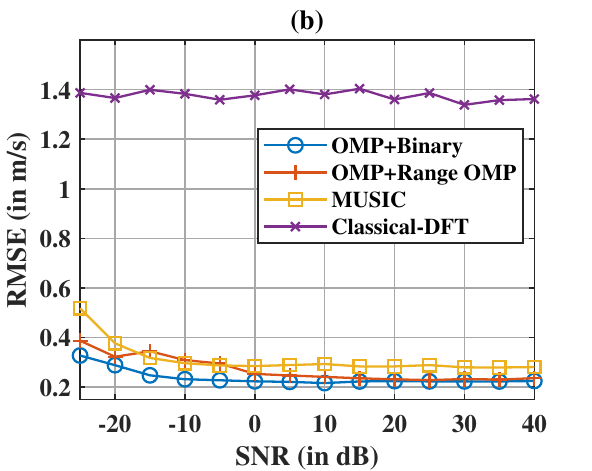}
\end{subfigure}
\hfill
\begin{subfigure}{0.32\textwidth}
    \includegraphics[width=\textwidth]{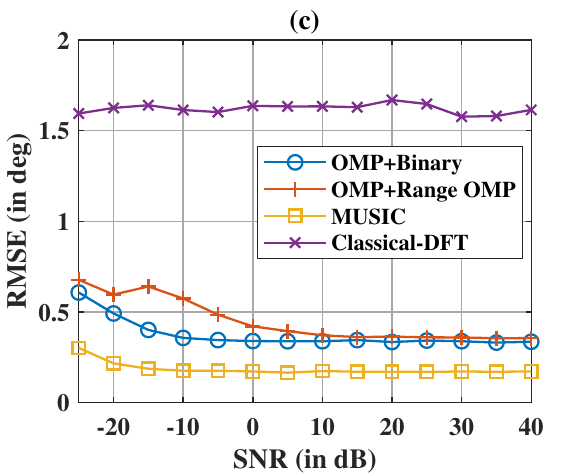}
\end{subfigure}
\caption{RMSE in (a) range, (b) velocity, and (c) AOA estimation at different SNRs for the OMP+Binary, OMP+Range-OMP, classical-DFT and MUSIC methods.}
\label{fig:rmse OMP DFT MUSIC}
\end{figure*}

\subsection{Range, velocity and AOA estimation}\label{subsec:overall numerical}
We consider $K=5$ targets with uniformly drawn ranges $R_{k}\in[20\; m, 120\; m]$, velocities $\nu_{k}\in[-78\; m/s,78\; m/s]$ and AOAs $\theta_{k}\in[-20^{\circ},20^{\circ}]$. We compare our proposed CS-based methods for SLA and sparse chirps with the conventional full-measurement radar systems in terms of detection probabilities, estimation errors, ROC, and run-time complexity. Table-\ref{tbl:methods} describes various methods considered for comparison. For classical-DFT, the range, velocity, and angular resolutions are $0.6$m, $4.88\;m/s$, and $7^{\circ}$, respectively, for the parameters given in Table~\ref{tbl:parameters}. In Range-OMP, a grid of resolution $0.12$m is chosen as in Section~\ref{subsec:range numerical}. Similarly, the velocity grid $\rho_{1\leq g\leq G_{D}}$ is constructed using $200$ uniform points in $[-78\; m/s, 78\; m/s]$ with resolution $0.78\; m/s$. The angular grid $\phi_{1\leq g\leq G_{\theta}}$ is a uniform grid (in $\sin\theta$ domain) of $50$ points in $[-30^{\circ},30^{\circ}]$ resulting in an approximate angular resolution of $2^{\circ}$. The CS-based AOA estimation method proposed in \cite{rossi2013spatial} (for pulsed-wave radars) assumes a known sparsity level and uses prior knowledge of the actual number of targets $K$ in the CS algorithms. On the contrary, in Range-OMP \eqref{eqn:range omp}, 1D-OMP \eqref{eqn:angle-Doppler 1D vectorized}, and 2D-OMP \eqref{eqn:angle-Doppler 2D} methods, we considered a sparsity level of $K_{max}=20$ to terminate the algorithms, while BP and LASSO considered $100$ iterations and an optimization tolerance of $10^{-4}$. The regularization parameter for the LASSO method was set to $0.6$. A threshold detection is then used to obtain the target parameters from the recovered signal. Note that, consistent with standard CS algorithms, the computational complexity of these methods is also determined by the chosen sparsity level $K_{max}$ and the number of iterations, as detailed in Table~\ref{tbl:complexity}. Additionally, while these parameters influence detection performance, their impact on estimation error is minimal. The estimation accuracy is instead primarily determined by the resolution of the corresponding grids.
    \begin{table}
    \caption{Various  parameter estimation methods}
    \label{tbl:methods}
    \centering
    \begin{tabular}{p{2.2cm}p{2.0cm}p{3.5cm}}
    \hline\noalign{\smallskip}
    \textbf{Method} & \textbf{Radar setup} & \textbf{Description}\\
    \noalign{\smallskip}
    \hline
    \noalign{\smallskip}
Classical-DFT & ULA with $P_{max}$ chirps & DFT in range, Doppler and angular domains\\
MUSIC & ULA with $P_{max}$ chirps & DFT in range and 2D-MUSIC with spatial smoothing\cite{belfiori20122d} in Doppler-angle domain\\
OMP+Binary & SLA with sparse $P$ chirps & DFT focusing with binary integration for range and 1D-OMP with \eqref{eqn:angle-Doppler 1D vectorized} for velocity \& AOA\\
OMP+Range-OMP & SLA with sparse $P$ chirps & Range-OMP and 1D-OMP with \eqref{eqn:angle-Doppler 1D vectorized} for velocity \& AOA\\
2D-OMP & SLA with sparse $P$ chirps & DFT-with-binary for range and 2D-OMP with \eqref{eqn:angle-Doppler 2D} for velocity \& AOA\\
BP & SLA with sparse $P$ chirps & DFT-with-binary for range and 1D-BP with \eqref{eqn:angle-Doppler 1D vectorized} for velocity \& AOA\\
LASSO & SLA with sparse $P$ chirps & DFT-with-binary for range and 1D-LASSO with \eqref{eqn:angle-Doppler 1D vectorized} for velocity \& AOA\\
    \noalign{\smallskip}
    \hline\noalign{\smallskip}
    \end{tabular}
    \end{table}
 
Since different methods have different range, Doppler and angular resolutions, we investigate the target detection and parameter estimation performances separately. To this end, we define a detected target as `hit' if the estimated range, velocity, and AOA from a given method are within the corresponding resolution from the true target parameters. Otherwise, it is a false alarm. We compare the hit rates of various methods to demonstrate target detection performance while different thresholds are set to maintain a near-constant false-alarm rate at all SNRs. For parameter estimation, we compute root-mean-squared estimation error (RMSE) for the targets classified as hits. Unless mentioned otherwise, the rates and RMSEs are averaged over $300$ independent runs.
\begin{figure}
\centering
\begin{subfigure}{0.35\textwidth}
    \includegraphics[width=\textwidth]{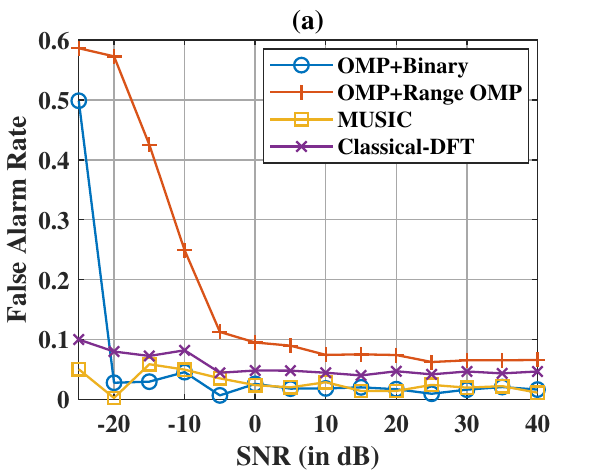}
\end{subfigure}
\hfill
\begin{subfigure}{0.35\textwidth}
    \includegraphics[width=\textwidth]{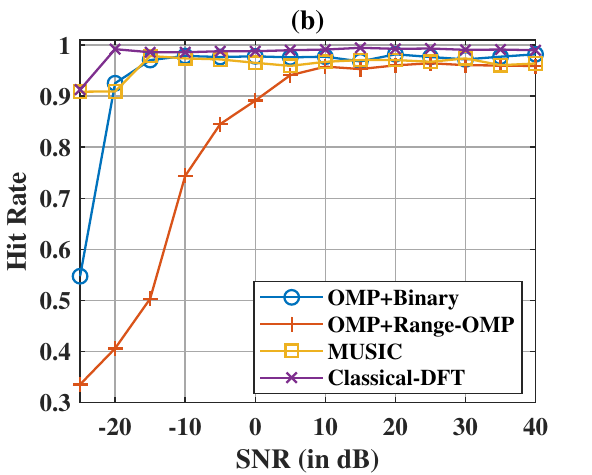}
\end{subfigure}
\caption{(a) False-alarm, and (b) hit rates at different SNRs for OMP+Binary, OMP+Range-OMP, classical-DFT, and MUSIC.}
\label{fig:rates OMP DFT MUSIC}
\end{figure}

\noindent\textbf{1) Comparison with classical-DFT \& MUSIC:} Fig.~\ref{fig:rates OMP DFT MUSIC} compares the false-alarm and hit rates of various methods over $-25$dB to $40$dB SNRs. The corresponding range, velocity, and angle RMSEs are shown in Fig.~\ref{fig:rmse OMP DFT MUSIC}. Fig.~\ref{fig:rates OMP DFT MUSIC} demonstrates that our proposed `OMP+Binary' method achieves similar false-alarm and hit rates as the classical-DFT and MUSIC for SNRs above $-20$ dB, but with only half the number of antenna elements and less than half transmitted chirps. Note that the performance degradation at very low SNRs, compared to uniform measurement systems, is a known limitation of radar frameworks relying on sparse signal processing \cite{mishra2019sub, xu2023automotive,na2018tendsur}. The ULA-based classical-DFT and MUSIC achieve a hit rate of $0.9$ even at $-25$ dB, benefiting from the larger number of measurements. However, practical MIMO-FMCW radars usually operate in the $0-20$ dB SNR regime at the receiver's input. In particular, OMP's sparse radar setup considers $6$ elements ($2$ transmitters and $4$ receivers), while the classical-DFT and MUSIC consider a ULA of $12$ elements ($4$ transmitters and $8$ receivers). Similarly, the conventional setup transmits $32$ chirps in a CPI, while in the sparse scenario, only $10$ are transmitted randomly. Range-OMP provides similar performance only at high SNRs, but at lower computational costs than DFT-focusing as indicated in Table~\ref{tbl:time}.

In Fig.~\ref{fig:rmse OMP DFT MUSIC}a, Range-OMP, owing to its fine-resolution grid, exhibits significantly lower range estimation error than the DFT approach employed in other methods. Similarly, in Fig.~\ref{fig:rmse OMP DFT MUSIC}b-c, the fine-resolution velocity and angle grids result in lower RMSEs than classical-DFT, but comparable to MUSIC. However, MUSIC, being a subspace-based approach, provides higher resolution with full measurements and higher computational costs. Note that while SNR primarily influences the hit and false alarm rates of a method, the estimation error is largely determined by the resolution associated with the method's parameter grids.\\
\begin{figure}
\centering
    \includegraphics[width=0.8\columnwidth]{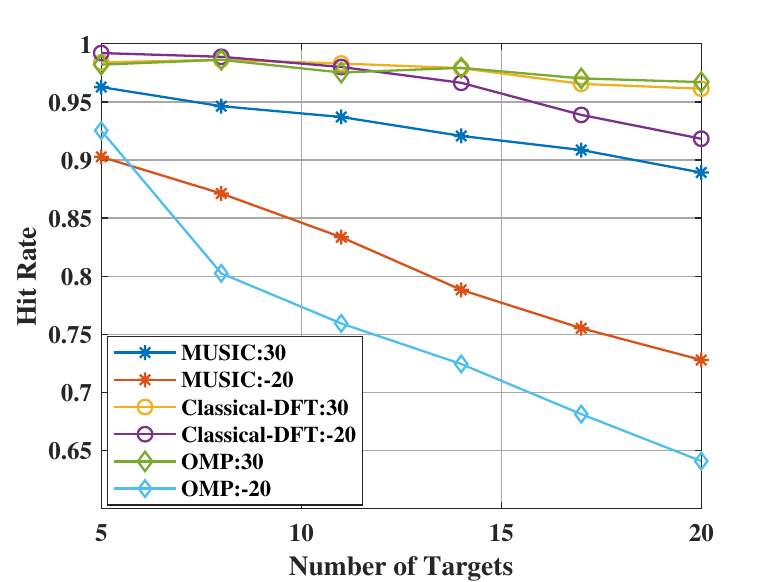}
\caption{Hit rate with varying number of targets $K$ at SNR= 30dB, \& -20dB, for OMP+Binary, classical-DFT and MUSIC.}
\label{fig:hit rate vs K}
\end{figure}
\begin{figure}
\centering
    \includegraphics[width=1.0\columnwidth]{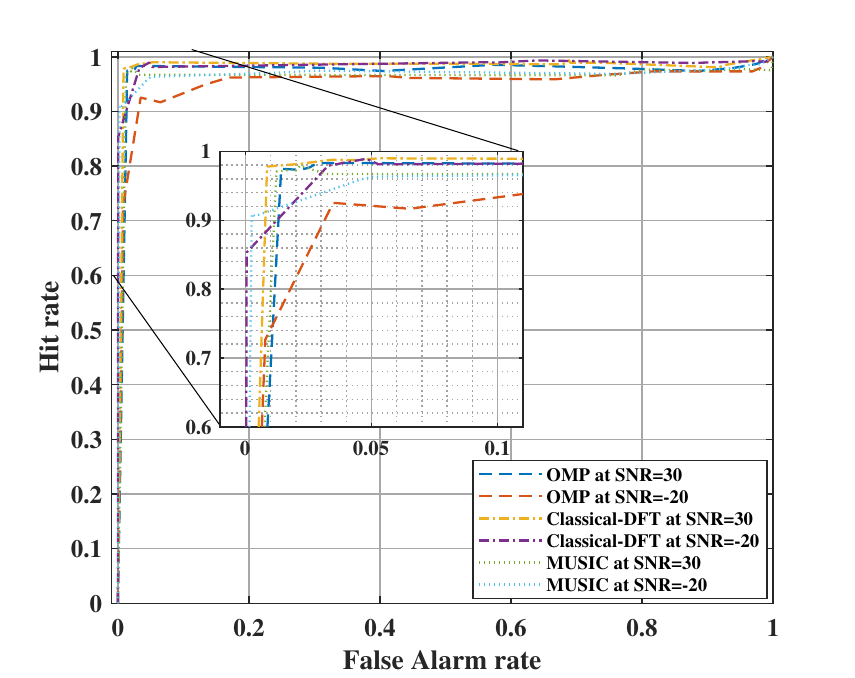}
\caption{ROC curves for OMP+Binary, classical-DFT, and MUSIC methods at $30$dB and $-20$dB SNRs.}
\label{fig:ROC}
\end{figure}
\textbf{2) Variation with number of targets:} Fig.~\ref{fig:hit rate vs K} shows the hit rate for different number of targets $K$ at $30$dB and $-20$dB SNRs, for OMP+Binary, classical-DFT and MUSIC methods. The false-alarm rate is kept constant at $0.05$. At high SNR, our proposed OMP method surpasses MUSIC with a hit rate greater than $0.95$, like classical DFT. As $K$ increases, MUSIC's hit rate decreases because of the fixed smoothing parameters and higher correlation among targets. On the other hand, OMP and classical-DFT maintain a near-constant hit rate. At very low SNR ($-20$dB), both OMP and MUSIC's hit rate degrades rapidly with $K$, unlike classical-DFT. While at $30$dB SNR, our proposed method surpasses MUSIC for all $K$, MUSIC exceeds at $-20$dB. OMP's recovery depends heavily on the measurement matrices and hence, the velocity and angular grids. A coarser grid with small correlation among dictionary elements can enhance the hit rate at low SNRs, trading off resolution.\\
\begin{figure}
\centering
\begin{subfigure}{0.35\textwidth}
    \includegraphics[width=\textwidth]{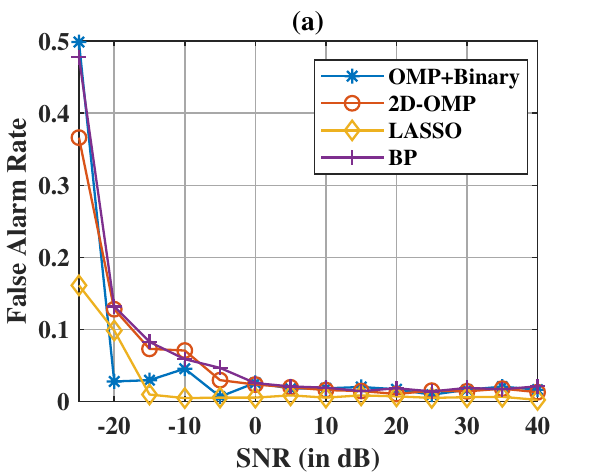}
\end{subfigure}
\hfill
\begin{subfigure}{0.35\textwidth}
    \includegraphics[width=\textwidth]{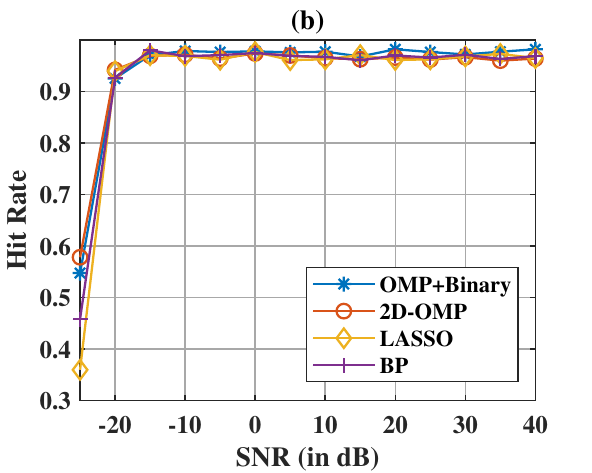}
\end{subfigure}
\caption{(a) False alarm and (b) hit rates at different SNRs for OMP+Binary, 2D-OMP, BP, and LASSO methods.}
\label{fig:rates CS methods}
\end{figure}
    \begin{table}
    \caption{Run-time for different methods (at $15$dB SNR)}
    \label{tbl:time}
    \centering
    \begin{tabular}{p{2.5cm}p{2.0cm}}
    \hline\noalign{\smallskip}
    \textbf{Method} & \textbf{Run time (in $s$)}\\
    \noalign{\smallskip}
    \hline
    \noalign{\smallskip}
    Classical-DFT & 0.2965 \\
    MUSIC & 0.6592 \\
    OMP+Binary & 0.2694 \\ 
    OMP+Range-OMP & 0.1582 \\
    2D-OMP & 0.1037 \\
    BP & 1.9015 \\
    LASSO & 0.1963 \\
    \noalign{\smallskip}
    \hline\noalign{\smallskip}
    \end{tabular}
    \end{table}
\textbf{3) ROCs:} Fig.~\ref{fig:ROC} shows the ROCs for OMP+Binary, classical-DFT and MUSIC methods at $30$dB and $-20$dB SNRs. All methods exhibit near-ideal ROCs at both low and high SNRs. Unlike classical-DFT, the detection performance of MUSIC and OMP degrades slightly at $-20$dB SNR, with OMP's ROC being below MUSIC. However, as mentioned earlier, MUSIC processes measurements from a $4\times 8$ ULA and all $32$ transmitted chirps, while our method considers only a $2\times 4$ SLA and $10$ chirps in a CPI. Also, from Table~\ref{tbl:time}, MUSIC has approximately $2.5$ times longer run-time than OMP.\\
\textbf{4) Comparison of different CS methods:} Fig.~\ref{fig:rates CS methods} illustrates the false-alarm and hit rates of 1-D OMP, BP, and LASSO as well as 2D-OMP methods. Because of the same velocity and angular grids, the RMSEs of these methods are the same and, hence, omitted here. All algorithms exhibit similar false-alarm and hit rates for high SNRs ($\geq 0$dB). On the other hand, at low SNRs ($-20$dB to $0$dB), LASSO achieves the same hit rate as other methods at the lowest false-alarm rate and hence, is more robust to noise. However, these methods differ in run-time complexity as elaborated in the following.\\
\textbf{5) Time complexity:} Table~\ref{tbl:time} provides the run-time (for one run)\footnote{We conducted our experiments on Dell Latitude 3410 with Intel Core i7 $10^{th}$ generation CPU @ 1.80 GHz clock speed and 8 GB RAM, using MATLAB (R2023a) on Windows 11.} for various methods listed in Table~\ref{tbl:methods}. As expected, MUSIC exhibits the longest run-time even though it provides high resolutions. The classical-DFT and OMP+Binary methods demonstrate similar time complexity, but OMP is more accurate, as indicated in Fig.~\ref{fig:rmse OMP DFT MUSIC}. Range-OMP further reduces the run-time of our method because it considers only measurement $y_{1,1,1}$ for range detection; see Section~\ref{subsec:range numerical}. Among different CS methods, 2D-OMP significantly outperforms the 1D OMP, BP, and LASSO algorithms. Overall, our CS-based methods, with 2D-OMP being the most efficient, achieve similar hit rates but higher resolutions than classical methods, even with limited measurements. Owing to the reduced number of measurements, our OMP methods are also faster than classical-DFT. Our Range-OMP further outperforms DFT-based range processing in terms of both resolution and computational costs.

\section{Summary}\label{sec:conclusions}
We have developed CS-based range, velocity, and AOA estimation algorithms for MIMO-FMCW radar with random SLA and sparsely transmitted chirps. We proposed two different range estimation techniques: (a) DFT-focusing with binary integration, which enhances detection at low SNRs, and (b) Range-OMP, which provides higher resolution and lower computational complexity. For joint Doppler-angle estimation, we considered both 1D-vectorized CS (OMP, BP, and LASSO) and 2D-OMP methods, wherein 2D-OMP directly estimates target parameters with lower complexity. Our numerical experiments show that these CS-based methods achieve the same performance as the classical-DFT and MUSIC approaches, but with fewer transmitter and receiver elements and less than half transmitted chirps. For joint Doppler-angle estimation, 2D-OMP provides higher resolutions than classical methods with superior efficiency, while Range-OMP surpasses DFT-based processing in both resolution and computational efficiency. Our theoretical analysis further demonstrated that the required number of chirps and antenna elements scales linearly with the number of targets and logarithmically with the grid points, offering high resolutions with fewer measurements. Unlike uniform recovery, the non-uniform recovery guarantee provides  flexibility in trading off between Doppler and angular measurements through a joint lower bound on chirps and antenna elements.

While the proposed framework demonstrates promising performance in simulation, several practical challenges must be addressed for real-world deployment. Hardware imperfections, such as oscillator phase noise, ADC quantization errors, RF non-linearities, and mutual coupling between antennas, can distort the idealized measurement model. Additionally, environmental factors, including multipath propagation, target occlusion, and clutter, introduce further complications. Large aperture arrays may also encounter spatial wideband and near-field effects, leading to range-angle coupling and non-planar wavefronts, which complicate the estimation process and contribute to model mismatch, thereby degrading sparse recovery performance. To address these issues, future research will focus on developing calibration-aware and model-adaptive CS techniques, robust dictionary designs that account for practical distortions, and estimation methods resilient to noise and hardware impairments. We plan to experimentally validate the proposed framework using the integrated radar evaluation module platforms to assess its effectiveness under realistic hardware and propagation conditions.

\appendices
\section{Proof of Lemma~\ref{lemma:toeplitz}}
\label{App-lemma-toeplitz}
Consider the $(i_{1},i_{2})$-th element of $\mathbf{Q}_{B}$ as $[\mathbf{Q}_{B}]_{i_{1},i_{2}}$. Recall from Section~\ref{sec:guarantees} (of the main paper) that any column of matrix $\mathbf{B}$ is given by $\mathbf{b}(\rho_{i})=[\exp{(j\frac{4\pi T_{c}}{\lambda}\rho_{i}\zeta_{1})},\hdots,\exp{(j\frac{4\pi T_{c}}{\lambda}\rho_{i}\zeta_{P})}]^{T}$ such that
\par\noindent\small
\begin{align*}
        [\mathbf{Q}_{B}]_{i_{1},i_{2}}=\mathbf{B}_{:,i_{1}}^{H}\mathbf{B}_{:,i_{2}}=\sum_{p=1}^{P}\exp{\left(j\frac{4\pi T_{c}}{\lambda}(\rho_{i_{2}}-\rho_{i_{1}})\zeta_{p}\right)}.
\end{align*}
\normalsize
If $\rho_{1:G_{D}}$ is a uniform grid, then $\rho_{i_{2}}-\rho_{i_{1}}$ is constant whenever $i_{2}-i_{1}$ is constant, i.e., along every diagonal of $\mathbf{Q}_{B}$. Hence, $\mathbf{Q}_{B}$ is a Toeplitz matrix. Also, using $i_{1}=i_{2}$, we obtain all the main diagonal elements of $\mathbf{Q}_{B}$ as $P$. Similarly, any column of matrix $\mathbf{C}$ is given by \small$\mathbf{c}(\phi_{i})=[\exp{(j \frac{\pi A}{\lambda}\sin{\phi_{i}}(\alpha_{1}+\beta_{1}))},\hdots,\exp{(j \frac{\pi A}{\lambda}\sin{\phi_{i}}(\alpha_{N_{T}}+\beta_{N_{R}}))}]^{T}$.\normalsize\\ Hence, under the uniform grid assumption, $\mathbf{Q}_{C}$ can also be proved to be a Toeplitz matrix with all main diagonal elements as $N_{T}N_{R}$ following similar arguments.

\section{Proof of Theorem~\ref{theorem:coherence}}
\label{App-thm-coherence}
In the following, Section~\ref{subsec:preliminary} provides some preliminary results, including suitable bounds on $\Gamma_{B}$ and $\Gamma_{C}$. The bound on coherence $\mu$ is then derived in Section~\ref{subsec:mu proof}.

\subsection{Preliminaries}\label{subsec:preliminary}
\begin{lemma}\label{lemma:ccdf B}
    Consider a radar transmitting $P$ chirps randomly over a total CPI comprising of $P_{max}$ chirps duration. Let the chirps' indices $\zeta_{1:P}\in \{0,1,\hdots,P_{max}-1\}$ be distributed as $\mathcal{P}_{p}$. which is symmetrical about $(P_{max}-1)/2$. If $\mathcal{P}_{p}$ and uniform grid $\rho_{1:G_{D}}$ satisfy \eqref{eqn:coherence v condition}, then
    \par\noindent\small
    \begin{align*}
    \mathbb{P}\left(\frac{1}{P}|\mathbf{B}_{:,1}^{H}\mathbf{B}_{:,i}|>\vartheta\right)=e^{-\vartheta^{2}P},
    \end{align*}
    \normalsize
    where $0<\vartheta<1$ and $i=2,3,\hdots,G_{D}$.
\end{lemma}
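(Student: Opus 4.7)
The plan is to rewrite the inner product as a sum of random complex exponentials, use the two characteristic-function vanishing conditions in \eqref{eqn:coherence v condition} to identify the sum's real and imaginary parts as zero-mean, uncorrelated, variance-$1/2$ random variables, invoke a bivariate central limit theorem in $P$ to deduce that $|\mathbf{B}_{:,1}^{H}\mathbf{B}_{:,i}|^{2}/P$ is (asymptotically) exponentially distributed, and finally read off the tail probability.

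First I would write, using the explicit column formula for $\mathbf{B}$ recalled in Section~\ref{sec:guarantees},
\[
    \mathbf{B}_{:,1}^{H}\mathbf{B}_{:,i}=\sum_{p=1}^{P}\exp\bigl(j u^{D}_{1,i}\zeta_{p}\bigr)=\sum_{p=1}^{P}X_{p}+j\sum_{p=1}^{P}Y_{p},
\]
with $X_{p}\doteq\cos(u^{D}_{1,i}\zeta_{p})$ and $Y_{p}\doteq\sin(u^{D}_{1,i}\zeta_{p})$. Marginally each $\zeta_{p}\sim\mathcal{P}_{p}$, and since the chirp set $\mathcal{P}_{s}$ is small compared with $P_{max}$ I treat the $\zeta_{p}$ as approximately i.i.d.; the without-replacement correction is lower order and can be absorbed into the CLT step.

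Next I would exploit \eqref{eqn:coherence v condition}. The assumption $\Psi_{p}(u^{D}_{1,i})=0$ splits into its real and imaginary parts to give $\mathbb{E}[X_{p}]=\mathbb{E}[Y_{p}]=0$. The assumption $\Psi_{p}(2u^{D}_{1,i})=0$, together with the double-angle identities $\cos(2u\zeta_{p})=1-2\sin^{2}(u\zeta_{p})=2\cos^{2}(u\zeta_{p})-1$ and $\sin(2u\zeta_{p})=2X_{p}Y_{p}$, yields $\mathbb{E}[X_{p}^{2}]=\mathbb{E}[Y_{p}^{2}]=1/2$ and $\mathbb{E}[X_{p}Y_{p}]=0$. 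Thus $(X_{p},Y_{p})$ are centred, uncorrelated, and each has variance $1/2$.

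Then I apply the bivariate CLT: for large $P$, the pair $\bigl(\sum_{p}X_{p},\sum_{p}Y_{p}\bigr)/\sqrt{P/2}$ converges in distribution to a standard bivariate Gaussian (the off-diagonal covariance vanishing because of $\mathbb{E}[X_{p}Y_{p}]=0$, which in the Gaussian limit upgrades to independence). Consequently,
\[
    \frac{2}{P}\bigl|\mathbf{B}_{:,1}^{H}\mathbf{B}_{:,i}\bigr|^{2}=\frac{(\sum_{p}X_{p})^{2}+(\sum_{p}Y_{p})^{2}}{P/2}\;\xrightarrow{d}\;\chi^{2}_{2},
\]
which is an exponential distribution with mean $2$. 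Writing the event in terms of this statistic,
\[
    \mathbb{P}\!\left(\tfrac{1}{P}\bigl|\mathbf{B}_{:,1}^{H}\mathbf{B}_{:,i}\bigr|>\vartheta\right)=\mathbb{P}\!\left(\tfrac{2}{P}\bigl|\mathbf{B}_{:,1}^{H}\mathbf{B}_{:,i}\bigr|^{2}>2P\vartheta^{2}\right)\approx e^{-P\vartheta^{2}},
\]
which is the stated identity.

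The main obstacle is the justification of the Gaussian/exponential approximation: strictly speaking, the claimed equality is an asymptotic statement valid for $P$ sufficiently large (consistent with the ``sufficiently large measurements'' caveat noted after Theorem~\ref{theorem:uniform}), and some care is needed to argue that the sampling without replacement of $\zeta_{1:P}$ from $\{0,\ldots,P_{max}-1\}$ does not spoil the CLT. A clean way around this is to appeal to the standard CLT for simple random sampling (Hájek-type) under the mild condition $P/P_{max}$ bounded away from $1$, using the variances computed above; the resulting limit is identical because the sampling fraction correction only rescales the variance by $1+o(1)$. Once this step is accepted, the remainder is a direct one-line computation.
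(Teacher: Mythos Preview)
Your proposal is correct and follows the same route as the paper: both argue via a bivariate CLT that the real and imaginary parts of the normalized inner product are asymptotically i.i.d.\ $\mathcal{N}(0,1/2P)$, so that the magnitude is Rayleigh with scale $\sigma^{2}=1/2P$ and the tail probability is $e^{-\vartheta^{2}P}$. The only difference is executional: the paper first centers the chirp indices to $\zeta'_{p}=\zeta_{p}-(P_{max}-1)/2$ (this is where the symmetry hypothesis enters) and then invokes Lo's classical result on random-array patterns to obtain the limiting Gaussian and its covariance, whereas you compute the first and second moments directly from $\Psi_{p}(u^{D}_{1,i})=\Psi_{p}(2u^{D}_{1,i})=0$ via the double-angle identities. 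Your version is slightly more elementary and makes the role of the second characteristic-function condition explicit (and in fact does not need the symmetry assumption at all); the paper's version outsources the CLT bookkeeping to the cited reference. Your candid treatment of the asymptotic nature of the equality and the without-replacement issue is also in line with the paper's own caveat after Theorem~\ref{theorem:uniform}.
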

\begin{proof}
    Define $\zeta'_{p}=\zeta_{p}-\frac{P_{max}-1}{2}$. Substituting this in \eqref{eqn:gamma b}, we obtain $\Gamma_{B}(u)=\exp{(ju\frac{P_{max}-1}{2})}\Gamma'_{B}(u)$ where $\Gamma'_{B}(u)=\frac{1}{P}\sum_{p=1}^{P}\exp{(ju\zeta'_{p})}$. Now, since $\zeta_{p}\sim\mathcal{P}_{p}(\cdot)$, the random variable  $\zeta'_{p} $ has an even distribution because $\mathcal{P}_{p}(\cdot)$ is symmetric about $(P_{max}-1)/2$. Consequently, $\Gamma'_{B}(u)$ is asymptotically jointly Gaussian distributed \cite[Sec.~II]{lo1964mathematical} as
    \par\noindent\small
    \begin{align*}
    \begin{bmatrix}
            \textrm{Re}\;\Gamma'_{B}\\
            \textrm{Im}\;\Gamma'_{B}
        \end{bmatrix}\sim\mathcal{N}\left(\begin{bmatrix}
            \textrm{Re}\;\Psi_{p'}\\
            \textrm{Im}\;\Psi_{p'}
        \end{bmatrix},\begin{bmatrix}
            \sigma_{1}^{2}&0\\0&\sigma_{2}^{2}
        \end{bmatrix}\right),
    \end{align*}
    \normalsize
    where $\sigma_{1}^{2}(u)=\frac{1}{2P}(1+\Psi_{p'}(2u))-\frac{1}{P}\Psi_{p'}(u)^{2}$ and $\sigma_{2}^{2}(u)=\frac{1}{2P}(1-\Psi_{p'}(2u))$ with $\Psi_{p'}(\cdot)$ denoting the characteristic function given by $\Psi_{p'}(u)=\mathbb{E}_{\zeta\sim\mathcal{P}_{p}}[\exp{(ju(\zeta-\frac{P_{max}-1}{2}))}]=\exp{(-ju\frac{P_{max}-1}{2})}\Psi_{p}(u)$. Further using \eqref{eqn:coherence v condition}, we obtain $\Psi_{p'}(u^{D}_{1,i})=\Psi_{p'}(2u^{D}_{1,i})=0$ for $i=2,3,\hdots,G_{D}$. Hence, for $u=u^{D}_{1,i}$, $\Gamma'_{B}$ is Gaussian distributed as
    \par\noindent\small
    \begin{align*}
    \begin{bmatrix}
            \textrm{Re}\;\Gamma'_{B}\\
            \textrm{Im}\;\Gamma'_{B}
        \end{bmatrix}\sim\mathcal{N}\left(\begin{bmatrix}
            0\\
            0
        \end{bmatrix},\begin{bmatrix}
            1/2P&0\\0&1/2P
        \end{bmatrix}\right),
    \end{align*}
    \normalsize
    which implies $|\Gamma'_{B}|$ is Rayleigh distributed with $\sigma^{2}=1/2P$.

    Now, $|\Gamma_{B}(u)|=|\exp{(ju\frac{P_{max}-1}{2})}|\times |\Gamma'_{B}(u)|=|\Gamma'_{B}(u)|$. Therefore, for $u=u^{D}_{1,i}$, $|\Gamma_{B}|$ is also Rayleigh distributed with $\sigma^{2}=1/2P$. Finally, using \eqref{eqn:gamma b} and $\|\mathbf{B}_{:,i}\|_{2}=\sqrt{P}$, we have
    \par\noindent\small
    \begin{align*}
    \mathbb{P}\left(\frac{1}{P}|\mathbf{B}_{:,1}^{H}\mathbf{B}_{:,i}|>\vartheta\right)=\mathbb{P}(|\Gamma_{B}(u^{D}_{1,i})|>\vartheta)=e^{-\vartheta^{2}P},
    \end{align*}
    \normalsize
    where the last equality follows from the complementary cdf of a Rayleigh distribution.
\end{proof}
\begin{lemma}\label{lemma:ccdf C}
Consider a MIMO radar with random SLA as described in Section~\ref{sec:system model} (of the main paper) with $\alpha_{1:N_{T}}$ and $\beta_{1:N_{R}}$ drawn i.i.d. from even distributions $\mathcal{P}_{\alpha}$ and $\mathcal{P}_{\beta}$, respectively. If $\mathcal{P}_{\alpha}$, $\mathcal{P}_{\beta}$ and the uniform grid $\phi_{1:G_{\theta}}$ satisfy \eqref{eqn:coherence alpha beta condition}, then for $0<\vartheta<1$ and $i=2,3,\hdots,G_{\theta}$,\\
\textbf{1)} If $\alpha_{1:N_{T}}$ and $\beta_{1:N_{R}}$ are independent, we have:
\par\noindent\small
    \begin{align*}
    \mathbb{P}\left(\frac{1}{N_{T}N_{R}}|\mathbf{C}_{:,1}^{H}\mathbf{C}_{:,i}|>\vartheta\right)<2\vartheta\sqrt{N_{T}N_{R}}\mathcal{K}_{1}(2\vartheta\sqrt{N_{T}N_{R}}),
    \end{align*}
    \normalsize
    where $\mathcal{K}_{1}(\cdot)$ is the modified Bessel function of second kind.\\
\textbf{2)} If $N_{T}=N_{R}$ and $\alpha_{n}=\beta_{n}$ for all $1\leq n\leq N_{T}$,
\par\noindent\small
    \begin{align*}
    \mathbb{P}\left(\frac{1}{N_{T}^{2}}|\mathbf{C}_{:,1}^{H}\mathbf{C}_{:,i}|>\vartheta\right)=e^{-N_{T}\vartheta}.
    \end{align*}
    \normalsize
\end{lemma}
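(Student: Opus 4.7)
The plan is to exploit the separability of the kernel: because $\exp(ju^{\theta}_{1,i}(\alpha_{n}+\beta_{m}))=\exp(ju^{\theta}_{1,i}\alpha_{n})\exp(ju^{\theta}_{1,i}\beta_{m})$, the double sum in \eqref{eqn:gamma c} factors as $\Gamma_{C}(u^{\theta}_{1,i})=\Gamma_{\alpha}(u^{\theta}_{1,i})\,\Gamma_{\beta}(u^{\theta}_{1,i})$, with $\Gamma_{\alpha}(u)\doteq\frac{1}{N_{T}}\sum_{n}e^{ju\alpha_{n}}$ and $\Gamma_{\beta}(u)\doteq\frac{1}{N_{R}}\sum_{m}e^{ju\beta_{m}}$. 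Together with $\|\mathbf{C}_{:,i}\|_{2}=\sqrt{N_{T}N_{R}}$ (immediate from unit-modulus entries), this reduces the two assertions of the lemma to studying a product (Part~1) or a square (Part~2) of single-array magnitudes.

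First I would mirror the CLT/characteristic-function step used in the proof of Lemma~\ref{lemma:ccdf B}. Since $\mathcal{P}_{\alpha}$ is even, $\Psi_{\alpha}$ is real, and condition \eqref{eqn:coherence alpha beta condition} forces $\Psi_{\alpha}(u^{\theta}_{1,i})=\Psi_{\alpha}(2u^{\theta}_{1,i})=0$. The asymptotic result of \cite{lo1964mathematical} then gives that the real and imaginary parts of $\Gamma_{\alpha}(u^{\theta}_{1,i})$ are jointly zero-mean Gaussian, each with variance $1/(2N_{T})$ and mutually uncorrelated; hence $|\Gamma_{\alpha}(u^{\theta}_{1,i})|$ is Rayleigh with scale $\sigma_{\alpha}=1/\sqrt{2N_{T}}$. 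By the analogous argument $|\Gamma_{\beta}(u^{\theta}_{1,i})|$ is Rayleigh with scale $\sigma_{\beta}=1/\sqrt{2N_{R}}$, and the independence of $\{\alpha_{n}\}$ and $\{\beta_{m}\}$ carries over to the two magnitudes.

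For Part~2, $\alpha_{n}=\beta_{n}$ collapses $\Gamma_{C}$ to $\Gamma_{\alpha}^{2}$, so $|\Gamma_{C}|=|\Gamma_{\alpha}|^{2}$; the square of a Rayleigh variable with scale $\sigma$ being exponential with mean $2\sigma^{2}$, I immediately obtain $\mathbb{P}(|\Gamma_{C}|>\vartheta)=e^{-N_{T}\vartheta}$. For Part~1, $|\Gamma_{C}|$ is a product of two independent Rayleighs, whose tail I would compute by conditioning on $|\Gamma_{\alpha}|=s$:
\[
\mathbb{P}(|\Gamma_{C}|>\vartheta)=\int_{0}^{\infty}e^{-\vartheta^{2}/(2\sigma_{\beta}^{2}s^{2})}\,\frac{s}{\sigma_{\alpha}^{2}}e^{-s^{2}/(2\sigma_{\alpha}^{2})}\,ds,
\]
substitute $t=s^{2}/(2\sigma_{\alpha}^{2})$ to reduce this to $\int_{0}^{\infty}e^{-a/t-t}\,dt$ with $a=\vartheta^{2}/(4\sigma_{\alpha}^{2}\sigma_{\beta}^{2})$, and invoke the Macdonald-function identity $\int_{0}^{\infty}e^{-a/t-t}\,dt=2\sqrt{a}\,\mathcal{K}_{1}(2\sqrt{a})$. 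With $\sigma_{\alpha}\sigma_{\beta}=1/(2\sqrt{N_{T}N_{R}})$, the RHS collapses to $2\vartheta\sqrt{N_{T}N_{R}}\,\mathcal{K}_{1}(2\vartheta\sqrt{N_{T}N_{R}})$, as claimed.

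The main obstacle is the asymptotic nature of the Rayleigh limit from \cite{lo1964mathematical}: the equality in law is approached rather than attained, which is why Part~1 is stated with strict inequality (the Rayleigh approximation is an upper bound on the true tail for the regime of moderate-to-large $N_{T},N_{R}$ that the paper targets). A secondary concern is justifying independence of $|\Gamma_{\alpha}|$ and $|\Gamma_{\beta}|$ in Part~1 given the i.i.d.\ sampling of $\{\alpha_{n}\}$ and $\{\beta_{m}\}$ from \emph{independent} distributions $\mathcal{P}_{\alpha}$ and $\mathcal{P}_{\beta}$; this is immediate but worth stating explicitly to distinguish the two cases cleanly.
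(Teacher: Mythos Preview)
Your proposal is correct and is essentially the explicit version of the paper's proof, which simply states that the lemma follows ``trivially by generalizing the results of \cite[Theorem~1]{rossi2013spatial} to the MIMO-FMCW radar array.'' The factorization $\Gamma_{C}=\Gamma_{\alpha}\Gamma_{\beta}$, the Rayleigh limit via \cite{lo1964mathematical} under condition \eqref{eqn:coherence alpha beta condition}, and the Macdonald-function integral for the tail of a product of two independent Rayleighs are precisely the ingredients behind that cited result, so you have spelled out what the paper leaves to a citation.
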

\begin{proof}
    The lemma is obtained trivially by generalizing the results of \cite[Theorem~1]{rossi2013spatial} to the MIMO-FMCW radar array.
\end{proof}

\subsection{Proof of the theorem}\label{subsec:mu proof}
As a consequence of Lemma~\ref{lemma:toeplitz} and substituting $\|\mathbf{B}_{:,i}\|_{2}=\sqrt{P}$ and $\|\mathbf{C}_{:,j}\|=\sqrt{N_{T}N_{R}}$ in \eqref{eqn:mu expansion}, we obtain
\par\noindent\small
\begin{align*}
    \mu\leq\textrm{max}\lbrace\textrm{max}_{i>1}\frac{1}{P}|\mathbf{B}_{:,1}^{H}\mathbf{B}_{:,i}|,\textrm{max}_{j>1}\frac{1}{N_{T}N_{R}}|\mathbf{C}_{:,1}^{H}\mathbf{C}_{:,j}|\rbrace.
\end{align*}
\normalsize
Hence, we can bound $\mathbb{P}(\mu\leq\vartheta)$ as
\par\noindent\small
\begin{align*}
    &\mathbb{P}(\mu\leq\vartheta)\\
    &\geq\mathbb{P}\left(\textrm{max}\lbrace\textrm{max}_{i>1}\frac{1}{P}|\mathbf{B}_{:,1}^{H}\mathbf{B}_{:,i}|,\textrm{max}_{j>1}\frac{1}{N_{T}N_{R}}|\mathbf{C}_{:,1}^{H}\mathbf{C}_{:,j}|\rbrace\leq\vartheta\right)\\
    &=\mathbb{P}\left(\textrm{max}_{i>1}\frac{1}{P}|\mathbf{B}_{:,1}^{H}\mathbf{B}_{:,i}|\leq\vartheta\right)\mathbb{P}\left(\textrm{max}_{j>1}\frac{1}{N_{T}N_{R}}|\mathbf{C}_{:,1}^{H}\mathbf{C}_{:,j}|\leq\vartheta\right)\\
    &=\mathbb{P}\left(\frac{1}{P}|\mathbf{B}_{:,1}^{H}\mathbf{B}_{:,i}|\leq\vartheta\;\forall\; i>1\right)\mathbb{P}\left(\frac{1}{N_{T}N_{R}}|\mathbf{C}_{:,1}^{H}\mathbf{C}_{:,j}|\leq\vartheta\;\forall\; j>1\right)\\
    &=\mathbb{P}\left(\frac{1}{P}|\mathbf{B}_{:,1}^{H}\mathbf{B}_{:,i}|\leq\vartheta\right)^{G_{D}-1}\mathbb{P}\left(\frac{1}{N_{T}N_{R}}|\mathbf{C}_{:,1}^{H}\mathbf{C}_{:,j}|\leq\vartheta\right)^{G_{\theta}-1},
\end{align*}
\normalsize
because $\mathbf{B}_{:,1}^{H}\mathbf{B}_{:,i}$ and $\mathbf{C}_{:,1}^{H}\mathbf{C}_{:,j}$ are mutually independent and identically distributed for all $i=2,3,\hdots,G_{D}$ and $j=2,3,\hdots,G_{\theta}$, respectively. Finally, using $\mathbb{P}(\mu>\vartheta)=1-\mathbb{P}(\mu\leq\vartheta)$ and Lemmas~\ref{lemma:ccdf B}-\ref{lemma:ccdf C}, we obtain \eqref{eqn:coherence bound independent case} and \eqref{eqn:coherence transceiver case}, respectively, when $\alpha_{1:N_{T}}$ and $\beta_{1:N_{R}}$ are independent and $\alpha_{n}=\beta_{n}\;\forall\;n$.

\section{Proof of Theorem~\ref{theorem:uniform}}
\label{App-thm-uniform}
According to \cite[Theorem~2.7]{rauhut2010compressive}, any $K$-sparse signal can be recovered if the measurement matrix $\mathbf{D}$ has RIP constant $\delta_{2K}<2/(3+\sqrt{7/4})\doteq\Lambda$. Hence, we need to bound $\delta_{2K}$ with a probability higher than $1-\epsilon$, or equivalently, $\mathbb{P}(\delta_{2K}>\Lambda)<\epsilon$ where $0<\epsilon<1$. However, RIP constant $\delta_{2K}$ and coherence $\mu$ of a matrix satisfy $\delta_{2K}\leq(2K-1)\mu$ \cite{rauhut2010compressive}. Hence, $\mathbb{P}(\delta_{2K}>\Lambda) \leq \mathbb{P}(\mu>\Lambda/(2K-1))$.

We first consider the case when $\alpha_{1:N_{T}}$ and $\beta_{1:N_{R}}$ are independent. Using \eqref{eqn:coherence bound independent case}, we obtain
\par\noindent\small
\begin{align}
\mathbb{P}(\delta_{2K}>\Lambda)<1-&(1-e^{-\vartheta_{0}^{2} P})^{G_{D}-1}\nonumber\\
&\times(1-2\vartheta_{0}\sqrt{N_{T}N_{R}}\;\mathcal{K}_{1}(2\vartheta_{0}\sqrt{N_{T}N_{R}}))^{G_{\theta}-1},\label{eqn:delta rhs}
\end{align}
\normalsize
where $\vartheta_{0}=\Lambda/(2K-1)$. Now, we need to find the number of chirps $P$ and antenna elements $N_{T}N_{R}$ such that the R.H.S. of \eqref{eqn:delta rhs} is less or equal to $\epsilon$. To this end, we approximate the Bessel function as $\mathcal{K}_{1}(q)\approx\sqrt{\frac{\pi}{2q}}\exp{(-q)}$ \cite{watson1922treatise} such that
\par\noindent\small
\begin{align*}
&1-(1-e^{-\vartheta_{0}^{2}P})^{G_{D}-1}(1-2\vartheta_{0}\sqrt{N_{T}N_{R}}\;\mathcal{K}_{1}(2\vartheta_{0}\sqrt{N_{T}N_{R}}))^{G_{\theta}-1}\\
&\approx 1-(1-e^{-\vartheta_{0}^{2}P})^{G_{D}-1}(1-(\pi^{2}\vartheta_{0}^{2}N_{T}N_{R})^{1/4}e^{-2\vartheta_{0}\sqrt{N_{T}N_{R}}})^{G_{\theta}-1}\\
&\approx 1-(1-G_{D}e^{-\vartheta_{0}^{2}P})(1-G_{\theta}(\pi^{2}\vartheta_{0}^{2}N_{T}N_{R})^{1/4}e^{-2\vartheta_{0}\sqrt{N_{T}N_{R}}}),
\end{align*}
\normalsize
where the last approximation is obtained using $(1-x)^{n}\approx 1-nx$ and $G_{\theta}, G_{D}\gg 1$. Hence, R.H.S. of \eqref{eqn:delta rhs} $\leq\epsilon$ implies
\par\noindent\small
\begin{align*}
(1-G_{D}e^{-\vartheta_{0}^{2}P})(1-G_{\theta}(\pi^{2}\vartheta_{0}^{2}N_{T}N_{R})^{1/4}e^{-2\vartheta_{0}\sqrt{N_{T}N_{R}}})\geq 1-\epsilon.
\end{align*}
\normalsize
Expanding L.H.S. with $G_{D}G_{\theta}(\pi^{2}\vartheta_{0}^{2}N_{T}N_{R})^{1/4}\exp{(-\vartheta_{0}^{2}P)}\\\times\exp{(-2\vartheta_{0}\sqrt{N_{T}N_{R}})}\approx 0$, we obtain
\par\noindent\small
\begin{align*}
G_{D}\exp{(-\vartheta_{0}^{2}P)}+G_{\theta}(\pi^{2}\vartheta_{0}^{2}N_{T}N_{R})^{1/4}\exp{(-2\vartheta_{0}\sqrt{N_{T}N_{R}})}\leq\epsilon,
\end{align*}
\normalsize
which is satisfied if $G_{D}\exp{(-\vartheta_{0}^{2}P)}\leq\epsilon_{1}$ and $G_{\theta}(\pi^{2}\vartheta_{0}^{2}N_{T}N_{R})^{1/4}\exp{(-2\vartheta_{0}\sqrt{N_{T}N_{R}})}\leq\epsilon_{2}$ with $\epsilon_{1}+\epsilon_{2}=\epsilon$.

Now, rearranging $G_{D}\exp{(-\vartheta_{0}^{2}P)}\leq\epsilon_{1}$ and substituting $\vartheta_{0}=\Lambda/(2K-1)$, we obtain $P\geq \frac{4}{\Lambda^{2}}\left(K-\frac{1}{2}\right)^{2}\log\left(\frac{G_{D}}{\epsilon_{1}}\right)$, which is \eqref{eqn:uniform P bound}. On the other hand, rearranging $G_{\theta}(\pi^{2}\vartheta_{0}^{2}N_{T}N_{R})^{1/4}\exp{(-2\vartheta_{0}\sqrt{N_{T}N_{R}})}\leq\epsilon_{2}$, we obtain
\par\noindent\small
\begin{align*}
-4\vartheta_{0}\sqrt{N_{T}N_{R}}\exp{(-4\vartheta_{0}\sqrt{N_{T}N_{R}})}\geq -\left(\frac{2\epsilon_{2}}{G_{\theta}\sqrt{\pi}}\right)^{2},
\end{align*}
\normalsize
which can be solved using Lambert W function\cite{corless1996lambert} as in the proof of \cite[Theorem~2]{rossi2013spatial} to obtain \eqref{eqn:uniform TR bound}. The constants $\kappa_{1}$ and $\kappa_{2}$ in \eqref{eqn:uniform P bound}-\eqref{eqn:uniform TR bound} are obtained by substituting $\Lambda=2/(3+\sqrt{7/4})$.

For the case when $\alpha_{n}=\beta_{n}$, we have $\mathbb{P}(\delta_{2K}>\Lambda)<1-(1-e^{-\vartheta_{0}^{2}P})^{G_{D}-1}(1-e^{-N_{T}\vartheta_{0}})^{G_{\theta}-1}$ using \eqref{eqn:coherence transceiver case} with $\vartheta_{0}=\Lambda/(2K-1)$. Again, using $(1-x)^{n}\approx 1-nx$ and $G_{D}, G_{\theta}\gg 1$ and following similar steps as in the independent $\alpha_{1:N_{T}}$ and $\beta_{1:N_{R}}$ case, we obtain $G_{D}e^{-\vartheta_{0}^{2}P}+G_{\theta}e^{-N_{T}\vartheta_{0}}\leq\epsilon$, which is satisfied if $G_{D}e^{-\vartheta_{0}^{2}P}\leq\epsilon_{1}$ and $G_{\theta}e^{-N_{T}\vartheta_{0}}\leq\epsilon_{2}$ with $\epsilon_{1}+\epsilon_{2}=\epsilon$. Rearranging these inequalities, we obtain \eqref{eqn:uniform P bound} and \eqref{eqn:uniform T bound}, respectively. Finally, the claim of the theorem then follows from \cite[Theorem~2.7]{rauhut2010compressive} considering the exact $K$-sparse signal case.

\section{Proof of Theorem~\ref{theorem:isotropy}}
\label{App-thm-isotropy}
\textbf{If part:} Using $\mathbf{D}_{N_{T}N_{R}(i-1)+j,:}=\mathbf{B}_{i,:}\otimes\mathbf{C}_{j,:}$, we obtain $\mathbb{E}[\mathbf{D}_{l,:}^{H}\mathbf{D}_{l,:}]=\mathbb{E}[\mathbf{B}_{i,:}^{H}\mathbf{B}_{i,:}]\otimes\mathbb{E}[\mathbf{C}_{j,:}^{H}\mathbf{C}_{j,:}]$ for some $1\leq i\leq P$ and $1\leq j\leq N_{T}N_{R}$ because $\zeta_{1:P}$ is independent of $\alpha_{1:N_{T}}$ and $\beta_{1:N_{R}}$. We first consider $[\mathbf{B}_{i,:}^{H}\mathbf{B}_{i,:}]$. Here, $\zeta_{1:P}$ are identically distributed such that $\mathbb{E}[\mathbf{B}_{i,:}^{H}\mathbf{B}_{i,:}]$ does not depend on row index $i$ and $\mathbb{E}[\mathbf{B}_{i,:}^{H}\mathbf{B}_{i,:}]=\frac{1}{P}\sum_{i=1}^{P}\mathbb{E}[\mathbf{B}_{i,:}^{H}\mathbf{B}_{i,:}]$. But, by simple comparison of matrix elements, we can show that $\sum_{i=1}^{P}\mathbb{E}[\mathbf{B}_{i,:}^{H}\mathbf{B}_{i,:}]=\mathbf{Q}_{B}$. Hence, $\mathbb{E}[\mathbf{B}_{i,:}^{H}\mathbf{B}_{i,:}]=\widetilde{\mathbf{Q}}_{B}$ where $\widetilde{\mathbf{Q}}_{B}=(1/P)\mathbb{E}[\mathbf{Q}_{B}]$. As a consequence of Lemma~\ref{lemma:toeplitz}, we only consider the first row of $\widetilde{\mathbf{Q}}_{B}$. Note that $(1,i)$-th element of $\mathbf{Q}_{B}$ is $\mathbf{B}_{:,1}^{H}\mathbf{B}_{:,i}=P\Gamma_{B}(u^{D}_{1,i})$ using \eqref{eqn:gamma b}. Hence, for $1\leq i\leq G_{D}$,
\par\noindent\small
\begin{align*}
[\widetilde{\mathbf{Q}}_{B}]_{1,i}&=\mathbb{E}[\Gamma_{B}(u^{D}_{1,i})]=\frac{1}{P}\sum_{p=1}^{P}\mathbb{E}[\exp{(ju^{D}_{1,i}\zeta_{p})}]\\
&=\frac{1}{P}\sum_{p=1}^{P}\mathbb{E}[\exp{(ju^{D}_{1,i}\zeta)}]=\Psi_{p}(u^{D}_{1,i})
\end{align*}
\normalsize
where the second-last equality follows because $\zeta_{p}$ are i.i.d. for all $1\leq p\leq P$. Finally, if $\Psi_{p}(u^{D}_{1,i})=0$ for $i=2,3,\hdots,G_{D}$ according to \eqref{eqn:condition for isotropy}, then $\mathbb{E}[\mathbf{B}_{i,:}^{H}\mathbf{B}_{i,:}]=\widetilde{\mathbf{Q}}_{B}=\mathbf{I}_{G_{D}}$. Note that $[\widetilde{\mathbf{Q}}_{B}]_{1,1}=1$ because all main diagonal elements of $\mathbf{Q}_{B}$ is $P$ from Lemma~\ref{lemma:toeplitz}. Further, following similar steps as in proof of \cite[Theorem~3]{rossi2013spatial}, we can trivially show that $\mathbb{E}[\mathbf{C}_{j,:}^{H}\mathbf{C}_{j,:}]=\mathbf{I}_{G_{\theta}}$ if $\Psi_{\xi}(u^{\theta}_{1,j})=0$ for $j=2,3,\hdots,G_{\theta}$ as in \eqref{eqn:condition for isotropy}. Finally, $\mathbb{E}[\mathbf{D}_{l,:}^{H}\mathbf{D}_{l,:}]=\mathbb{E}[\mathbf{B}_{i,:}^{H}\mathbf{B}_{i,:}]\otimes\mathbb{E}[\mathbf{C}_{j,:}^{H}\mathbf{C}_{j,:}]=\mathbf{I}_{G_{D}}\otimes\mathbf{I}_{G_{\theta}}=\mathbf{I}_{G_{D}G_{\theta}}$, i.e., if \eqref{eqn:condition for isotropy} holds, matrix $\mathbf{D}$ satisfies isotropy property.

\textbf{Only if part:} If $\Psi_{p}(u^{D}_{1,i})\neq 0$ for some $i=2,3,\hdots,G_{D}$, then $[\widetilde{\mathbf{Q}}_{B}]_{1,i}\neq 0$. Hence, $\mathbb{E}[\mathbf{B}_{i,:}^{H}\mathbf{B}_{i,:}]\neq \mathbf{I}$ which implies $\mathbb{E}[\mathbf{D}_{l,:}^{H}\mathbf{D}_{l,:}]\neq\mathbf{I}$, i.e., matrix $\mathbf{D}$ does not satisfy isotropy property. Similarly, matrix $\mathbf{D}$ also does not satisfy isotropy property if $\Psi_{\xi}(u^{\theta}_{1,j})\neq 0$ for some $j=2,3,\hdots,G_{\theta}$. Therefore, \eqref{eqn:condition for isotropy} is a necessary condition.

\section{Proof of Proposition~\ref{prop:example}}
\label{App-prop-example}
Note that $\alpha_{1:N_{T}},\beta_{1:N_{R}}\in[-1/2,1/2]$ when $A_{T}=A_{R}=A/2$ and the characteristic function of uniform distributions $\mathcal{P}_{\alpha}$ and $\mathcal{P}_{\beta}$ are $\Psi_{\alpha}(u)=\Psi_{\beta}(u)=\frac{\sin(u/2)}{u/2}$. From \eqref{eqn:coherence alpha beta condition}, we need $\Psi_{\alpha}(u^{\theta}_{i,j})=\Psi_{\alpha}(2u^{\theta}_{i,j})=0$ and $\Psi_{\beta}(u^{\theta}_{i,j})=\Psi_{\beta}(2u^{\theta}_{i,j})=0$ for $i\neq j$ which can be satisfied if $u^{\theta}_{i,j}/2=k\pi$ for some integer $k$. But, by definition, $\frac{u^{\theta}_{i,j}}{2}=\pi\frac{A}{2\lambda}(\sin{\phi_{j}}-\sin{\phi_{i}})$. Hence, condition \eqref{eqn:coherence alpha beta condition} is satisfied if $\phi_{1:G_{\theta}}$ is a uniformly spaced grid of spacing $2\lambda/A$ in the $\sin{\theta}$ domain.
    
For discrete uniform distribution $\mathcal{P}_{p}$, we have
    \par\noindent\small
    \begin{align*}
        \Psi_{p}(u)=\frac{1-e^{jP_{max}u}}{P_{max}(1-e^{u})}.
    \end{align*}
    \normalsize
    We need $\Psi_{p}(u^{D}_{i,l})=\Psi_{p}(2u^{D}_{i,l})=0$ for $i\neq l$ which is satisfied if $P_{max}u^{D}_{i,l}=2k\pi$ for some integer $k$. But, $P_{max}u^{D}_{i,l}=2\pi\cdot\frac{2P_{max}T_{c}}{\lambda}(\rho_{l}-\rho_{i})$. Hence, \eqref{eqn:coherence v condition} is satisfied when $\rho_{1:G_{D}}$ is a uniform grid of spacing $\lambda/(2P_{max}T_{c})$.

\section{Effect of calibration errors on random SLA}
\label{App-calibration}
Throughout Section~\ref{sec:numericals}, both random SLA and ULA were assumed to be perfectly calibrated. In practice, however, calibration errors are unavoidable. Here, we examine the effect of imperfect transmitter and receiver calibration in random SLAs on the performance of the proposed methods. Specifically, we compare the sensitivity of OMP+Binary and OMP+Range-OMP against that of classical-DFT and MUSIC, evaluating their robustness to calibration errors in terms of detection rates and estimation accuracy. Recall from Table~\ref{tbl:methods} that both classical-DFT and MUSIC employ ULA, and target ranges are obtained via conventional DFT after coherent integration of all chirps and array channels. Velocities and AOAs are then estimated jointly using 2D-DFT in classical-DFT and 2D-MUSIC in the MUSIC method. In contrast, the proposed OMP-based methods rely on measurements from a random SLA. Target ranges are estimated via DFT-focusing with binary integration in OMP+Binary, and via Range-OMP in OMP+Range-OMP, while joint Doppler-angle estimation in both cases is performed using vectorized 1D-OMP.

\begin{figure}
\centering
\begin{subfigure}{0.35\textwidth}
    \includegraphics[width=\textwidth]{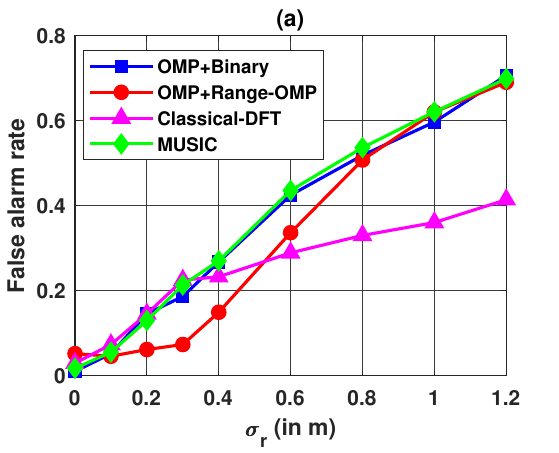}
\end{subfigure}
\hfill
\begin{subfigure}{0.35\textwidth}
    \includegraphics[width=\textwidth]{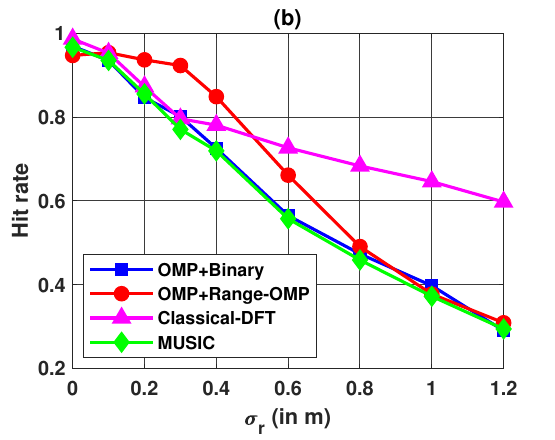}
\end{subfigure}
\caption{(a) False alarm, and (b) hit rates under varying $\sigma_{r}$ with $|\sigma_{\theta}|=2.5|\sigma_{r}|$ for OMP+Binary, OMP+Range-OMP, classical-DFT, and MUSIC methods.}
\label{fig:rates at cal err OMP DFT MUSIC}
\end{figure}
\begin{figure}
\centering
\begin{subfigure}{0.35\textwidth}
    \includegraphics[width=\textwidth]{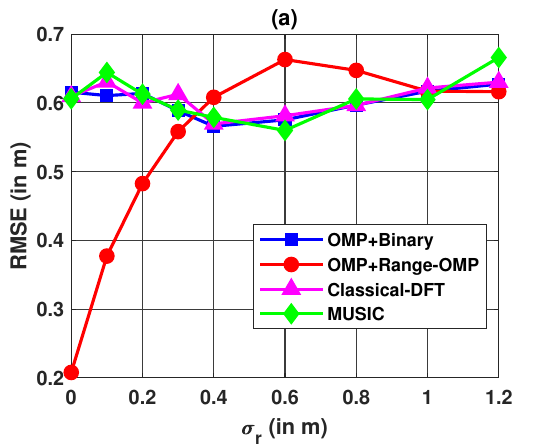}
\end{subfigure}
\hfill
\begin{subfigure}{0.35\textwidth}
    \includegraphics[width=\textwidth]{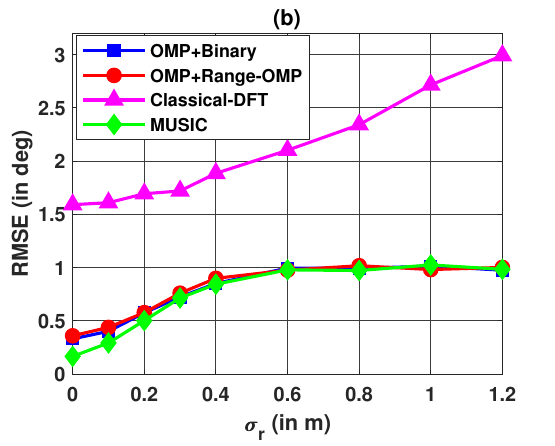}
\end{subfigure}
\caption{RMSE in (a) range and (b) AOA estimation under varying $\sigma_{r}$ with $|\sigma_{\theta}|=2.5|\sigma_{r}|$ for OMP+Binary, OMP+Range-OMP, classical-DFT, and MUSIC methods.}
\label{fig:rmse at cal err OMP DFT MUSIC}
\end{figure}

In order to simulate calibration imperfections, we introduce additional phase and delay errors into the received signal relative to a reference antenna element. For each target, i.i.d. phase errors $\Delta\theta_{n,m}$ are added to the received signal in \eqref{eqn:discrete IF signal}, with $\Delta\theta_{n,m}\sim\mathcal{N}(0,\sigma_{\theta}^{2})$. Similarly, i.i.d. delay errors $\Delta\tau^{k}$ are also added to $\tau_{n,m,p}^{k}$ in \eqref{eqn:discrete IF signal}, where $\Delta\tau^{k}$ is drawn from a zero-mean uniform distribution of variance $\sigma_{r}^{2}$, i.e., $\Delta\tau^{k}\sim\mathcal{U}[-\sqrt{3}\sigma_{r},\sqrt{3}\sigma_r]$. We vary $\sigma_{\theta}$ from $0.25^{\circ}$ to $3.0^{\circ}$, and $\sigma_{r}$ from $0.1$ m to $1.2$ m. We consider an SNR of $30$ dB and average the detection rates and RMSEs over $300$ independent runs. The false-alarm and hit rates of various methods under varying $\sigma_{\theta}$ and $\sigma_{r}$ are presented in Tables~\ref{tbl:false alarm} and \ref{tbl:hit alarm}, while the corresponding RMSEs in range and AOA estimation are presented in Tables~\ref{tbl:range rmse} and \ref{tbl:angle rmse}, respectively. For clarity of graphical illustration, we show the detection rates and estimation errors only for $|\sigma_{\theta}|=2.5|\sigma_{r}|$ where $\sigma_{r}\in[0,1.2]$ m. The results are illustrated in Fig.~\ref{fig:rates at cal err OMP DFT MUSIC} for false-alarm and hit rates, and in Fig.~\ref{fig:rmse at cal err OMP DFT MUSIC} for RMSEs under varying $\sigma_{r}$. Note that velocity estimation remains unaffected by these calibration errors, since no inter-chirp delay variations are considered within a CPI.
    \begin{table}
    \caption{False alarm rates under varying calibration errors for various methods}
    \label{tbl:false alarm}
    \centering
    \begin{tabular}{p{0.6cm}p{0.8cm}p{1.1cm}p{1.1cm}p{1.1cm}p{1.1cm}}
    \hline\noalign{\smallskip}
    \textbf{$\sigma_{\theta}$} & \textbf{$\sigma_{r}$} &\textbf{OMP +Binary} & \textbf{OMP +Range-OMP} & \textbf{Classical -DFT} & \textbf{MUSIC}\\
    \noalign{\smallskip}
    \hline
    \noalign{\smallskip}
    $0^{\circ}$ & $0$ m & \textbf{0.0108} & 0.0461 & 0.0297 & 0.0177\\
    $0.25^{\circ}$ & $0.1$ m & \textbf{0.0519} & 0.0523 & 0.0736 & 0.0556\\
    $0.25^{\circ}$ & $0.4$ m & 0.2348 & \textbf{0.0831} & 0.2400 & 0.2458\\
    $0.25^{\circ}$ & $0.8$ m & 0.3132 & \textbf{0.2837} & 0.3275 & 0.3301\\
    $0.25^{\circ}$ & $1.2$ m & \textbf{0.3766} & 0.4073 & 0.3837 & 0.3880\\
    $1^{\circ}$ & $0.1$ m & 0.1157 & 0.1091 & \textbf{0.0741} & 0.1064\\
    $1^{\circ}$ & $0.4$ m & 0.2678 & \textbf{0.1496} & 0.2331 & 0.2705\\
    $1^{\circ}$ & $0.8$ m & 0.3682 & \textbf{0.3254} & 0.3288 & 0.3564\\
    $1^{\circ}$ & $1.2$ m & 0.4138 & 0.4378 & \textbf{0.3862} & 0.4033\\
    $2^{\circ}$ & $0.1$ m & 0.3708 & 0.3548 & \textbf{0.0751} & 0.3861\\
    $2^{\circ}$ & $0.4$ m & 0.4878 & 0.3921 & \textbf{0.2596} & 0.4700\\
    $2^{\circ}$ & $0.8$ m & 0.5189 & 0.5078 & \textbf{0.3304} & 0.5369\\
    $2^{\circ}$ & $1.2$ m & 0.5826 & 0.5926 & \textbf{0.3698} & 0.5700\\
    $3^{\circ}$ & $0.1$ m & 0.5517 & 0.5078 & \textbf{0.1071} & 0.5346\\
    $3^{\circ}$ & $0.4$ m & 0.6481 & 0.5452 & \textbf{0.2457} & 0.6149\\
    $3^{\circ}$ & $0.8$ m & 0.6553 & 0.6278 & \textbf{0.3755} & 0.6521\\
    $3^{\circ}$ & $1.2$ m & 0.7067 & 0.6908 & \textbf{0.4144} & 0.6992\\
    \noalign{\smallskip}
    \hline\noalign{\smallskip}
    \end{tabular}
    \end{table}
    \begin{table}
    \caption{Hit rates under varying calibration errors for various methods}
    \label{tbl:hit alarm}
    \centering
    \begin{tabular}{p{0.6cm}p{0.8cm}p{1.1cm}p{1.1cm}p{1.1cm}p{1.1cm}}
    \hline\noalign{\smallskip}
    \textbf{$\sigma_{\theta}$} & \textbf{$\sigma_{r}$} &\textbf{OMP +Binary} & \textbf{OMP +Range-OMP} & \textbf{Classical -DFT} & \textbf{MUSIC}\\
    \noalign{\smallskip}
    \hline
    \noalign{\smallskip}
    $0^{\circ}$ & $0$ m & 0.9693 & 0.9537 & \textbf{0.9867} & 0.9667\\
    $0.25^{\circ}$ & $0.1$ m & 0.9353 & 0.9472 & \textbf{0.9527} & 0.9353\\
    $0.25^{\circ}$ & $0.4$ m & 0.7500 & \textbf{0.9140} & 0.7733 & 0.7440\\
    $0.25^{\circ}$ & $0.8$ m & 0.6773 & \textbf{0.7153} & 0.6867 & 0.6560\\
    $0.25^{\circ}$ & $1.2$ m & 0.6160 & 0.5880 & \textbf{0.6267} & 0.6073\\
    $1^{\circ}$ & $0.1$ m & 0.8667 & 0.8880 & \textbf{0.9393} & 0.8873\\
    $1^{\circ}$ & $0.4$ m & 0.7253 & \textbf{0.8488} & 0.7807 & 0.7193\\
    $1^{\circ}$ & $0.8$ m & 0.6240 & 0.6727 & \textbf{0.6847} & 0.6333\\
    $1^{\circ}$ & $1.2$ m & 0.5767 & 0.5585 & \textbf{0.6253} & 0.5887\\
    $2^{\circ}$ & $0.1$ m & 0.6233 & 0.6447 & \textbf{0.9467} & 0.6073\\
    $2^{\circ}$ & $0.4$ m & 0.5013 & 0.6060 & \textbf{0.7533} & 0.5180\\
    $2^{\circ}$ & $0.8$ m & 0.4733 & 0.4907 & \textbf{0.6833} & 0.4587\\
    $2^{\circ}$ & $1.2$ m & 0.4160 & 0.4053 & \textbf{0.6400} & 0.4233\\
    $3^{\circ}$ & $0.1$ m & 0.4440 & 0.4900 & \textbf{0.9107} & 0.4533\\
    $3^{\circ}$ & $0.4$ m & 0.3493 & 0.4540 & \textbf{0.7620} & 0.3813\\
    $3^{\circ}$ & $0.8$ m & 0.3367 & 0.3720 & \textbf{0.6340} & 0.3400\\
    $3^{\circ}$ & $1.2$ m & 0.2900 & 0.3087 & \textbf{0.5973} & 0.2940\\
    \noalign{\smallskip}
    \hline\noalign{\smallskip}
    \end{tabular}
    \end{table}

Recall from Fig.~\ref{fig:rates OMP DFT MUSIC} that at high SNR ($30$ dB), all methods achieve comparable detection performance. Fig.~\ref{fig:rates at cal err OMP DFT MUSIC} as well as Tables~\ref{tbl:false alarm}-\ref{tbl:hit alarm} demonstrate that as the phase and delay calibration errors increase, the detection capability of all methods deteriorates, i.e., hit rates decrease while false-alarm rates increase. However, in Fig.~\ref{fig:rates at cal err OMP DFT MUSIC}, while the proposed OMP+Binary method and MUSIC exhibit similar degradation with increasing $\sigma_{\theta}$ and $\sigma_{r}$, the classical-DFT method consistently achieves superior robustness, maintaining higher hit rates and significantly lower false-alarm rates even under severe calibration errors. The superior detection performance of the classical-DFT is also evident from Tables~\ref{tbl:false alarm}-\ref{tbl:hit alarm} (see highlighted false alarm and hit rates), where it achieves lower false alarm rates and higher hit rates than other methods for most $(\sigma_{\theta},\sigma_{r})$ pairs, except the low calibration error cases. Despite both classical-DFT and MUSIC using measurements from the same ULA, MUSIC is observed to be more sensitive to calibration errors, which is expected given its reliance on subspace-based estimation because calibration errors lead to loss of orthogonality. Interestingly, for low calibration error levels (e.g., $\sigma_{\theta}\leq 0.75^{\circ}$ and $\sigma_{r}\leq 0.3$ m in Fig.~\ref{fig:rates at cal err OMP DFT MUSIC}), OMP+Range-OMP preserves its detection performance and even outperforms OMP+Binary and MUSIC in terms of hit and false-alarm rates for $\sigma_{\theta}\leq 2^{\circ}$ and $\sigma_{r}\leq 0.8$ m. Similarly, OMP+Range-OMP shows improved detection rates than other methods in Tables~\ref{tbl:false alarm}-\ref{tbl:hit alarm} for certain low calibration error cases. This advantage can be attributed to the greedy OMP algorithm employed for range estimation, which is generally more resilient to mild dictionary mismatch and calibration errors.
    \begin{table}
    \caption{RMSEs (in m) in range estimation under varying calibration errors for various methods}
    \label{tbl:range rmse}
    \centering
    \begin{tabular}{p{0.6cm}p{0.8cm}p{1.1cm}p{1.1cm}p{1.1cm}p{1.1cm}}
    \hline\noalign{\smallskip}
    \textbf{$\sigma_{\theta}$} & \textbf{$\sigma_{r}$} &\textbf{OMP +Binary} & \textbf{OMP +Range-OMP} & \textbf{Classical -DFT} & \textbf{MUSIC}\\
    \noalign{\smallskip}
    \hline
    \noalign{\smallskip}
    $0^{\circ}$ & $0$ m & 0.5660 & \textbf{0.2077} & 0.5695 & 0.5600\\
    $0.25^{\circ}$ & $0.1$ m & 0.5754 & \textbf{0.3771} & 0.5814 &  0.5790\\
    $0.25^{\circ}$ & $0.4$ m & 0.5875 & 0.5916 & 0.6007 &  0.5886\\
    $0.25^{\circ}$ & $0.8$ m & 0.6232 & 0.6572 & 0.6149 &  0.6148\\
    $0.25^{\circ}$ & $1.2$ m & 0.6556 & 0.6777 & 0.6610 &  0.6466\\
    $1^{\circ}$ & $0.1$ m & 0.6243 & \textbf{0.3631} & 0.6379 &  0.6446\\
    $1^{\circ}$ & $0.4$ m & 0.6107 & \textbf{0.5583} & 0.6096 & 0.6059\\
    $1^{\circ}$ & $0.8$ m & 0.6021 & 0.6646 & 0.6007 & 0.6194\\
    $1^{\circ}$ & $1.2$ m & 0.6581 & 0.6792 & 0.6439 & 0.6502\\
    $2^{\circ}$ & $0.1$ m & 0.6277 & \textbf{0.3674} & 0.6470 &  0.6284\\
    $2^{\circ}$ & $0.4$ m & 0.5546 & 0.5837 & 0.5862 &  0.5838\\
    $2^{\circ}$ & $0.8$ m & 0.6157 & 0.6164 & 0.6218 &  0.6128\\
    $2^{\circ}$ & $1.2$ m & 0.6050 & 0.6492 & 0.6290 & 0.6582\\
    $3^{\circ}$ & $0.1$ m & 0.5887 & \textbf{0.3505} & 0.6296 &  0.6134\\
    $3^{\circ}$ & $0.4$ m & 0.5550 & 0.5933 & 0.5801 & 0.5299\\
    $3^{\circ}$ & $0.8$ m & 0.5747 & 0.6371 & 0.6013 & 0.5651\\
    $3^{\circ}$ & $1.2$ m & 0.6274 & 0.6632 & 0.6308 &  0.6661\\
    \noalign{\smallskip}
    \hline\noalign{\smallskip}
    \end{tabular}
    \end{table}
    \begin{table}
    \caption{RMSEs (in deg) in AOA estimation under varying calibration errors for various methods}
    \label{tbl:angle rmse}
    \centering
    \begin{tabular}{p{0.6cm}p{0.8cm}p{1.1cm}p{1.1cm}p{1.1cm}p{1.1cm}}
    \hline\noalign{\smallskip}
    \textbf{$\sigma_{\theta}$} & \textbf{$\sigma_{r}$} &\textbf{OMP +Binary} & \textbf{OMP +Range-OMP} & \textbf{Classical -DFT} & \textbf{MUSIC}\\
    \noalign{\smallskip}
    \hline
    \noalign{\smallskip}
    $0^{\circ}$ & $0$ m & 0.3304 & 0.3593 & 1.5960 & \textbf{0.1658}\\
    $0.25^{\circ}$ & $0.1$ m & 0.4015 & 0.4382 & 1.6093 & \textbf{0.2921}\\
    $0.25^{\circ}$ & $0.4$ m & 0.3980 & 0.4174 & 1.5651 & \textbf{0.2820}\\
    $0.25^{\circ}$ & $0.8$ m & 0.4012 & 0.4119 & 1.5783 & \textbf{0.2756}\\
    $0.25^{\circ}$ & $1.2$ m & 0.3759 & 0.4034 & 1.5926 & \textbf{0.2786}\\
    $1^{\circ}$ & $0.1$ m & 0.8598 & 0.8853 & 1.8634 & 0.8457\\
    $1^{\circ}$ & $0.4$ m & 0.8509 & 0.8989 & 1.8853 & 0.8438\\
    $1^{\circ}$ & $0.8$ m & 0.8723 & 0.8622 & 1.8140 & 0.8223\\
    $1^{\circ}$ & $1.2$ m & 0.8530 & 0.8078 & 1.8536 & 0.7967\\
    $2^{\circ}$ & $0.1$ m & 1.0327 & 1.0155 & 2.4414 & 0.9982\\
    $2^{\circ}$ & $0.4$ m & 0.9751 & 1.0111 & 2.4346 & 1.0129\\
    $2^{\circ}$ & $0.8$ m & 0.9928 & 0.9815 & 2.3415 & 0.9787\\
    $2^{\circ}$ & $1.2$ m & 1.0396 & 1.0134 & 2.3252 & 1.0565\\
    $3^{\circ}$ & $0.1$ m & 1.0365 & 1.0622 & 2.8653 & 1.0239\\
    $3^{\circ}$ & $0.4$ m & 1.0485 & 1.0526 & 2.8231 & 0.9877\\
    $3^{\circ}$ & $0.8$ m & 0.9760 & 1.0035 & 2.8243 & 1.0476\\
    $3^{\circ}$ & $1.2$ m & 0.9755 & 1.0173 & 2.9911 & 1.0247\\
    \noalign{\smallskip}
    \hline\noalign{\smallskip}
    \end{tabular}
    \end{table}

In the case of estimation accuracy, Fig.~\ref{fig:rmse at cal err OMP DFT MUSIC} shows that all methods experience similar increases in range and AOA errors as $\sigma_{\theta}$ and $\sigma_{r}$ increases. Nevertheless, Range-OMP achieves significantly lower range RMSE at small calibration errors owing to its finer range grid, while classical-DFT exhibits the highest angular RMSE due to its coarse angular resolution (see Section~\ref{subsec:overall numerical}.1). Table~\ref{tbl:range rmse} shows that OMP+Range-OMP achieves the lowest range errors for many small $\sigma_{r}$ values (see highlighted RMSEs), whereas other methods yield similar errors due to the range resolution being limited by the DFT-defined bins. Notably, Table~\ref{tbl:angle rmse} indicates that MUSIC provides significantly lower AOA errors for $\sigma_{\theta}\leq 0.25^{\circ}$ (see highlighted RMSEs), which aligns with the results in Fig.~\ref{fig:rmse OMP DFT MUSIC} of the main paper. Overall, these results suggest that the proposed random-SLA-based methods exhibit more sensitivity to joint phase and delay calibration errors than classical-DFT with ULA, but comparable to MUSIC. Importantly, Range-OMP demonstrates an ability to mitigate small calibration errors effectively through its robust CS-based processing.

\bibliographystyle{IEEEtran}
\bibliography{references}
\end{document}